\theoremstyle{plain}
\newtheorem{theorem}{Theorem}[section]
\newtheorem{theorema}{Theorem}
\newtheorem{corollary}[theorem]{Corollary}
\newtheorem{proposition}[theorem]{Proposition}
\newtheorem{lemma}[theorem]{Lemma}
\theoremstyle{definition}
\newtheorem{definition}[theorem]{Definition}
\theoremstyle{remark}
\newtheorem{remark}[theorem]{Remark}
\numberwithin{equation}{section}
\numberwithin{figure}{section}
\numberwithin{table}{section}
\newcommand{\Ss}{\mathcal{S}}
\newcommand{\Tt}{\mathscr{T}}
\newcommand{\R}{\mathbb{R}}
\newcommand{\N}{\mathbb{N}}
\newcommand{\C}{\mathbb{C}} 
\newcommand{\Q}{\mathbb{Q}}
\newcommand{\Z}{\mathbb{Z}}
\newcommand{\T}{\mathbb{T}}
\newcommand{\G}{\mathcal{G}}
\newcommand{ \ii}{\,\mathrm{i}\,}
\newcommand{\LL}{\mathcal{L}}
                \newcommand{\ie}{\textsl{i.\,e.\,}}
\DeclareMathOperator{\Tr}{Tr}
\newcommand{\X}{\mathcal{X}}
\newcommand{\A}{\mathfrak{A}}
\begin{document}

\title[Invariant measures on the transversal hull]{Invariant measures on the transversal hull of cone semigroups and some applications}

\author[D. Polo]{Danilo Polo Ojito}
\address{Department of Physics and Department of Mathematical Sciences, Yeshiva University 
	\\New York, NY 10016, USA \\
	\href{mailto:danilo.poloojito@yu.edu}{danilo.poloojito@yu.edu}}

\author[E. Prodan]{Emil Prodan}

\address{Department of Physics and
 Department of Mathematical Sciences 
\\Yeshiva University, New York, NY 10016, USA \href{mailto:prodan@yu.edu}{prodan@yu.edu}
}

\author[T. Stoiber]{Tom Stoiber}

\address{Department of Physics and Department of Mathematical Sciences, Yeshiva University 
	\\New York, NY 10016, USA \\
	\href{mailto:tom.stoiber@yu.edu}{tom.stoiber@yu.edu}}

\vspace{2mm}

\date{\today}

\maketitle

\begin{abstract}
 Let $\LL_{\bf v}\subset \Z^D$ be a suitable cone semigroup and $\A_{\bf v}$ its reduced semigroup $C^*$-algebra.  In this paper, we compute the $\LL_{\bf v}$-invariant measures in the transversal hull of the semigroup $\LL_{\bf v}$ that exhibit regularity in the boundaries of $\LL_{\bf v}.$  These measures enable the construction of a trace per-unit hypersurface for observables in $\A_{\bf v}$ supported near the boundaries of $\LL_{\bf v}$, leading to the construction of appropriate Chern cocycles in the "boundary" ideals of $\A_{\bf v}$.  Our approach applies to both finitely and non-finitely generated cone semigroups. Applications for the bulk-defect correspondence of lattice models of topological insulators are also provided

\medskip
\noindent
{\bf MSC 2010}:
Primary: 81R60;
Secondary: 	46L80, 19K56.\\
\noindent
{\bf Keywords}:
{\it Cone semigroups, semigroup $C^*$-algebras, invariant measures, bulk-edge correspondence.}

\end{abstract}
\section{Introduction}
Let ${\bf v}:=\{ v_1,\dots, v_d\}$ be a set of normalized linearly independent vectors in $\R^D$ with $D\geq d$, and consider $\A_{\bf v}=C^*_r(\LL_{\bf v})$ as the reduced semigroup $C^*$-algebra associated with the \emph{cone} subsemigroup $\LL_{\bf v}$ of $\Z^D$, with the latter defined as 
\begin{equation}\label{eq: cone semigroup}
   \LL_{\bf v}\;:=\;\bigcap_{i=1}^d\LL_{v_i} 
\end{equation}
where $\LL_{v_i}:=\{ n\in \Z^D\;|\; v_i\cdot n \geq 0\}.$ Note that $0\in \mathcal{L}_{\bf v}$, hence the semigroup has a unit. This semigroup is not, in general, finitely generated, since the vector components may be linearly independent over $\Q.$ Consequently, describing this $C^*$-algebra and computing its $K$-theory has posed significant challenges in recent years. Nevertheless, the case $D=2$ with $d=1,2$ (corresponding to Toeplitz and quarter-plane algebras) is now well understood, with a rich body of literature detailing its structure, classification, and
$K$-theory \cite{Dou,JI, JK, RX, Jia, Par1, Par2}. In solid state physics, the $C^\ast$-algebra 
$\A_{\bf v}$ relates to the dynamics of electrons in a crystal that has been etched in a multifaceted fashion. Specifically, all generators of such dynamics derive from representations of self-adjoint elements from $\A_{\bf v}$.

\medskip

It is known that $\A_{\bf v}$ agrees with the reduced $C^*$-algebra of a partial transformation groupoid \cite{CEL, Li, LAC, RS} (see also \cite[Ch.~5]{CELYBook}), \ie there is a partial transformation groupoid $\Xi_{\bf v}\!\Join\!\Z^D$ with $\Xi_{\bf v}$ a totally disconnected compact Hausdorff space endowed with a partial $\Z^D$-action such that
$$\A_{\bf v}\;\simeq \; C^*_r(\Xi_{\bf v}\!\Join\!\Z^D).$$
A convenient characterization of the space $\Xi_{\bf v}$ was exhibited in \cite{RS} as a Wiener-Hopf compactification, namely the so-called \emph{transversal hull} $\Xi_{\bf v}= \{\LL_{\bf v}-n: \; n\in \LL_{\bf v}\}^{\mathrm{cl}}$ where the closure is in the Fell topology of $\mathscr{C}(\Z^D)$, the set of closed subsets of $\Z^D$. Our results are based on a computation of $\Xi_{\bf v}$ when ${\bf v}$ satisfies the discrete or completely irrational (RCI) property; see Definition \ref{def: rational dependence}. Only if ${\bf v}$ is rational then $\LL_{\bf v}$ is finitely generated and the space $\Xi_{\bf v}$ consists of a countable set of points. Otherwise, this space will generally be uncountable. In particular, we shall prove that it admits a filtration
\begin{equation}\label{eq: filtration}
    \{ \Z^D\}\;=\;\Xi_{0}\subset \Xi_1\subset \cdots\subset \Xi_{d-1}\subset \Xi_{d}=\Xi_{\bf v}
\end{equation}
by closed subsets invariant under the natural semigroup action of $\LL_{\bf v}$ on $\Xi_{\bf v}$. The significance of this filtration lies in the fact that each  $\Xi_r\setminus \Xi_{r-1}$ for $r>0$ encodes the information of all boundaries of $\LL_{\bf v}$ with codimension $r$. More precisely,  the support of the induced multiplication operator in $\ell^2(\LL_{\bf v})$ by any $f\in C_c(\Xi_{r}\setminus \Xi_{r-1})$ is concentrated close to the boundaries of codimension $r$. Those results can be seen as the discrete analogue of similar computations for $C^*$-algebras of Wiener-Hopf operators on simplicial cones \cite{Alldridge,MuhlyRenault}. There, the transversal hull is itself homeomorphic to a cone which decomposes as a CW-complex, in contrast the topological spaces here are all completely disconnected.

\medskip

The filtration \eqref{eq: filtration} of $\Xi_{\bf v}$ induces a cofiltration of $\A_{\bf v}$
\begin{equation}
    \A_{\bf v}=\A_{d}\stackrel{\psi_d}\to \A_{d-1}\stackrel{\psi_{d-1}}\to \cdots \to \A_1\stackrel{\psi_1}\to \A_0\simeq C(\T^D)
\end{equation}
where $\psi_r$ are surjective $*$-homomorphisms and $\A_{r}:=C^*_r(\LL_{\bf v}|_{\Xi_r})$.  The $r$-codimensional boundary algebra is defined as $ \mathfrak{I}_r={\rm Ker}(\psi_r)\simeq C^*_r(\LL_{\bf v}|_{\Xi_r\setminus\Xi_{r-1}})$. By construction, we have for every $r>0$ an exact sequence
\begin{equation}\label{eq: sequence 1}
0\to \mathfrak{I}_r\to \A_r\to \A_{r-1}\to 0.
\end{equation}
In the analysis of topological insulators, one considers Hamiltonians on cone-like regions like $\LL_{\bf v}$, \ie, self-adjoint operators on $\ell^2(\LL_{\bf v})$ which have topological obstructions to the opening of spectral gaps: For $h\in \A_d$ one can consider the smallest $r$ such that the image $h_{r-1}$ of $h$ in $\A_{r-1}$ has a spectral gap and then associate to it a class $[h_{r-1}]_i\in K_i(\A_{r-1})$. There is then a natural connecting map $\partial_r\colon K_i(\A_{r-1})\to K_{1-i}(\mathfrak{I}_r)$ mapping invariants of Hamiltonians that are spectrally gapped on $\Xi_{r-1}$ to obstructions to spectral gap-opening for Hamiltonians on $\Xi_r$ (see \cite{DET1} for more details). 

Instead of abstract K-group elements, one generally prefers to indicate these topological invariants in terms of numerical invariants obtained by pairing the K-groups with cyclic cocycles. For rational ${\bf v}$, the $K_i(\mathfrak{I}_r)$ group elements can be indicated uniquely using a finite number of explicit cocycles, the so-called the \emph{Chern cocycles}. One of the main difficulties in generalizing those to rationally independent $\bf v$ lies in the construction of suitable densely defined lower semi-continuous traces on the ideals $\mathfrak{I}_r$. We use the strategy employed in \cite{NGP} to construct a trace on $\mathfrak{I}_r$ by finding an invariant Radon measure supported on the corresponding subset $\Xi_{r}\setminus \Xi_{r-1}$ of the unit space of the groupoid. Namely, we shall consider the vector space $\mathfrak{M}(\Xi_{\bf v})$ of all $\LL_{\bf v}$-invariant Borel measures on $\Xi_{\bf v}$ whose restriction to some $\Xi_r\setminus \Xi_{r-1}$ is a Radon measure. The explicit computation of this vector space, under some assumptions on ${\bf v}$, is the first main result of this work:
 
 \begin{theorema}\label{teo measures}
 Let ${\bf v}:=\{ v_1,\dots, v_d\}$ be a set of normalized linearly independent vectors in $\R^D$ with the $RCI$ property (see Definition \ref{def: rational dependence}). Then the vector space $\mathfrak{M}(\Xi_{\bf v})$  has dimension $2^d$ and, moreover,  there is a unique ergodic probability measure on $\Xi_{\bf v}$ (relative to the partial $\Z^D$-action).
\end{theorema}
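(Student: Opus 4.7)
The plan is to decompose each stratum $\Xi_r\setminus\Xi_{r-1}$ of the filtration \eqref{eq: filtration} into $\binom{d}{r}$ pieces indexed by subsets $S\subseteq\{1,\dots,d\}$ of size $r$. Concretely, I expect that a sequence $\LL_{\bf v}-n_k\to F$ converges into a subset $\Xi_S$ precisely when $v_i\cdot n_k$ stays bounded for $i\in S$ while $v_j\cdot n_k\to +\infty$ for $j\notin S$; thus $S$ labels the subset of boundary hyperplanes that survive in the limit, while the complementary hyperplanes have receded to infinity. Each $\Xi_S$ is closed under the natural $\LL_{\bf v}$-action, and the total count $\sum_{r=0}^d\binom{d}{r}=2^d$ already matches the predicted dimension.

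Next I would construct, for each $S$, an explicit identification of $\Xi_S$ with a compact abelian group on which $\Z^D$ acts by translation. A point of $\Xi_S$ is to be encoded by its transverse offsets $(v_i\cdot F)_{i\in S}$, taken modulo the additive subgroup of $\R^{|S|}$ generated by $\{(v_i\cdot n)_{i\in S}: n\in\Z^D\}$. The RCI hypothesis of Definition~\ref{def: rational dependence} is exactly what forces this subgroup to be dense in an appropriate torus $\T^{|S|}$ after factoring out rational directions, so the induced $\Z^D$-action becomes a Kronecker flow. Classical unique ergodicity of Kronecker flows then delivers a unique, up to positive scalar, $\LL_{\bf v}$-invariant Radon measure $\mu_S$ on $\Xi_S$. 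For $0<|S|<d$ the measure $\mu_S$ has infinite total mass because the $d-|S|$ receding directions contribute a Lebesgue factor, while both $\mu_{\{1,\ldots,d\}}$ and $\mu_\emptyset=\delta_{\Z^D}$ are finite.

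For the dimension count, any $\mu\in\mathfrak{M}(\Xi_{\bf v})$ restricts to an invariant Radon measure on each $\Xi_S$, which by the uniqueness above must be a scalar multiple of $\mu_S$; disjoint supports then give linear independence, so $\{\mu_S\}_{S\subseteq\{1,\ldots,d\}}$ is a basis and $\dim\mathfrak{M}(\Xi_{\bf v})=2^d$. For the unique ergodic probability measure claim I would then inspect which normalized $\mu_S$ is ergodic for the partial $\Z^D$-action: the Kronecker flow on the top stratum $\Xi_{\{1,\ldots,d\}}$ is minimal and uniquely ergodic on a non-trivial orbit, so $\mu_{\{1,\ldots,d\}}$ is the sought-after measure.

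The main obstacle will be the second step: producing the explicit Kronecker model of each $\Xi_S$ and verifying that the RCI condition is exactly the hypothesis needed for unique ergodicity. This requires a careful analysis of how limits in the Fell topology of $\mathscr{C}(\Z^D)$ encode the asymptotic transverse offsets of the receding boundary hyperplanes, and how the partial $\Z^D$-action on $\Xi_{\bf v}$ lifts compatibly to the proposed torus quotient.
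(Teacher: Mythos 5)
Your first step (splitting each $\Xi_r\setminus\Xi_{r-1}$ into $\binom{d}{r}$ invariant pieces labelled by which boundary hyperplanes survive in the limit) and your overall strategy for the dimension count (uniqueness of the invariant Radon measure on each piece, then linear independence from disjoint supports) match the paper. But the key technical step is wrong: the strata $\mathcal{C}_{{\bf v}_I}$ are \emph{not} compact abelian groups carrying Kronecker flows. The transverse offsets $(v_i\cdot F)_{i\in I}$ of a point in $\mathcal{C}_{{\bf v}_I}$ are not to be reduced modulo the subgroup generated by $\{(v_i\cdot n)_{i\in I}:n\in\Z^D\}$ --- under RCI that subgroup is dense in $\R^{|I|}$, so the quotient is degenerate, not a torus $\T^{|I|}$. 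What actually happens (Lemma \ref{coro: surjective}) is that the offsets give a \emph{proper}, continuous, equivariant surjection $\Gamma_I\colon\mathcal{C}_{{\bf v}_I}\to\X_{{\bf v}_I}\subseteq\R^{|I|}_+$ onto a non-compact half-space (or an infinite discrete semigroup in the rational case), and uniqueness of the invariant measure comes from uniqueness of the translation-invariant (Lebesgue/counting) measure on the target, not from unique ergodicity of a rotation. Moreover $\Gamma_I$ is not injective: over each $x\in\mathtt{X}^J_{{\bf v}_I}$ with $J\neq\emptyset$ there are several preimages $\LL^J_{{\bf v}_I,x}$, and one must argue separately (as the paper does) that any invariant measure gives these ``doubled'' fibers measure zero because they push forward into a Lebesgue-null set; your offset encoding does not see this multiplicity at all.

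The second genuine error is the identification of the ergodic probability measure. Since $\Gamma_I$ is proper and $\X_{{\bf v}_I}$ is non-compact for every $I\neq\emptyset$, \emph{every} basis measure $\mu_I$ with $I\neq\emptyset$ --- including $\mu_{\{1,\dots,d\}}$ on the top stratum --- has infinite total mass (it is the pullback of Lebesgue measure on $\R^{|I|}_+$, or of the counting measure on an infinite discrete set). Your claim that $\mu_{\{1,\dots,d\}}$ is finite, and hence normalizable to the unique ergodic probability measure, is therefore false; also note that the receding directions $j\notin I$ contribute nothing to the stratum (they have disappeared from the limit set), so it is the \emph{surviving} directions that produce the infinite Lebesgue factor, the opposite of what you assert. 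The only finite measure in the basis is $\mu_\emptyset=\delta_{\{\Z^D\}}$, the Dirac mass at the unique fixed point $\Xi_0=\{\Z^D\}$, and that is the unique ergodic probability measure.
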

\noindent
An explicit base $\{\mu_I\}_{I\in \mathtt{P}(\{1,\dots,d\})}$ for this space is provided in Remark \ref{rem: basis of measure}, where $ \mathtt{P}(\{1,\dots,d\})$ is the power set of $\{1,\dots,d\}$.  Each basis element $\mu_I$  for $I\neq \emptyset$ has support in $\Xi_r\setminus\Xi_{r-1}$ with $|I|=r$, while  $\mu_\emptyset$ is the unique ergodic probability measure and corresponds with the Dirac measure concentrated on the unique invariant point $\Xi_0=\{\Z^D\}$ of $\Xi_{\bf v}$. 

\medskip
Our second main result relies on this construction for the definition of Chern cocycles and on the proof of the non-triviality of the induced numerical invariant:

\begin{theorema}\label{Theorem 2} Under the assumptions of Theorem \ref{teo measures}, for each $|I|=r$ the linear functional
\[
\Tt_I(f)\;=\;\int_{\Xi_r\setminus \Xi_{r-1}}E(f)(x){\rm d}\mu_I(x),\qquad f\in \mathfrak{I}_r
\]
supplies a densely defined, faithful, and lower semi-continuous trace on $\mathfrak{I}_r$. Here $E\colon \mathfrak{I}_r\to C_0(\Xi_r\setminus \Xi_{r-1})$ is the standard conditional expectation. For a suitable $(m+1)$-tuple of elements $f_0,f_1,\dots,f_m\in \mathfrak{I}_r$ and a set of vectors ${\bf w}=\{ w_1,w_2,\dots,w_m\}$ in $\R^D$, the $(m+1)$-linear functional

\[
\begin{split}
     {\rm Ch}_{I,{\bf w}}(f_0,f_1,\dots, f_m)\;:=\;\sum_{\rho\in S_m}(-1)^\rho\Tt_I\big(f_0\nabla_{w_{\rho(1)}}f_1\cdots \nabla_{w_{\rho(m)}}f_m\big)
\end{split}
\]
defines a $m$-cocycle on $\mathfrak{I}_r$. Here the directional derivatives $\nabla$ are defined according to \eqref{eq: deriva}. Furthermore, the canonical pairing $\langle [u]_i,[{\rm Ch}_{I,{\bf w}}]\rangle$ with the $K$-groups of $\mathfrak{I}_r$ is a non-trivial numerical invariant. 
\end{theorema}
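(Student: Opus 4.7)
The plan is to establish the three claims of the theorem in turn.

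\textbf{Trace property.} Since $\mathfrak{I}_r\simeq C^*_r(\LL_{\bf v}|_{\Xi_r\setminus\Xi_{r-1}})$ is the reduced $C^*$-algebra of an étale partial transformation groupoid over a totally disconnected unit space, there is a canonical faithful conditional expectation $E$ onto $C_0(\Xi_r\setminus\Xi_{r-1})$ given by restriction to units. Theorem~\ref{teo measures} already supplies $\mu_I$ as an $\LL_{\bf v}$-invariant Radon measure on $\Xi_r\setminus\Xi_{r-1}$; from it, the standard machinery of étale groupoids yields that $\Tt_I=\mu_I\circ E$ is a densely defined (on the dense $*$-subalgebra of compactly supported elements), lower semi-continuous positive trace on $\mathfrak{I}_r$. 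Faithfulness reduces to verifying that $\mu_I$ has full topological support on $\Xi_r\setminus\Xi_{r-1}$, which I expect to read off directly from the explicit construction in Remark~\ref{rem: basis of measure}.

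\textbf{Cocycle property.} The derivations $\nabla_w$ are the infinitesimal generators of the dual $\T^D$-action on $\A_{\bf v}$ restricted to $\mathfrak{I}_r$: they are pairwise commuting closed $*$-derivations on a dense smooth subalgebra. Because $\mu_I$ is invariant under this torus action (it is constructed through $E$ from an invariant measure), the trace is itself $\T^D$-invariant, hence $\Tt_I\circ\nabla_w=0$. With these three ingredients---commuting derivations, a trace, and vanishing of the trace on derivatives---the standard Connes-type construction (as implemented in \cite{NGP} in an analogous setting) shows that the antisymmetrized expression is a cyclic $m$-cocycle. The verification is a formal manipulation of the Hochschild coboundary using Leibniz, tracility for cyclic permutations, and cancellation via $\Tt_I\circ\nabla_w=0$; no idea beyond careful bookkeeping is required.

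\textbf{Non-triviality of the pairing.} This is the principal obstacle. The strategy is to exhibit explicitly a class in $K_*(\mathfrak{I}_r)$ on which the pairing evaluates to a nonzero integer. The natural route is via the connecting map $\partial_r\colon K_i(\A_{r-1})\to K_{1-i}(\mathfrak{I}_r)$ from \eqref{eq: sequence 1}, together with a bulk-boundary-type duality: I would show that ${\rm Ch}_{I,{\bf w}}\circ\partial_r$ coincides, up to sign and a rearrangement of the vectors $w_j$, with a Chern cocycle on $\A_{r-1}$ paired against a class in $K_*(\A_{r-1})$. For the latter, picking a model Hamiltonian whose Chern invariant in $\A_{r-1}$ is known to be a nonzero integer (constructed, for instance, by lifting a classical magnetic or Dirac-type model along the quotient $\psi_{r-1}$) would then produce the desired non-zero value. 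The technical subtlety is that in the RCI setting $\A_{r-1}$ is not a rotation algebra but a more intricate groupoid $C^*$-algebra, so the index-theoretic reduction (of Pimsner--Voiculescu or suspension type) must be carried out inductively along the filtration \eqref{eq: filtration}; precisely here the explicit factorization supplied by $\mu_I$ in Theorem~\ref{teo measures} is indispensable, since it provides the compatibility between the boundary trace $\Tt_I$ and a bulk trace that controls the duality.
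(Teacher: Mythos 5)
Your first two parts are fine and essentially match the paper: Proposition~\ref{prop: traces} is exactly the statement that $\Tt_I=\mu_I\circ E$ is a densely defined, faithful, lower semi-continuous trace (faithfulness coming from the full support of $\mu_I$ and the faithfulness of $E$), and the cyclic-cocycle property is obtained, as you say, by the standard Connes argument from the trace property, the Leibniz rule, and the invariance of $\Tt_I$ under the derivations $\nabla_w$.

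The gap is in the non-triviality argument, which is where the actual content of Theorem~\ref{Theorem 2} lies. You propose to route everything through the connecting map $\partial_r\colon K_i(\A_{r-1})\to K_{1-i}(\mathfrak{I}_r)$ and a duality ${\rm Ch}_{I,{\bf w}}\circ\partial_r\sim{\rm Ch}$ on $\A_{r-1}$, evaluated on a model Hamiltonian. This inverts the logical order of the paper: for $r\geq 2$ the quotient $\A_{r-1}$ is itself a complicated groupoid $C^*$-algebra whose $K$-theory and cyclic cocycles are not under control (the Chern cocycles of the paper live only on the ideals $\mathfrak{I}_{r}$, not on the intermediate quotients), so the duality you invoke is at least as hard as the statement to be proved, and it is only established in the paper \emph{a posteriori} and only in the codimension-one, $D=2$ case (Proposition~\ref{prop: duality}), as a consequence of the non-triviality result rather than an ingredient of it. Moreover, a duality argument would only detect classes in the image of $\partial_r$, and you give no construction of a bulk class with provably nonzero invariant. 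The mechanism the paper actually uses is different and avoids the filtration entirely: one introduces the \emph{smooth} cone algebra $\mathcal{A}_{\bf v}=C_0(\Omega_{\bf v})\rtimes_\beta\Z^D$, identifies its boundary ideal $\mathcal{I}_I\simeq C(\T^D)\rtimes\R^{I}$ where the non-vanishing of the smooth Chern pairing is known from the Connes--Thom isomorphism and \cite{DET2} (Proposition~\ref{prop: num}), and then transfers classes into $K_*(\mathfrak{I}_I)$ via the $*$-homomorphism $\iota'\colon\mathcal{I}_I\to\tilde{\mathfrak{I}}_I$ induced functorially by the equivariant map $\Gamma$ of Lemma~\ref{coro: surjective} (which intertwines the traces precisely because $\Gamma_*\tilde{\mu}_I$ is the Lebesgue, resp.\ counting, measure), followed by the Morita equivalence $K_*(\tilde{\mathfrak{I}}_I)\simeq K_*(\mathfrak{I}_I)$ of Theorem~\ref{Teo morita}. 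Your closing remark about the measure $\mu_I$ providing a trace compatibility points in the right direction, but without the smooth model and the map $\Gamma$ there is no candidate $K$-theory class and no way to evaluate the pairing, so the proof as proposed does not close.
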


It is important to point out that the non-triviality of Chern cocyle given in Theorem \ref{Theorem 2}, provides partial and, in some cases full, information of the connecting map $\partial_r \colon  K_i(\A_{r-1})\to K_{1-i}(\mathfrak{I}_r)$ associated with the sequence \eqref{eq: sequence 1}. This is invaluable for the topological quantization of edge currents \cite{Bel1, Bel, NGP, Dani, PRO,Tom} and also for higher-order topological phases \cite{BenalcazarScience2017,DET1, DET2,SchindlerSciAdv2018}. We shall present a discussion of it at the end of Section \ref{Sec: chern} for the bulk-edge correspondence with irrational interfaces.

In Section~\ref{sec: trace}, we finally show that the traces we construct can be interpreted in the representation on $\ell^2(\LL_{\bf v})$ as the (Hilbert space) trace averaged with respect to the directions orthogonal to the boundaries. This allows one to relate the traces to physically relevant quantities like boundary currents and densities, which is needed to interpret the Chern cocycles as transport coefficients. We, however, leave such applications to the future.

\medskip

\noindent
{\bf Acknowledgements:} The authors would like to cordially thank G. De Nittis and J. Gomez for several stimulating discussions. This work was supported by the U.S. National Science Foundation through the grant CMMI-2131760, and by U.S. Army Research Office through contract W911NF-23-1-0127, and the German Research Foundation (DFG) Project-ID 521291358.

\section{Semigroup $C^*$-algebras}
In this section, we present standard definitions and results concerning semigroup $C^*$-algebras. Our exposition follows primarily \cite{Li,LAC,CELYBook}. We, however, emphasize a particular point of view in order to make the connection with the Bellissard-Kellendonk formalism \cite{Bel1, Kel1} explicit (see Remark~\ref{Re:BK}).

\smallskip

Let $(\mathcal{L},+)$ be an additive subsemigroup of  $\Z^D$. Its left-regular representation is carried by $\ell^2(\LL)$ and consists of the family of partial isometries $\LL\ni l\mapsto V_l$ which act via
\begin{equation}
    (V_l\psi)(x)\;:=\;\psi(x+l)\,,\qquad \psi\in \ell^2(\LL).
\end{equation}
The reduced semigroup $C^*$-algebra $C_r^*(\LL)$ of  $\LL$ is the $C^*$-algebra inside of $\mathscr{B}(\ell^2(\LL))$ generated by those partial isometries, \ie $C_r^*(\LL)\;:=\;C^*\big\{ V_l\,|\,l\in \LL\big\}$.

\smallskip 

A suitable ambient space for $\LL$ and all its possible configurations can be found inside the space  $\mathscr{C}(\Z^D)$ of closed subsets of $\Z^D$ endowed with the \emph{Fell topology}. 
The latter is a compact metric space \cite{Fell} where the metric is defined as follows: given $\LL\in \mathscr{C}(\Z^D)$ set
\begin{equation*}
    \LL(r)\;:=\;\LL\cap B(0,r).
\end{equation*}
Here, $B(0,r)\subset \R^D$ is the open ball centered in  $0$ with radius $r>0$.  Since $\Z^D$ is discrete, the Fell topology coincides with the Vietoris topology and is generated by the metric
\begin{equation}\label{eq: fell metric}
      {\rm D}(\LL,\LL')\;:=\;\inf\big\{ (r+1)^{-1}\;|\; \LL(r)=\LL'(r)\big\}.
\end{equation}
In particular, a sequence of sets in $(\Ss_n)_{n\in \N}$ in $\mathscr{C}(\Z^D)$ converges if and only if each element of $\Z^D$ is eventually contained either in each or none of the sets $\Ss_n$.

\begin{lemma}
There is a homeomorphism $\mathscr{C}(\Z^D)\simeq \{0,1\}^{\Z^D}$ with the product topology, i.e. it is in particular a totally disconnected space.
\end{lemma}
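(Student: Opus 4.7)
The plan is to produce the homeomorphism via the indicator function map
\[
\Phi\colon \mathscr{C}(\Z^D)\to \{0,1\}^{\Z^D},\qquad \Phi(\LL)\;=\;\chi_\LL,
\]
and to check that the topology generated by the metric $\mathrm{D}$ in \eqref{eq: fell metric} coincides, through $\Phi$, with the product topology. First I would note that since $\Z^D$ carries the discrete topology every subset is automatically closed, so $\Phi$ is a set-theoretic bijection. The inverse is $\chi \mapsto \{n\in \Z^D : \chi(n)=1\}$.

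Next I would verify that $\Phi$ is continuous. The product topology on $\{0,1\}^{\Z^D}$ is generated by the subbasis of sets of the form $U_{n,\epsilon}=\{\chi : \chi(n)=\epsilon\}$ for $n\in\Z^D$ and $\epsilon\in\{0,1\}$. Pick $n\in \Z^D$ and choose $r>|n|$; then whenever $\mathrm{D}(\LL,\LL')<(r+1)^{-1}$ one has $\LL(r)=\LL'(r)$, hence $\chi_\LL(n)=\chi_{\LL'}(n)$, which shows that $\Phi^{-1}(U_{n,\epsilon})$ is open in the Fell metric. Conversely, for continuity of $\Phi^{-1}$, it suffices to show that every open ball of the Fell metric around $\LL$ is the preimage under $\Phi$ of an open set in the product topology. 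Fix $r>0$; the set $\{\LL'\in\mathscr{C}(\Z^D) : \LL'(r)=\LL(r)\}$ is the preimage under $\Phi$ of the cylinder
\[
\bigcap_{n\in B(0,r)\cap \Z^D}\bigl\{\chi : \chi(n)=\chi_\LL(n)\bigr\},
\]
which is open in the product topology because $B(0,r)\cap \Z^D$ is finite. Hence $\Phi$ is a homeomorphism.

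Finally I would recall that $\{0,1\}$ equipped with the discrete topology is trivially totally disconnected, and an arbitrary product of totally disconnected spaces is totally disconnected (the connected component of any point is detected coordinate-wise by continuous projections). Transporting this through $\Phi$ yields that $\mathscr{C}(\Z^D)$ is totally disconnected. No real obstacle arises here; the only point requiring minor care is matching the neighborhood bases in the two topologies, and this is handled by the elementary observation that the finite sets $B(0,r)\cap\Z^D$ form a cofinal family among the finite subsets of $\Z^D$ used to specify cylinder sets.
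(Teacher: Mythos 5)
Your proof is correct and rests on the same bijection $\LL\mapsto\chi_\LL$ as the paper's; the only divergence is in how bicontinuity is established. The paper describes the Fell topology abstractly as the hit-or-miss topology, checks that the preimages of the cylinder subbasis are basic open sets (giving continuity of the bijection in one direction), and then gets the inverse for free from the fact that a continuous bijection between compact Hausdorff spaces is a homeomorphism. You instead work directly with the metric \eqref{eq: fell metric} and match neighborhood bases in both directions: the sets $\{\LL':\LL'(r)=\LL(r)\}$ correspond exactly to the finite cylinders over $B(0,r)\cap\Z^D$, and these are cofinal among all cylinders. Your route avoids invoking the compact-Hausdorff trick (and so does not presuppose compactness of $\mathscr{C}(\Z^D)$), at the cost of the slightly more careful two-sided bookkeeping with the metric; both arguments are sound and of comparable length. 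One cosmetic point: what you exhibit as the preimage of a cylinder is the set $\{\LL':\LL'(r)=\LL(r)\}$ rather than literally an open ball, but since these sets are sandwiched between open balls they form a neighborhood basis, so the argument goes through as you indicate.
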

\begin{proof}
Clearly, $\Psi: M\in \mathscr{C}(\Z^D)\mapsto \chi_M$ is a bijection. Let us recall that a basis for the product topology on $\{0,1\}^{\Z^D}$ is given by cylinder sets of the form $$Z_{n,a}=\big\{f\in \{0,1\}^{\Z^D}: f(n)=a\big\},\qquad n\in\Z^D,\;a\in\{0,1\}$$
The pre-images are $\Psi^{-1}(Z_{n,1})=\{M \in \mathscr{C}(\Z^D): n \in M\}$  and $\Psi^{-1}(Z_{n,0})=\{M \in \mathscr{C}(\Z^D): n \notin M\}$.
The Fell topology is the hit-or-miss topology generated by the basic open sets $\{M \in \mathscr{C}(\Z^D): M\cap U\neq\emptyset, \; M\cap K= \emptyset\}$, where $U\subset \Z^D$ runs over all open (i.e. arbitrary since $\Z^D$ is discrete) and $K$ over all compact (i.e. finite) sets. It is easy to see that $\Psi^{-1}(Z_{n,a})$ is a basic open set and thus $\Psi$ is a continuous bijection. Since $\mathscr{C}(\Z^D)$ and $\{0,1\}^{\Z^D}$ both are compact Hausdorff spaces this means $\Psi$ is a homeomorphism.
\end{proof}

\begin{definition}
The \emph{transversal hull} of $\LL\in \mathscr{C}(\Z^D)$ is the compact Hausdorff space given by
\begin{equation}\label{eq: hull}
    \Xi_\LL\;:=\;\overline{\mathscr{O}(\LL)}\;\subset \; \mathscr{C}(\Z^D)
\end{equation}
    where the \emph{orbit space} is  $\mathscr{O}(\LL):=\{ \LL-n\,|\, n\in\LL\}$ and the closure is taken with respect to the Fell topology. 
    Similarly, we can also consider the $\Z^D$-hull of $\LL$ as
    \begin{equation}
        \tilde{\Xi}_\LL\;:=\;\overline{\mathscr{O}_{\Z^D}(\LL)}\setminus \emptyset\;=\;\overline{\{ \LL-n\,|\, n\in\Z^D\}}\setminus\emptyset
    \end{equation}
    This is a locally compact space for which the relation $\Xi_\LL\subset \tilde{\Xi}_\LL$ is fulfilled. 
\end{definition}

\medskip


There is a natural partial action  $\alpha$ of $\Z^D$, induced by $\LL$, on $\Xi_\LL$  provided by  the collection of open sets $\{ U_n\}_{n\in \Z^D}$ defined as
$$U_n\;:=\;\big\{\mathcal{S}\in \Xi_\LL\;|\; n\in \mathcal{S}\big\}\;\subset \;\Xi_\LL $$
with homeomorphism $\alpha_n\colon U_n\to U_{-n}$ given by $\alpha_n(\mathcal{S})=\mathcal{S}-n$ (see \cite[Definition 2.1]{Exel}). This homeomorphism is well-defined since $0\in \Ss$ for every $\Ss\in \Xi_\LL$. As a result, one gets the partial transformation topological groupoid
\begin{align}\label{eq: partial transformation groupoid}
   \G_\LL\;=\; \Xi_\LL\!\Join\! \Z^D\;:=\;\big\{ (\mathcal{S}, x)\in \Xi_\LL\times \Z^D\;|\; \mathcal{S}\in U_x\big\}
\end{align}
with topology inherited from $\Xi_\LL\times \Z^D$. It follows that the unit space $\G_\LL$  agrees with $\Xi_\LL\simeq \Xi_\LL\times \{0\}$ and  moreover comes equipped with 
\begin{enumerate}[(i)]
    \item inversion map $(\mathcal{S},x)^{-1}\;=\;(\mathcal{S}-x,-x)$
    \item source $\mathfrak{s}\colon \G_\LL\to \Xi_\LL$ and range $\mathfrak{r}\colon \G_\LL\to \Xi_\LL$ given by
    $$\mathfrak{s}\big((\mathcal{S},x)\big)\;=\;\mathcal{S},\qquad \mathfrak{r}\big((\mathcal{S},x)\big)\;=\;\mathcal{S}-x$$
    \item A set of composable elements $\G_\LL^{(2)}$ with multiplication
    $$ (\mathcal{S}-x,y)\cdot (\mathcal{S},x)\;=\;(\mathcal{S},x+y)$$
\end{enumerate}

\begin{remark}\label{Re:BK}
    As presented here, $\G_\LL$, which is the universal groupoid of $\LL$, coincides with the Bellissard-Kellendonk groupoid when $\LL$ is regarded as a uniformly separated pattern in the space $ \mathscr{C}(\R^D)$ of closed subsets of $\R^D$ endowed with the Fell topology \cite{Bel1, Kel1, DET1} (see also \cite[Sec.~4.2]{PatersonBook}). This is important because it is the Bellissard-Kellendonk groupoid that bridges mathematics and physics (see \cite{MP2024} for details). $\hfill \blacktriangleleft $
\end{remark}      

Some of the key properties of this groupoid are summarized below:

    \begin{proposition}[\cite{BM}]\label{prop: etale}
     $\G_\LL$ is a second countable, locally compact, Hausdorff, and étale groupoid for any semigroup $\LL\subset\Z^D$.
    \end{proposition}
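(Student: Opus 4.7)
The plan is to verify each property in turn, exploiting the identification $\mathscr{C}(\Z^D)\simeq\{0,1\}^{\Z^D}$ established in the preceding lemma together with the disjoint ``sheet'' decomposition $\G_\LL=\bigsqcup_{n\in\Z^D}(U_n\times\{n\})$.

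First I would dispose of the point-set topology. Since $\Z^D$ is countable, $\{0,1\}^{\Z^D}$ is a compact metrizable space, hence so is the closed subset $\Xi_\LL$; taking the product with the countable discrete group $\Z^D$ yields a second countable Hausdorff space, and $\G_\LL$ inherits both properties as a subspace. The key structural observation is that each $U_n=\{\Ss\in\Xi_\LL:n\in\Ss\}$ is \emph{clopen}, because under the homeomorphism $\Psi$ it is the preimage of the cylinder set $Z_{n,1}$, whose complement $Z_{n,0}$ is also a basic open set. Consequently every sheet $U_n\times\{n\}$ is a clopen, compact subset of $\Xi_\LL\times\Z^D$, and $\G_\LL$ is the disjoint union of these sheets. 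Local compactness is then immediate: for any $(\Ss,x)\in\G_\LL$ the sheet $U_x\times\{x\}$ is a compact open neighborhood.

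Next I would verify the étale property by examining the source map $\mathfrak{s}(\Ss,x)=\Ss$ restricted to the sheet $U_x\times\{x\}$. On this clopen piece $\mathfrak{s}$ coincides with the projection onto the first factor, giving a continuous bijection onto the open set $U_x\subset\Xi_\LL$; since both sides are compact Hausdorff the inverse is automatically continuous, so $\mathfrak{s}|_{U_x\times\{x\}}$ is a homeomorphism onto its open image. This proves $\mathfrak{s}$ is a local homeomorphism, i.e.\ étale. Because the partial action $\alpha_x\colon U_x\to U_{-x}$, $\Ss\mapsto\Ss-x$, is a homeomorphism of open subsets (continuity in the Fell topology is automatic since translation by $-x$ is a homeomorphism of $\mathscr{C}(\Z^D)$, and $\alpha_{-x}$ is a two-sided inverse), the range map $\mathfrak{r}=\alpha_x\circ\mathfrak{s}$ is also a local homeomorphism on each sheet.

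The main subtlety, which I would address carefully, is checking that the map $\alpha_x$ is continuous on the open set $U_x$, so that both $\mathfrak{r}$ and the inversion map are continuous on $\G_\LL$; this reduces to the fact that translation by a fixed element of $\Z^D$ is a homeomorphism of $\mathscr{C}(\Z^D)$ in the Fell metric \eqref{eq: fell metric}, since the defining condition $(\LL-x)(r)=(\LL'-x)(r)$ is equivalent to $\LL(r+|x|)\cap B(-x,r)=\LL'(r+|x|)\cap B(-x,r)$. This is essentially a routine verification rather than a genuine obstacle, and with it in hand all four conclusions of the proposition follow directly from the sheet decomposition.
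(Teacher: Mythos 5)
Your proposal is correct, but note that the paper offers no proof of this proposition at all: it is simply cited from \cite{BM}. Your direct argument via the clopen sheet decomposition $\G_\LL=\bigsqcup_{n\in\Z^D}U_n\times\{n\}$ is the standard and essentially canonical way to prove it, and all four properties do follow as you describe: second countability and Hausdorffness pass from the compact metrizable space $\Xi_\LL\subset\{0,1\}^{\Z^D}$ and the countable discrete group $\Z^D$ to the subspace $\G_\LL$; each $U_n$ is clopen and hence compact, giving local compactness; and the source map restricted to a sheet is the first-coordinate projection, hence a homeomorphism onto the open set $U_n$, which is the \'etale condition. Two small remarks. First, the ``main subtlety'' you isolate --- continuity of translation in the Fell topology --- is immediate from the identification $\mathscr{C}(\Z^D)\simeq\{0,1\}^{\Z^D}$ you already invoked: translation by $x$ is just the coordinate shift $f\mapsto f(\cdot+x)$ on the product space, which permutes the cylinder sets and is therefore a homeomorphism; your metric computation is unnecessary and in fact contains a sign slip (the relevant ball is $B(x,r)$, not $B(-x,r)$, since $(\LL-x)\cap B(0,r)=(\LL\cap B(x,r))-x$), though this does not affect the conclusion. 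Second, for the word ``groupoid'' in the statement to be fully justified one should also record that the multiplication $(\Ss-x,y)\cdot(\Ss,x)=(\Ss,x+y)$ is continuous on composable pairs; on the clopen pieces where the $\Z^D$-coordinates are fixed this is again a projection, so it is the same one-line argument, but it deserves a sentence. With those cosmetic points addressed, your proof is complete and supplies exactly the verification the paper delegates to the reference.
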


It is known that $\G_\LL$ provides a realization of $C^*_r(\LL)$ as a groupoid $C^*$-algebra:

    \begin{proposition}\label{prop: groupoid}
The following isomorphism of $C^*$-algebras holds
    $$C_r^*(\LL)\;\simeq\;C^*_r(\G_\LL)$$
    where $C^*_r(\G_\LL)$ stands for the reduced groupoid $C^*$-algebra of $\G_\LL.$
\end{proposition}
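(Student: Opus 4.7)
The plan is to identify $C^*_r(\LL)$ as the image of $C^*_r(\G_\LL)$ under the regular representation at the distinguished unit $\LL\in\Xi_\LL$, and then argue that this particular regular representation is faithful because $\LL$ has dense orbit in $\Xi_\LL$ by construction. The strategy is essentially the universal-groupoid/partial-action argument from \cite{Exel}, adapted to the concrete form of $\G_\LL$ spelled out in \eqref{eq: partial transformation groupoid}.

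First, I would exhibit the generators. For each $l\in\LL$, the set $B_l:=\{(\Ss,l)\,:\,\Ss\in U_l\}$ is a compact open bisection of $\G_\LL$ with source $U_l$ and range $U_{-l}$, so by \'etaleness (Proposition \ref{prop: etale}) its characteristic function $\mathbf{1}_{B_l}$ lies in $C_c(\G_\LL)\subset C^*_r(\G_\LL)$ and is a partial isometry. A short check on the groupoid multiplication shows $\mathbf{1}_{B_l}\mathbf{1}_{B_m}=\mathbf{1}_{B_{l+m}}$ whenever $l,m\in\LL$, and $\mathbf{1}_{B_l}^*=\mathbf{1}_{B_l^{-1}}$. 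Moreover, since the bisections $\{B_l, B_l^{-1}:l\in\LL\}$ together with their source/range projections generate a dense $*$-subalgebra of $C_c(\G_\LL)$ (one can decompose any compactly supported function using the basic open sets of the Fell topology), they generate $C^*_r(\G_\LL)$.

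Next, since $\LL-0=\LL\in\mathscr{O}(\LL)\subset\Xi_\LL$, the point $\LL$ is a distinguished unit of $\G_\LL$. Its source fibre is
\[
\G_\LL^{\LL}\;=\;\{(\LL,n)\,:\,\LL\in U_n\}\;=\;\{(\LL,n)\,:\,n\in\LL\},
\]
so there is a canonical unitary $U\colon\ell^2(\G_\LL^{\LL})\to\ell^2(\LL)$, $\delta_{(\LL,n)}\mapsto\delta_n$. Under $U$ the regular representation $\lambda_\LL$ of $C^*_r(\G_\LL)$ on $\ell^2(\G_\LL^{\LL})$ sends $\mathbf{1}_{B_l}$ to the partial isometry $\delta_n\mapsto\delta_{n-l}$ on $\ell^2(\LL)$, that is, to $V_l^*$. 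Hence $U\lambda_\LL(\cdot)U^{-1}$ is a $*$-homomorphism $\Phi\colon C^*_r(\G_\LL)\to\mathscr{B}(\ell^2(\LL))$ whose image equals $C^*_r(\LL)$; in particular $\Phi$ is surjective onto $C^*_r(\LL)$.

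The main obstacle is faithfulness of $\Phi$, i.e. showing that $\lambda_\LL$ alone already captures the reduced norm. Here I would use the general fact that for a Hausdorff \'etale groupoid the regular representation at a point with dense orbit is faithful. Concretely, by Proposition \ref{prop: etale} the canonical conditional expectation $E\colon C^*_r(\G_\LL)\to C_0(\Xi_\LL)$ (restriction to the unit space) is well defined and faithful on positive elements, and the function $\Xi_\LL\ni\Ss\mapsto\|\lambda_\Ss(a)\|$ is lower semi-continuous for every $a$. Given a nonzero positive $a\in C^*_r(\G_\LL)$, there exists $\Ss\in\Xi_\LL$ with $E(a)(\Ss)\neq 0$; by the very definition \eqref{eq: hull} of the transversal hull, one has $\LL-n_i\to\Ss$ in the Fell topology for some net $(n_i)\subset\LL$, and equivariance of the regular representations under the partial action combined with lower semi-continuity yields $\|\lambda_\LL(a)\|\geq\|\lambda_\Ss(a)\|>0$. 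This establishes $\Phi$ as an isometric $*$-isomorphism onto $C^*_r(\LL)$, which is the claim.
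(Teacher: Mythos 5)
Your argument is correct in outline but follows a genuinely different route from the paper. The paper does not argue directly with the groupoid at all: it identifies $C^*_r(\LL)$ with the Wiener--Hopf algebra of $\LL$ as a subsemigroup of $\Z^D$, checks that $\Xi_\LL$ with the action $(n,\Ss)\mapsto \Ss-n$ satisfies the axioms $A_1,A_2,A_3$ characterizing the essentially unique compact $\LL$-space of Renault--Sundar, and then quotes their Theorem 5.5 wholesale. Your approach --- exhibit the bisections $B_l$, compute the regular representation at the distinguished unit $\LL$, and deduce faithfulness from density of the orbit via lower semicontinuity of $\Ss\mapsto\|\lambda_\Ss(a)\|$ --- is exactly the explicit picture the paper sketches \emph{after} the proposition (the functions $S_\lambda$ with $\pi_\LL(S_\lambda)=V_\lambda$ and faithfulness of $\pi_\LL$ via \cite[Ex.~5.3.3]{Wil}), so it is more self-contained and makes the isomorphism concrete, at the price of having to verify by hand what the citation packages. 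The one step you treat too casually is the claim that the $\mathbf{1}_{B_l}$ generate a dense $*$-subalgebra of $C_c(\G_\LL)$: decomposing a compactly supported function over basic open sets reduces this to showing (i) that $\chi_{U_x\times\{x\}}$ lies in the generated algebra for \emph{every} $x\in\Z^D$, not just $x\in\LL\cup(-\LL)$, which uses $\LL-\LL=\Z^D$ and products $\mathbf{1}_{B_l}^*\mathbf{1}_{B_m}$, and (ii) that the diagonal projections $V_l^*V_l$ generate all of $C(\Xi_\LL)$ --- this is precisely the nontrivial identification $\mathfrak{F}_\LL\simeq C(\Xi_\LL)$ of \eqref{Eq:Idempotents}, which the paper imports from \cite{Li} and which your parenthetical does not supply. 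Also a minor convention slip: $\delta_n\mapsto\delta_{n-l}$ is $V_l$, not $V_l^*$ (with the source-fibre convention one actually gets $\delta_n\mapsto\delta_{n+l}=V_l^*\delta_n$); this does not affect the conclusion since either way the image is $C^*_r(\LL)$.
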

\begin{proof}
The semigroup algebra $C_r^*(\LL)$ is by definition the same as the Wiener-Hopf algebra $\mathcal{W}(\LL)$ associated to $\LL$ as a subsemigroup of $\Z^D$, which can be characterized as the groupoid algebra of a transformation groupoid $X\rtimes \LL$ \cite[Theorem 5.5]{RS}. Here $X$ is a compact $\LL$-space that is essentially unique up to $\LL$-equivariant homeomorphism. It is easy to check that $\Xi_\LL$ endowed with the semigroup action $$\LL\times \Xi_\LL\ni (n,\Ss)\;\mapsto\; \Ss-n$$ satisfies the properties $A_1, A_2$ and $A_3$ listed in \cite[Section 5]{RS} which characterize that space and $\G_\LL = \Xi_\LL \rtimes \LL$ is a transformation groupoid. Therefore, the result follows from \cite[Theorem 5.5]{RS}.
\end{proof}

The isomorphism in Proposition \ref{prop: groupoid} can be made explicitly. In fact, for each $\lambda\in \LL$ consider the function $S_\lambda\colon\G_\LL\to \C$ defined as
$$S_\lambda(\Ss, x)\;=\;\delta_{\lambda,-x}$$
It is clear that $S_\lambda\in C_r^*(\G_\LL)$ since is the indicator function of the clopen subset $\{ (\Ss,-\lambda)\,|\,-\lambda\in \Ss\}\subset \G_\LL$. Moreover, it has the explicit adjoint $S_\lambda^*(\Ss,x)=\delta_{\lambda,x}$. Since the orbit of $\LL$ is dense in $\Xi_\LL$, then the left regular representation $\pi_\LL\colon C^*_r(\G_\LL)\to \mathscr{B}(\ell^2(\LL))$ is faithful \cite[Ex 5.3.3.]{Wil}. One has the relations $$\pi_\LL(S_\lambda)\;=\;V_\lambda,\qquad \pi_\LL(S_\lambda^*)\;=\;V_\lambda^*$$
thus demonstrating $C^*_r(\LL)\subset \pi_\LL(C^*_r(\G_\LL))$. The reverse inclusion also holds and thus $\pi_\LL(C_r^*(\G_\LL))= C^*_r(\LL)$ \cite[Theorem 5.5]{RS}. In particular, the commutative algebra $C(\Xi_\LL)$ can be identified with the  sub-$C^*$-algebra of $C_r^*(\LL)$ given by 
\begin{equation}\label{Eq:Idempotents}
    \mathfrak{F}_\LL\;:=\;C^*\big\{ V_l^*V_l\;|\; l\in \LL\big\}
\end{equation}
More precisely,  
$\mathfrak{F}_\LL$ is a commutative unital $C^*$-algebra so that its Gelfand spectrum agrees with $\Xi_\LL,$ \ie, 
$\mathfrak{F}_\LL\simeq C(\Xi_\LL)$ \cite{Li}.

\medskip

The partial action $\alpha$ of $\Z^D$ on $\Xi_\LL$ is the restriction of the $\Z^D$-action $\tilde{\alpha}_n(\Ss)=\Ss-n$ on $\tilde{\Xi}_\LL$. The inclusion $i\colon \Xi_\LL\subset \tilde{\Xi}_\LL$ is of course equivariant $\tilde{\alpha}_n(i(\Ss))=i(\alpha_n(\Ss))$ for any $\Ss\in U_n.$ One of the implications of this extension is that the dynamical system $(\tilde{\Xi}_\LL,\tilde{\alpha},\Z^D)$ captures important information of $C_r^*(\LL)$, as explained below:
\begin{theorem} [\cite{LAC}]\label{Teo morita}
    There is a full projection $p$ in the crossed product  $ C_0(\tilde{\Xi}_\LL)\rtimes_{\tilde{\alpha}}\Z^D$ such that
    $$C_r^*(\LL)\;\simeq\; p\big(C_0(\tilde{\Xi}_\LL)\rtimes_{\tilde{\alpha}}\Z^D\big)p$$
    Consequently, the $C^*$-algebras $C_r^*(\LL)$ and $C_0(\tilde{\Xi}_\LL)\rtimes_{\tilde{\alpha}}\Z^D$ are Morita equivalent.
\end{theorem}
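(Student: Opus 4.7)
The plan is to realize $C_r^*(\LL)$ as a partial crossed product and then invoke Abadie's globalization theorem to upgrade it to a corner of a genuine crossed product on the enveloping space $\tilde\Xi_\LL$.

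First I would observe that by Proposition~\ref{prop: groupoid} we already know $C_r^*(\LL)\simeq C_r^*(\G_\LL)$, and the partial transformation groupoid $\G_\LL=\Xi_\LL\!\Join\!\Z^D$ is exactly the transformation groupoid of the partial $\Z^D$-action $\alpha$ on $\Xi_\LL$. By the standard identification (see e.g.~\cite{Exel} and \cite[Thm.~3.3]{Li}), its reduced groupoid $C^*$-algebra coincides with the reduced partial crossed product
\[
C_r^*(\G_\LL)\;\simeq\;C(\Xi_\LL)\rtimes^{\mathrm{partial}}_{\alpha,r}\Z^D.
\]
Next I would verify that $(\tilde\Xi_\LL,\tilde\alpha)$ is the enveloping (globalization) system of the partial dynamical system $(\Xi_\LL,\alpha)$ in the sense of Abadie. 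Concretely, $\Xi_\LL$ sits as a compact open subset of the locally compact Hausdorff space $\tilde\Xi_\LL$, the restriction of $\tilde\alpha$ to the return-set $\{\Ss\in\Xi_\LL:\tilde\alpha_n(\Ss)\in\Xi_\LL\}=U_n$ agrees with $\alpha_n$, and by construction $\tilde\Xi_\LL=\bigcup_{n\in\Z^D}\tilde\alpha_n(\Xi_\LL)$, which is the required density/saturation condition.

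With this identification, Abadie's Morita equivalence theorem for partial actions of discrete groups (see \cite[Thm.~3.4]{LAC}) applies directly and yields
\[
C(\Xi_\LL)\rtimes^{\mathrm{partial}}_{\alpha,r}\Z^D\;\simeq\;p\,\bigl(C_0(\tilde\Xi_\LL)\rtimes_{\tilde\alpha,r}\Z^D\bigr)\,p,
\]
where the projection $p\in M(C_0(\tilde\Xi_\LL))$ is the characteristic function $p=\chi_{\Xi_\LL}$. This sits inside $C_0(\tilde\Xi_\LL)$ itself because $\Xi_\LL$ is compact and clopen in $\tilde\Xi_\LL$. For fullness of $p$ one uses that the ideal generated by $C_0(\Xi_\LL)$ inside $C_0(\tilde\Xi_\LL)\rtimes_{\tilde\alpha}\Z^D$ equals $C_0(Y)\rtimes_{\tilde\alpha}\Z^D$ where $Y$ is the $\Z^D$-saturation of $\Xi_\LL$ in $\tilde\Xi_\LL$; by the enveloping property $Y=\tilde\Xi_\LL$, so $p$ generates the whole crossed product as a (closed two-sided) ideal.

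The potentially delicate step is precisely the verification of the enveloping/globalization conditions: one must check that $\tilde\Xi_\LL$ is Hausdorff (which follows from its description as a subset of $\mathscr{C}(\Z^D)\simeq\{0,1\}^{\Z^D}$ with the product topology, after removing the empty set), that the inclusion $\Xi_\LL\hookrightarrow\tilde\Xi_\LL$ is equivariant for the partial and global actions, and that every orbit of $\tilde\alpha$ meets $\Xi_\LL$. Once these are in place, Morita equivalence of $C_r^*(\LL)$ and $C_0(\tilde\Xi_\LL)\rtimes_{\tilde\alpha}\Z^D$ follows from the general fact that a full corner is Morita equivalent to the ambient algebra, completing the proof.
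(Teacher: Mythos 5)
Your sketch is correct in outline, but it takes a genuinely different route from the one the paper relies on: the paper gives no argument of its own and simply invokes Laca's dilation theorem \cite{LAC}, which realizes $C^*_r(\LL)\simeq C(\Xi_\LL)\rtimes\LL$ as a semigroup crossed product by injective corner endomorphisms and then passes to its minimal automorphic dilation, identifying the dilated system with $(C_0(\tilde\Xi_\LL),\Z^D)$; the full-corner statement is the conclusion of that dilation theory. You instead go through the Exel--Abadie formalism: partial transformation groupoid $=$ reduced partial crossed product, then globalization to the enveloping action. The two routes are parallel and both legitimate (and since $\Z^D$ is amenable the reduced/full distinction you silently elide is harmless); your version has the mild advantage of not needing the Ore/dilation machinery and of exhibiting $p=\chi_{\Xi_\LL}$ directly. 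The one place where you lean on "by construction" but actually owe an argument is the saturation condition $\tilde\Xi_\LL=\bigcup_{n\in\Z^D}\tilde\alpha_n(\Xi_\LL)$: since $\tilde\Xi_\LL$ is \emph{defined} as a closure, it is not immediate that a limit of $\LL-n(j)$ with $n(j)\in\Z^D$ lies in a translate of $\Xi_\LL$. It does, but by a short argument: if $\LL-n(j)\to\Ss\neq\emptyset$, pick $m\in\Ss$; then $m+n(j)\in\LL$ eventually, so $\Ss-m=\lim\,\bigl(\LL-(n(j)+m)\bigr)\in\Xi_\LL$ and $\Ss\in\Xi_\LL+m$. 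The same observation with $m=0$ shows $\Xi_\LL=\{\Ss\in\tilde\Xi_\LL: 0\in\Ss\}$, which is what makes $\Xi_\LL$ clopen and hence $p\in C_0(\tilde\Xi_\LL)$, and the saturation is exactly what makes $p$ full. With that lemma supplied, your proof is complete.
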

\noindent Indeed, $p\in C_0(\tilde{\Xi}_\LL)$ is just the indicator function of the clopen subset $\Xi_\LL$. 
\begin{remark}
    Since the $K$-groups of a $C^*$-algebra are invariants under Morita equivalence, the $K$-theory of $C^*_r(\LL_{\bf v})$ can therefore in principle be computed via the Pimsner-Voiculescu exact sequence \cite{Pim}. 
    $\hfill \blacktriangleleft $
\end{remark}

\smallskip
Thanks to Proposition \ref{prop: groupoid}, there is a faithful conditional expectation map $E\colon C_r^*(\LL)\to C(\Xi_\LL)$. There is a natural semigroup action of $\LL$ on $\Xi_\LL$ defined via $$\LL\times \Xi_\LL\ni(l,\mathcal{S})\;\mapsto \;\mathcal{S}-l\in \Xi_\LL$$
 Notice that this action is well defined since $l\in \LL\subset  \mathcal{S}$ for any $\mathcal{S}\in \Xi_\LL.$ A subset $A\subset \Xi_\LL$ is invariant if $A-l=A$ for any $l\in \LL$. Since $\Xi_\LL$ is the unit space of $\G_\LL$, the latter is equivalent to saying that $\mathfrak{r}^{-1}(A)=\mathfrak{s}^{-1}(A)$, where we recall that $\mathfrak{r}$ and $\mathfrak{s}$ are the range and source maps, respectively. A Borel measure $\mu$ on $\Xi_\LL$ is said to be invariant if for any Borel measurable set $A\subset \Xi_\LL$ it satisfies  $\mu(A-l)=\mu(A)$. In this case, it is straightforward to verify that any invariant measure is also invariant in the groupoid sense \cite{RenaultBook}. An invariant probability measure on $\Xi_\LL$ is ergodic if every invariant set $A\subset \Xi_\LL$ satisfies $\mu(A)\in \{0,1\}.$  

\smallskip

As in the previous paragraph, throughout this work, we will use the groupoid and semigroup structure on $C^*_r(\LL)$, as both provide valuable insights into this $C^*$-algebra. In this way, for a closed/open invariant subset $A$ of $\Xi_\LL$ we adopt the notation $\LL|_A\equiv \G_\LL|_A$ where the latter is the \emph{reduction groupoid}, \ie
$$\G_\LL|_A\;:=\;\mathfrak{s}^{-1}(A)\cap\mathfrak{r}^{-1}(A)$$
This provides a splitting of $\Xi_\LL$ and, consequently, an exact sequence involving $C^*_r(\LL).$
\begin{proposition}[{\cite[Proposition 5.2]{Wil}}]\label{Prop:ES}
    Let $A$ be a closed invariant subset of $\Xi_\LL$ and $A^c$ its open complement. Then $C^*_r(\LL|_{A^c})$ is a closed ideal of $C^*_r(\LL)$ and there is a surjective $*$-homomorphism $\mathfrak{e}\colon C^*_r(\LL)\to C^*_r(\LL|_A)$ such that
    $$0\to C^*_r(\LL|_{A^c})\to C^*_r(\LL)\stackrel{\mathfrak{e}}\to C^*_r(\LL|_A)\to 0$$
\end{proposition}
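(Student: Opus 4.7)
The plan is to exploit the invariance of $A$ (and thus $A^c$), which gives $\mathfrak{s}^{-1}(A)=\mathfrak{r}^{-1}(A)$ and $\mathfrak{s}^{-1}(A^c)=\mathfrak{r}^{-1}(A^c)$, so that $\G_\LL|_A$ is a closed subgroupoid and $\G_\LL|_{A^c}$ an open subgroupoid of $\G_\LL$, both inheriting from Proposition~\ref{prop: etale} the properties of being étale, second countable, locally compact, and Hausdorff.

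First I would construct the ideal. Extension by zero identifies $C_c(\G_\LL|_{A^c})$ with a $*$-subalgebra of $C_c(\G_\LL)$, and the invariance of $A^c$ guarantees that the convolution $f\ast g$ remains supported on $\G_\LL|_{A^c}$ whenever either factor is, so its image is a two-sided $*$-ideal. I would then verify that the reduced norm on $C_c(\G_\LL|_{A^c})$ coincides with the norm inherited from $C^*_r(\G_\LL)$ by comparing regular representations at each unit $u\in A^c$: because the source fibre $\mathfrak{s}^{-1}(u)$ is the same when computed in $\G_\LL|_{A^c}$ as in $\G_\LL$, the regular representations $\pi_u$ agree, and the reduced norm of any $f\in C_c(\G_\LL|_{A^c})$ equals $\sup_{u\in A^c}\|\pi_u(f)\|$ in both settings. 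Taking closures then yields a closed ideal $\mathfrak{J}\subset C^*_r(\G_\LL)$ canonically $*$-isomorphic to $C^*_r(\G_\LL|_{A^c})$.

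Next, restriction of functions defines a $*$-homomorphism $\mathfrak{e}_0\colon C_c(\G_\LL)\to C_c(\G_\LL|_A)$, with surjectivity coming from a Tietze-type extension argument that uses the closedness of $\G_\LL|_A$ in $\G_\LL$ together with the existence of compactly supported partitions of unity on the étale groupoid. The kernel of $\mathfrak{e}_0$ at the $C_c$-level is precisely the image of $C_c(\G_\LL|_{A^c})$. The main obstacle I expect is promoting $\mathfrak{e}_0$ to a surjective morphism between the reduced $C^*$-algebras with kernel exactly $\mathfrak{J}$; for a generic étale groupoid, the quotient norm induced from $C^*_r(\G_\LL)$ can be strictly larger than the reduced norm on $C^*_r(\G_\LL|_A)$, and this step can fail. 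The way around is to observe that $\G_\LL$ is a partial transformation groupoid of the amenable group $\Z^D$ and is therefore itself amenable, so $C^*_r$ and $C^*$ coincide for $\G_\LL$ and for both of its restrictions. The unconditional exact sequence at the full groupoid $C^*$-algebra level, which follows from the universal property, then descends to the required short exact sequence at the reduced level.
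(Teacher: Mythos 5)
Your proof is correct and follows the standard argument behind the cited result; the paper gives no proof of this proposition beyond the reference to \cite[Proposition 5.2]{Wil}, but the remark immediately following it invokes exactly the amenability of $\G_\LL$ (as a locally compact subgroupoid of the amenable transformation groupoid $\tilde{\Xi}_\LL\rtimes\Z^D$) that you use to pass from the full to the reduced exact sequence. You correctly isolated the one genuinely delicate point --- exactness in the middle at the reduced level --- and resolved it the same way the paper does.
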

\begin{remark}
    In the previous Proposition, we used the fact that $\G_\LL$ is a topological amenable groupoid since it is a locally compact subgroupoid of the amenable groupoid $\tilde{\Xi}_\LL\rtimes \Z^D$ \cite[Proposition 9.77]{Wil}.
    \hfill $\blacktriangleleft$
\end{remark}

\section{Measures on the hull of cone semigroups}
\label{sec:measures}
In this section, we describe the transversal hull of a cone semigroup and provide the proof of Theorem \ref{teo measures}.

\medskip

As it was pointed out in the introduction, let ${\bf v}:=\{ v_1,\dots, v_d\}$ be a set of normalized linearly independent vectors in $\R^D$ with $D\geq d.$ If $I\subset \{1,\dots,d\}$  we shall use the notation ${\bf v}_I:=\{ v_i\}_{i\in I}$ and set ${\bf v}_\emptyset=\emptyset$. Moreover, we also write ${\bf v}\setminus i\equiv {\bf v}_{\{1,\dots,d\}\setminus\{i\}}$. Associated with ${\bf v}$ there is the linear transformation $A_{\bf v}\colon \R^D\to \R^d$ with rows ${\bf v}$  acting on $z\in\R^D$ via
$$A_{\bf v} z\;=\;\sum_{i=1}^d \big(v_i\cdot z\big) e_i \in \R^d,$$
here $e_i$ is the standard basis of $\R^d.$ Consider a cone subsemigroup $\LL_{\bf v}$ of $\Z^D$ according to \eqref{eq: cone semigroup} and denote its transversal hull by $\Xi_{\bf v}\equiv \Xi_{\LL_{\bf v}}$.  The computation of this space requires introducing the image of $\LL_{\bf v}$ under the linear map $A_{\bf v}$, which defines a countable additive subsemigroup of $\R^d_+=[0,+\infty)^d$. Since it is not necessarily closed, we shall denote its closure as 
\begin{equation}\label{eq: decomposition semigroup}
    \mathcal{X}_{\bf v}\;:= \;\overline{A_{\bf v}(\LL_{\bf v})}\subseteq \R^d_+.
\end{equation}

\begin{definition}\label{def: rational dependence}
    We say that ${\bf v}$ is rational (R) if $A_{\bf v}(\LL_{\bf v})$ is a closed, discrete, and finitely generated subgroup of $\R^d_+$. Otherwise, we refer to ${\bf v}$ as irrational (I). If in particular $\X_{\bf v}=\R^d_+$, we shall say that ${\bf v}$ is completely irrational (CI). Furthermore, ${\bf v}$ has the property RCI if for any non-empty proper subset $I\subset \{1,\dots,d\}$  the restriction ${\bf v}_I$ is either R or CI. 
\end{definition}

\begin{remark}
   Observe that ${\bf v}$ is R if the semigroup $\LL_{\bf v}$ is finitely generated. For ${\bf v}$ to be CI, it is sufficient that all entries of the matrix $A_{\bf v}$ are linearly independent over $\mathbb{Q}$ and $D>d.$ Hence this is the generic case which holds for almost all ${\bf v}$. 
   \hfill $\blacktriangleleft$
\end{remark}
\medskip

In order to move forward in the description of $\Xi_{\bf v}$, for a subset $J$ of $\{1,\dots, d\}$ and $x\in \R^d$ consider the subsets of $\Z^D$
\begin{equation}\label{eq: j semigroup}
   \LL_{{\bf v},x}^J\;:=\;\big\{n\in \Z^D\;|\;v_k\cdot n+x_k>0\;\text{if}\;k\in J\;\text{and}\; v_k\cdot n+x_k\geq 0\;\text{if} \;k\notin J\big\}
\end{equation}
Note that $\LL_{{\bf v},x}^\emptyset=\LL_{\bf v}-m$,  whenever $x=A_{\bf v}m$ with $m \in \Z^D$.  Some of the sets in \eqref{eq: j semigroup} are contained in the transversal hull of $\LL_{\bf v}$: 
\begin{proposition}\label{prop: containing}
   Assume the RCI property on ${\bf v}$. Then the transversal hull of a cone semigroup satisfies 
    $$\big\{ \LL_{{\bf v},x}^\emptyset\,:\,x\in \mathcal{X}_{\bf v}\big\} \cup \bigcup_{i=1}^d \Xi_{{\bf v}\setminus i}\;\subset \; \Xi_{\bf v}$$
    where we recall that ${\bf v}\setminus i\equiv {\bf v}_{\{1,\dots,d\}\setminus\{i\}}$ and  $\Xi_\emptyset =\Xi_0=\{\Z^D\}$.
\end{proposition}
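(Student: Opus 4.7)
The plan is to produce, for each candidate set $\Ss$ listed on the left-hand side, an explicit sequence $(m_k)\subset \LL_{\bf v}$ such that $\LL_{\bf v}-m_k\to\Ss$ in the Fell topology, which by definition of $\Xi_{\bf v}$ as the closure of $\mathscr{O}(\LL_{\bf v})$ suffices. Recall that Fell convergence on $\mathscr{C}(\Z^D)$ asserts $\LL_{\bf v}-m_k\to\Ss$ iff, for every $z\in\Z^D$, the indicator $[z\in\LL_{\bf v}-m_k]$ eventually stabilizes to $[z\in\Ss]$. Since $z\in\LL_{\bf v}-m_k$ holds iff $v_\ell\cdot m_k\geq -v_\ell\cdot z$ for every $\ell=1,\dots,d$, the convergence is entirely controlled by the asymptotic behaviour of the $d$-tuple $(v_\ell\cdot m_k)_\ell$, and the role of the RCI hypothesis is precisely to allow enough flexibility in choosing $m_k$ to arrange this behaviour.

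For the first family $\{\LL_{{\bf v},x}^\emptyset:x\in\mathcal{X}_{\bf v}\}$, the case $x=A_{\bf v}m$ with $m\in\LL_{\bf v}$ is immediate from the identity $\LL_{{\bf v},x}^\emptyset=\LL_{\bf v}-m$ noted just before the proposition, placing the target directly in $\mathscr{O}(\LL_{\bf v})\subset\Xi_{\bf v}$. For a general $x\in\mathcal{X}_{\bf v}$, I would use the RCI hypothesis to select $(m_k)\subset\LL_{\bf v}$ with $A_{\bf v}m_k\to x$ componentwise \emph{from above}, i.e.\ $v_\ell\cdot m_k\geq x_\ell$ eventually for every $\ell$. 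A case-split on the sign of $v_\ell\cdot z+x_\ell$ then verifies Fell convergence: the strict inequalities are stable under perturbation, while the boundary case $v_\ell\cdot z+x_\ell=0$ is secured by the ``from above'' property, which gives $v_\ell\cdot m_k\geq x_\ell=-v_\ell\cdot z$. For the second family $\Xi_{{\bf v}\setminus i}$, since $\Xi_{\bf v}$ is closed in $\mathscr{C}(\Z^D)$ and $\Xi_{{\bf v}\setminus i}=\overline{\mathscr{O}(\LL_{{\bf v}\setminus i})}$, it suffices to place each generator $\LL_{{\bf v}\setminus i}-n$, $n\in\LL_{{\bf v}\setminus i}$, inside $\Xi_{\bf v}$. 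I would construct $(m_k)\subset\LL_{\bf v}$ with $v_i\cdot m_k\to+\infty$ while $v_j\cdot m_k\to v_j\cdot n$ from above for each $j\neq i$; in the Fell limit the $\ell=i$ inequality $v_i\cdot z+v_i\cdot m_k\geq 0$ becomes vacuous (its right-hand side diverges to $-\infty$), and the remaining constraints stabilize to $v_j\cdot z+v_j\cdot n\geq 0$, precisely cutting out $\LL_{{\bf v}\setminus i}-n$.

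The main obstacle is justifying the existence of these approximating sequences given the potentially mixed rational/irrational character of $\bf v$, and this is exactly where the RCI hypothesis does its work: on each proper subset $I$ of coordinates the image semigroup $A_{{\bf v}_I}(\LL_{\bf v})$ is either closed and discrete (so targets can be hit exactly) or dense in $\R_+^{|I|}$ (so they can be approached from above). Splitting the coordinate indices into the R and CI blocks and combining exact adjustment along the former with dense approximation along the latter yields the required $(m_k)$ in both parts. The second part is marginally more delicate because one must drive $v_i\cdot m_k\to\infty$ without disturbing the control on $\{v_j\cdot m_k\}_{j\neq i}$; this can be arranged by first choosing a coarse approximant and then perturbing within $\LL_{\bf v}$ along lattice directions that are approximately orthogonal to ${\bf v}\setminus i$, a freedom that is once again guaranteed by the RCI structure.
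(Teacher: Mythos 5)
Your proposal is correct and takes essentially the same route as the paper: both verify Fell convergence by tracking when the signs of $v_\ell\cdot z+v_\ell\cdot m_k$ stabilize (the content of the paper's Lemma~\ref{Lemma J}) and then produce sequences in $\LL_{\bf v}$ whose $A_{\bf v}$-images approach $x$ from above, respectively diverge in the $i$-th component while converging from above in the others (the content of Lemma~\ref{lemma convergence}). The only difference is that the paper delegates the existence of these approximating sequences, in particular the delicate case $D=d$ with ${\bf v}$ irrational handled via Proposition~\ref{lemma_special_CI_case}, to the appendix, whereas you sketch that existence inline.
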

\begin{proof}
Let us first check that $\LL_{{\bf v},x}^\emptyset$ are elements of $\Xi_{\bf v}$ for any $x:=(x_k)_{k=1}^d\in \mathcal{X}_{\bf v}$. If $x\in A_{\bf v}(\LL_{\bf v})$ then $x=A_{\bf v}n$ for some $n\in \LL_{\bf v},$ and one gets the relation  $\LL_{{\bf v},x}^\emptyset=\LL_{\bf v}-n\in \mathscr{O}(\LL_{\bf v})$. Otherwise, for $x\in  \mathcal{X}_{\bf v} \setminus A_{\bf v}(\LL_{\bf v})$ there exists a sequence $n(j)\in \LL_{\bf v}$ such that $A_{\bf v}n(j)\to x$ with monotone components satisfying  $v_k\cdot n(j)\geq x_k.$ As a consequence of Lemma \ref{Lemma J}, one gets the convergence $\LL-n(j)\to \LL_{{\bf v},x}^\emptyset.$

To establish the inclusion of the other component let $\Ss=\LL_{{\bf v}\setminus i}-n\in \mathscr{O}(\LL_{{\bf v}\setminus i})$  with $n\in \LL_{{\bf v}\setminus i}$.  Pick a sequence $n(j)\in \LL_{\bf v}$ as in Lemma \ref{lemma convergence} such that $A_{{\bf v}\setminus i} n(j) \to A_{{\bf v}\setminus i}n$  and $v_i\cdot n(j)\to +\infty$. Then it is a consequence of Lemma~\ref{Lemma J} that $\LL_{\bf v}-n(j)\to \LL_{{\bf v}\setminus i}-n$ in the Fell topology. The  above together with the fact that $\Xi_{\bf v}$ is closed provide the inclusion $\Xi_{{\bf v}\setminus i}\subset \Xi_{\bf v}.$


\end{proof}
An induction on Proposition \ref{prop: containing} shows that $\LL_{{\bf v}_I}\in \Xi_{\bf v}$ for all $I\in \mathtt{P}(\{1,\dots d\})$, where $\LL_{{\bf v}_I}=\bigcap_{i\in I}\LL_{v_i}$ with the convention $\LL_{{\bf v}_\emptyset}=\Z^D.$  This in particular verifies the inclusion $$\bigsqcup_{I\in \mathtt{P}(\{1,\dots,d\})} \big\{ \LL_{{\bf v}_I,x}^\emptyset\,:\,x\in \X_{{\bf v}_I}\big\}\;\subset \; \Xi_{\bf v}$$
Depending on the nature of ${\bf v},$ the hull $\Xi_{\bf v}$ can also contain patterns of the form $\LL_{{\bf v},x}^J$ for non-trivial $J$. To label all possible cases where this may happen, for a non-empty subset $J\subset \{1,\dots,d\},$ define a dense subset of $\X_{\bf v}$ by
$$\mathtt{X}^J_{\bf v}\;:=\;\big\{ x\in \X_{\bf v}\;|\;\forall\,k\in J\;\exists\, n(k)\in \Z^D\;\text{such that}\; x_k=v_k\cdot n(k)\neq 0\;\big\}$$
Define also $\mathtt{X}^\emptyset_{\bf v}=\X_{\bf v}.$ It is clear that $A_{\bf v}(\LL_{\bf v})\subset \mathtt{X}_{\bf v}^{\{ 1,\dots,d\}}$ and $\mathtt{X}_{\bf v}^J\subset \mathtt{X}_{\bf v}^{J'}$ whenever $J\subset J'.$ We say that $J$ is maximal for $x\in \X_{\bf v}$ if $x\in \mathtt{X}_{\bf v}^{J}$ and $x\notin \mathtt{X}_{\bf v}^{J'}$ for any $J\subset J'.$
\begin{proposition}\label{prop: containing 2}Under the assumptions of Proposition \ref{prop: containing}, it holds that
  \begin{equation}\label{eq: hull CRI}
      \Xi_{\bf v}\; \subset \;\bigcup_{J\in \mathtt{P}(\{1,\dots,d\})}\big\{ \LL^J_{{\bf v},x}\,:\,x\in \mathtt{X}^J_{\bf v}\big\}\cup \bigcup_{i=1}^d \Xi_{{\bf v}\setminus i}
  \end{equation}
\end{proposition}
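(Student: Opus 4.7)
Let $\Ss\in \Xi_{\bf v}$ and choose a sequence $n(j)\in \LL_{\bf v}$ with $\LL_{\bf v}-n(j)\to \Ss$ in the Fell topology. By the characterization of the Fell metric \eqref{eq: fell metric}, $n\in \Ss$ if and only if, for all sufficiently large $j$, $v_k\cdot n + a_k(j)\geq 0$ for every $k\in\{1,\dots,d\}$, where $a_k(j):=v_k\cdot n(j)\in[0,+\infty)$. Hence the limit set $\Ss$ is completely determined by the asymptotic behaviour of the $d$ scalar sequences $(a_k(j))_j$, and the strategy is to analyse them.

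By compactness of $[0,+\infty]$ and a diagonal extraction, pass to a subsequence so that $a_k(j)\to b_k\in[0,+\infty]$ for every $k$. Set $I:=\{k:b_k=+\infty\}$ and split into two cases. If $I\neq\emptyset$, pick any $i\in I$: since $a_i(j)\to+\infty$, for each fixed $n\in\Z^D$ the constraint $v_i\cdot n + a_i(j)\geq 0$ becomes automatic for large $j$, so $\LL_{\bf v}-n(j)$ and $\LL_{{\bf v}\setminus i}-n(j)$ agree on every fixed compact window eventually and share the same Fell limit $\Ss$. Because $n(j)\in \LL_{\bf v}\subset \LL_{{\bf v}\setminus i}$, this places $\Ss$ in $\Xi_{{\bf v}\setminus i}$, which accounts for the second piece on the right-hand side of \eqref{eq: hull CRI}.

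Assume now $I=\emptyset$ and set $x:=(b_1,\dots,b_d)$, which lies in $\X_{\bf v}$ as a limit of $A_{\bf v}n(j)\in A_{\bf v}(\LL_{\bf v})$. Refine the subsequence further so that for each $k$ exactly one of the following holds: (i) $a_k(j)\geq b_k$ for all $j$, or (ii) $a_k(j)<b_k$ for all $j$. Define
\[
J\;:=\;\big\{\,k\;:\;a_k(j)<b_k\;\;\forall j\;\;\text{and}\;\; b_k=v_k\cdot m\;\text{for some}\;m\in\Z^D\,\big\}.
\]
In case (ii) one automatically has $b_k>0$ (since $a_k(j)\geq 0$), so each $k\in J$ satisfies $x_k\neq 0$ and $x_k\in v_k\cdot\Z^D$, \ie $x\in \mathtt{X}^J_{\bf v}$. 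A direct case-by-case check of when $v_k\cdot n + a_k(j)\geq 0$ holds for large $j$ then yields $\Ss = \LL^J_{{\bf v},x}$: for $k\in J$ the limiting inequality is attainable with equality by some lattice vector and is approached strictly from below, hence becomes strict ($>0$) in the limit; for $k\notin J$ either $b_k$ is approached from above (case (i)), making $\geq 0$ the correct condition, or $b_k\notin v_k\cdot\Z^D$, in which case $>0$ and $\geq 0$ coincide for integer $n$.

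The main obstacle is the bookkeeping in the case $I=\emptyset$: for each coordinate one must simultaneously detect whether $b_k$ is realised as $v_k\cdot m$ for some $m\in\Z^D$ and, when it is, from which side $a_k(j)$ approaches it, because these two pieces of data jointly determine the index set $J$ and guarantee the membership $x\in \mathtt{X}^J_{\bf v}$ required by the right-hand side of \eqref{eq: hull CRI}. The rest reduces to a routine verification of Fell convergence.
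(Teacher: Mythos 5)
Your proposal is correct and follows essentially the same route as the paper: pass to a subsequence along which each $v_k\cdot n(j)$ has a definite asymptotic behaviour, send the components diverging to $+\infty$ into $\Xi_{{\bf v}\setminus i}$, and read off strict versus non-strict inequalities in the limit according to the side from which each finite limit $b_k$ is approached. The only cosmetic difference is that you build the condition $x\in \mathtt{X}^J_{\bf v}$ directly into the definition of $J$ (and verify the Fell convergence by hand), whereas the paper first obtains $\LL^{J_-}_{{\bf v},x}$ via Lemma \ref{Lemma J} and then shrinks $J_-$ to a subset $J'$ for which the strict/non-strict distinction is non-vacuous.
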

\begin{proof}
For $\mathcal{S}\in \Xi_{\bf v}$ let $n(j)$ be a sequence in $\LL_{\bf v}$ such that $\LL_{\bf v}-n(j)\to \mathcal{S}.$ For each $k\in \{1,\dots,d\}$ the sequence $\{v_k\cdot n(j)\}_{j\in \N}$ is bounded from below and therefore has a subsequence converging either to a finite number $x_k \in \R$ or to $+\infty$. By going over to a subsequence, we can therefore assume that $v_k\cdot n(j)$ converges for each $k$ and further, that it is either a strictly increasing or a non-increasing sequence. Denote by $J_+\subset \{1,\dots,d\}$ those values of $k$ for which $v_k\cdot n(j)$ is non-increasing and converges to a finite value as well as by $J_-$ those values of $k$ for which $v_k\cdot n(j)$ is strictly increasing and converges to a finite value. For $k$ in the complement $J_\infty = \{1,\dots,d\}\setminus (J_+\cup J_-)$ we can assume $v_k\cdot n(j)$ converges increasingly to $+\infty$.  As one can write $\LL_{\bf v}-n(j)=\LL_{{\bf v}- x(j)}$ for $x(j)=A_{\bf v} n(j)$, then the Lemma \ref{Lemma J} shows that $\LL_{\bf v}-n(j)$ converges in the Fell topology to the set $\LL_{{\bf v},x}^{J_+,J_-}$ with $x=\lim_{j\to \infty} A_{\bf v} n(j)\in \overline{\mathbb{R}}^d$ consisting of all points $n\in\mathbb{Z}^D$ such that
\begin{align*}
v_k \cdot n + x_k &\;\geq\; 0, \qquad \forall k \in J_+,\\
v_k \cdot n + x_k &\;> \;0, \qquad \forall k \in J_-
\end{align*}
Observe that if $J_\infty=\emptyset$ then $\Ss=\LL^{J_-}_{{\bf v},x}$ according to \ref{eq: j semigroup}. To see that one can restrict to $x\in \mathtt{X}^J_{\bf v}$ in the union \eqref{eq: hull CRI} one just needs to note that if $x \notin \mathtt{X}^{\{k\}}_{\bf v}$ then the distinction between strict and non-strict inequality is vacuous for the $k$-component, hence there always exists a proper subset $J'\subset J$ with $\LL^{J}_{{\bf v},x}=\LL^{J'}_{{\bf v},x}$ and $x\in \mathtt{X}^{J'}_{\bf v}$. The right-hand side of \eqref{eq: hull CRI}, therefore actually contains any set $\LL^J_{{\bf v},x}$ whenever $x\in \X_{\bf v}$ is a limit as constructed above. On the other hand, if $J_\infty\neq \emptyset$ then pick some $i\in J_\infty$ and consider the sub-tuple ${\bf v}\setminus i = (v_j)_{j\in \{1,...,d\}\setminus \{i\}}$ and the sequence $(\LL_{{\bf v}\setminus i}-n(j))_{j\in \N}$ in $\Xi_{{\bf v}\setminus i}$ for the same sequence $n(j)$. Applying Lemma~\ref{Lemma J} to that sequence one find that it converges to the same limit $\LL^{J_+,J_-}_{{\bf v},x}$ in the Fell topology, thereby showing that the limit point is already contained in $\Xi_{{\bf v}\setminus i}$.
\end{proof}

\begin{proposition}
\label{prop: CI}
If ${\bf v}$ is CI then
$$  \Xi_{\bf v}\; =\;\bigcup_{J\in \mathtt{P}(\{1,\dots,d\})}\big\{ \LL^J_{{\bf v},x}\;:\; x\in \mathtt{X}^J_{{\bf v}}\big\}\cup \bigcup_{i=1}^d \Xi_{{\bf v}\setminus i}$$
and thus by induction
\begin{equation}\label{eq: CI_hull}  \Xi_{\bf v}\; =\;\bigcup_{\substack{I\subset \{1,...,d\}\\ I\neq \emptyset}}\bigcup_{J\subset I}\big\{ \LL^J_{{\bf v}_I,x}\;:\; x\in \mathtt{X}^J_{{\bf v}_I}\big\}\;\cup \; \{\Z^D\}.
\end{equation}
\end{proposition}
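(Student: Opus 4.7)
The plan is to establish the two displayed equalities in turn: the first by a direct set-theoretic argument, and the second by induction on $d$.

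For the first equality, the inclusion $\subset$ is already supplied by Proposition \ref{prop: containing 2} (the CI hypothesis fits inside the RCI framework that powers the section), so the main task is the $\supset$ inclusion. The $J=\emptyset$ stratum is handled by Proposition \ref{prop: containing}, together with the identification $\mathtt{X}^{\emptyset}_{\bf v}=\mathcal{X}_{\bf v}$, and the sub-cone pieces $\Xi_{{\bf v}\setminus i}$ are also supplied by that same proposition. What remains is to show that $\LL^{J}_{{\bf v},x}\in \Xi_{\bf v}$ whenever $\emptyset\neq J\subset \{1,\dots,d\}$ and $x\in \mathtt{X}^{J}_{\bf v}$. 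The defining condition $x_k=v_k\cdot n(k)\neq 0$, combined with $\mathtt{X}^{J}_{\bf v}\subset \mathcal{X}_{\bf v}\subset \R^d_+$, forces $x_k>0$ for every $k\in J$.

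The core step is to produce a sequence $(n(j))_{j\in\N}$ in $\LL_{\bf v}$ with $\LL_{\bf v}-n(j)\to \LL^{J}_{{\bf v},x}$ in the Fell topology. Inspecting the one-sided convergence bookkeeping behind Proposition \ref{prop: containing 2}, this amounts to requiring $v_k\cdot n(j)\to x_k$ strictly from below for $k\in J$ and from above for $k\notin J$. I would arrange this using the CI hypothesis, which is precisely the density of $A_{\bf v}(\LL_{\bf v})$ in $\R^d_+$, by choosing $n(j)\in \LL_{\bf v}$ with $A_{\bf v}n(j)$ lying in the open box
\[
B_j\;=\;\prod_{k\in J}(x_k-j^{-1},x_k)\,\times\,\prod_{k\notin J}(x_k,x_k+j^{-1}),
\]
which, because $x_k>0$ for $k\in J$, is a non-empty open subset of the interior of $\R^d_+$ as soon as $j>\max_{k\in J}x_k^{-1}$. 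By construction $x_k-j^{-1}<v_k\cdot n(j)<x_k$ for $k\in J$ and $x_k<v_k\cdot n(j)<x_k+j^{-1}$ for $k\notin J$, yielding the desired one-sided convergence, whereupon Lemma \ref{Lemma J} certifies $\LL_{\bf v}-n(j)\to \LL^{J}_{{\bf v},x}$ and hence $\LL^{J}_{{\bf v},x}\in \Xi_{\bf v}$.

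For the inductive formula I would proceed by induction on $d=|{\bf v}|$, with the trivial base case $d=0$ giving $\Xi_{\emptyset}=\{\Z^D\}$. In the inductive step, I apply the first equality and expand each $\Xi_{{\bf v}\setminus i}$ via the inductive hypothesis; the RCI assumption guarantees that every such sub-tuple is either R or CI, so the same combinatorial formula (or its R-analogue) is available. A re-indexing then identifies every non-empty proper $I\subsetneq\{1,\dots,d\}$ as coming from exactly those $i\notin I$, and once combined with the $I=\{1,\dots,d\}$ contribution of the first equality and the singleton $\{\Z^D\}$ carried from the base case, the stated union emerges. The principal obstacle is arranging the simultaneous one-sided convergence of all $d$ coordinates, which is exactly what CI is there to furnish: both the ``from below'' and ``from above'' faces of each $B_j$ must be meetable with points of $A_{\bf v}(\LL_{\bf v})$. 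A secondary concern is checking that $\mathtt{X}^{J}_{{\bf v}_I}$ specialises correctly for R subtuples so that the inductive formula reads uniformly across R and CI cases.
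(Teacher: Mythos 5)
Your proposal is correct and follows essentially the same route as the paper: reduce to showing $\LL^J_{{\bf v},x}\in\Xi_{\bf v}$ via Proposition \ref{prop: containing 2}, then use the density of $A_{\bf v}(\LL_{\bf v})$ in $\R^d_+$ (the CI hypothesis) to build a sequence approaching $x$ strictly from below in the $J$-coordinates and from above in the rest, and conclude with Lemma \ref{Lemma J}; your open boxes $B_j$ are just an explicit way of producing the sequence the paper asserts. The only cosmetic point is that Lemma \ref{Lemma J} is stated for monotone coordinate sequences, so you should either pass to a monotone subsequence of your $n(j)$ or note that one-sided convergence suffices; also, under CI every subtuple ${\bf v}_I$ is again CI, so the R case you worry about in the induction never actually arises.
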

\begin{proof}
Due to Proposition~\ref{prop: containing 2} we merely need to show that every set of the form $\LL_{{\bf v},x}^J$ is an element of $\Xi_{\bf v}$.

If $\X_{\bf v}=\R^d_+$ then one can for each point $x\in \mathtt{X}_{\bf v}^{J}$ find a sequence $\{x(j)\}_{j\in \N}$ in $A_{\bf v}(\LL_{{\bf v}})$ which converges to $x$, is increasing in the components in $J$ and non-increasing in the remaining components in $\{1,\dots,d\}\setminus J$. Choosing preimages $\{n(j)\}_{j\in \N}$ under $A_{\bf v}$ the Lemma \ref{Lemma J} verifies that $\LL_{{\bf v}}-n(j)$ converges in the Fell topology to $\LL^J_{{\bf v},x}$.
\end{proof}

The other special case is the rational case:
\begin{proposition}\label{prop: RD}
If ${\bf v}$  is R the transversal hull $\Xi_{\bf v}$ is countable and  satisfies
    \begin{equation}\label{eq: RD}
        \Xi_{\bf v}\;=\;\bigsqcup_{I\in \mathtt{P}(\{1,\dots,d\})}\mathscr{O}(\LL_{{\bf v}_I})
    \end{equation}

\end{proposition}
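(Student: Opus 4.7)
The plan is to establish the two inclusions separately together with disjointness and countability. The easy direction, $\bigsqcup_I \mathscr{O}(\LL_{{\bf v}_I}) \subset \Xi_{\bf v}$, follows by iterating Proposition~\ref{prop: containing}: since R implies that every subtuple ${\bf v}_I$ is again R (the rank-$d$ lattice $A_{\bf v}(\Z^D)$ projects to a finitely generated discrete subgroup on any coordinate subspace), one obtains $\Xi_{{\bf v}_I} \subset \Xi_{\bf v}$ for every $I \in \mathtt{P}(\{1,\dots,d\})$, and hence $\mathscr{O}(\LL_{{\bf v}_I}) \subset \Xi_{{\bf v}_I} \subset \Xi_{\bf v}$.

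For the reverse inclusion I would proceed by induction on $d$, the base case $d=0$ being trivial since $\Xi_\emptyset = \{\Z^D\} = \mathscr{O}(\LL_\emptyset)$. For the inductive step, given $\Ss \in \Xi_{\bf v}$ realized as a Fell limit $\Ss = \lim_j (\LL_{\bf v} - n(j))$ with $n(j)\in \LL_{\bf v}$, I would pass to a subsequence along which each scalar sequence $\{v_k\cdot n(j)\}_j$ converges in $[0,+\infty]$, and split $\{1,\dots,d\} = K\sqcup J_\infty$ according to whether the limit is finite or $+\infty$. If $J_\infty \neq \emptyset$, Lemma~\ref{Lemma J} places $\Ss$ in $\Xi_{{\bf v}\setminus i}$ for any chosen $i\in J_\infty$, and the inductive hypothesis inserts $\Ss$ in some $\mathscr{O}(\LL_{{\bf v}_{I'}})$ with $I'\subset\{1,\dots,d\}\setminus\{i\}$. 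If $J_\infty = \emptyset$, the full vector $A_{\bf v}(n(j))$ converges in $\R^d_+$; the crucial observation is that because $A_{\bf v}(\LL_{\bf v})$ is closed and discrete by the R hypothesis, such a convergent sequence is eventually constant, say $A_{\bf v}(n(j)) = A_{\bf v}(n_0)$ for some $n_0 \in \LL_{\bf v}$, and Lemma~\ref{Lemma J} then gives $\Ss = \LL^{\emptyset}_{{\bf v},A_{\bf v}(n_0)} = \LL_{\bf v} - n_0 \in \mathscr{O}(\LL_{\bf v})$.

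Disjointness of the union would follow from a recession-cone argument: the set $\LL_{{\bf v}_I} - n$ has recession cone $\{m \in \R^D : v_k\cdot m \geq 0 \text{ for all } k\in I\}$, and since $v_1,\dots,v_d$ are linearly independent, this cone determines $I$ uniquely, so orbits with different index sets cannot overlap. Countability is then immediate since each $\mathscr{O}(\LL_{{\bf v}_I})$ is indexed by the countable set $\LL_{{\bf v}_I} \subset \Z^D$ and there are only $2^d$ subsets $I$. The main technical obstacle is precisely the eventually-constant step in the $J_\infty = \emptyset$ case; this is where the R hypothesis is essential, since in the irrational or CI regimes one can construct strictly monotone convergent sequences inside $A_{\bf v}(\LL_{\bf v})$, producing the extra limit sets $\LL^J_{{\bf v},x}$ with $J\neq \emptyset$ that enlarge the hull beyond a mere orbit union.
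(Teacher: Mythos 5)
Your proposal is correct, but it takes a genuinely different route from the paper's for the hard inclusion. The paper invokes Propositions \ref{prop: containing} and \ref{prop: containing 2} to reduce the claim to the identity $\{\LL^J_{{\bf v}_I,x}:x\in\mathtt{X}^J_{{\bf v}_I}\}=\mathscr{O}(\LL_{{\bf v}_I})$, which it proves by an explicit period shift: $\LL^J_{{\bf v}_I,x}=\LL^\emptyset_{{\bf v}_I,y}$ with $y=x-\sum_{k\in J}c_{I,k}e_k$ and $c_{I,k}$ the minimal period of $v_k\cdot\LL_{{\bf v}_I}$, so in the rational case the ``strict-inequality'' sets are themselves orbit elements. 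You instead rerun the subsequence/Lemma~\ref{Lemma J} analysis of Proposition \ref{prop: containing 2} and observe that, when all components stay bounded, the sequence $A_{\bf v}(n(j))$ is eventually constant in the closed discrete set $A_{\bf v}(\LL_{\bf v})$, so no sets with $J\neq\emptyset$ arise as limits at all. Both mechanisms exploit discreteness and both work; yours is more self-contained (it does not presuppose Proposition \ref{prop: containing 2}), while the paper's is shorter and additionally identifies exactly which orbit element each formal limit $\LL^J_{{\bf v}_I,x}$ equals. Your recession-cone disjointness argument is also fine (the paper asserts disjointness without comment).

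One step needs tightening: the claim that every subtuple ${\bf v}_I$ of an R tuple is again R, which your induction uses at every level (and which the paper also needs implicitly, since Propositions \ref{prop: containing} and \ref{prop: containing 2} assume RCI). Your parenthetical justification --- that a full-rank lattice projects to a discrete subgroup on any coordinate subspace --- is false in general: $\Z(1,0)+\Z(\sqrt{2},1)$ projects densely onto the first axis. The claim is nevertheless true here, but for a less obvious reason: finite generation of $A_{\bf v}(\LL_{\bf v})=A_{\bf v}(\Z^D)\cap\R^d_+$ forces each coordinate ray $\R_{\geq 0}e_k$ to contain a nonzero point of the lattice $A_{\bf v}(\Z^D)$ (a Gordan-type rationality condition); these axis points span a finite-index sublattice whose coordinate projections are full-rank lattices in each $\R^I$, whence $A_{{\bf v}_I}(\Z^D)$ is discrete and $A_{{\bf v}_I}(\LL_{{\bf v}_I})$ is again closed, discrete and finitely generated. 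With that lemma supplied, your argument is complete.
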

\begin{proof}
Due to the Propositions~\ref{prop: containing} and \ref{prop: containing 2} it is enough to show that for each $I\subset \{1,...,d\}$ one has
$$\{ \LL^J_{{\bf v}_I,x}\,:\,x\in \mathtt{X}^J_{{\bf v}_I}\}\;=\;\mathscr{O}(\LL_{{\bf v}_I})\;=\;\{ \LL_{{\bf v}_I,x}^\emptyset\,:\,x\in \X_{{\bf v}_I}\}.$$
The second equality reproduces exactly the definition of $\mathscr{O}(\LL_{{\bf v}_I})$. Observe that due to rationality one has for each $I$ and $k\in I$ a minimal period $c_{I,k}>0$ such that $v_k\cdot \LL_{{\bf v}_I}=c_{I,k}\Z_+$. Therefore, one can write any $\LL^J_{{\bf v}_I,x}$ in the form $\LL^\emptyset_{{\bf v}_I,y}$ for $y\in \X_{{\bf v}_I}$ given by
$$y = x - \sum_{k\in J} c_{I,k} e_k$$
with $e_k$ the unit vectors of $\R^I$. Note that $y$ lies in the semigroup $\X_{{\bf v}_I}$ since the assumption $x\in \mathtt{X}^J_{{\bf v}_I}$ includes $x_k\neq 0$ and therefore $x_k\geq c_{I,k}$.
\end{proof}
Note that the expressions \eqref{eq: CI_hull} and \eqref{eq: RD} coincide if ${\bf v}$ is $R$, hence the former can also be used in the rational case. The $\Z^D$-hulls can be computed similarly:
\begin{corollary}
If ${\bf v}$ is RCI then
\begin{equation}
\label{eq: ZD_hull} \tilde{\Xi}_{\bf v}\; =\;\bigcup_{\substack{I\subset \{1,...,d\}\\ I\neq \emptyset}}\bigcup_{J\subset I}\big\{ \LL^J_{{\bf v}_I,x}\;:\; x\in \tilde{\mathtt{X}}^J_{{\bf v}_I}\big\}\;\cup \; \{\Z^D\}
\end{equation}
with $$\tilde{\mathtt{X}}^J_{{\bf v}_I}\;=\;\big\{ x\in \R^I\;|\;\forall\,k\in J\;\exists\, n(k)\in \Z^D\;\text{such that}\; x_k=v_k\cdot n(k)\;\big\}.$$
\end{corollary}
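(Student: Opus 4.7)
The plan is to mirror the arguments used to establish Propositions~\ref{prop: containing 2}, \ref{prop: CI} and \ref{prop: RD}, but now with approximating sequences $n(j)$ ranging over all of $\Z^D$ rather than only $\LL_{\bf v}$. This relaxation has two effects: the scalars $v_k\cdot n(j)$ are no longer forced to be non-negative, so limit components may lie anywhere in $\R$ (enlarging $\X_{{\bf v}_I}$ to $\R^I$); and a component $x_k=0$ can now be approached strictly from below, which makes the auxiliary condition $x_k\ne 0$ in the definition of $\mathtt X^J_{\bf v}$ superfluous and explains its absence in $\tilde{\mathtt X}^J_{{\bf v}_I}$.

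For the ``$\subseteq$'' inclusion, I pick $\mathcal S\in\tilde\Xi_{\bf v}$, so by definition $\mathcal S\ne\emptyset$ and $\mathcal S=\lim_j(\LL_{\bf v}-n(j))$ for some $n(j)\in\Z^D$. Passing to a subsequence I arrange for each $v_k\cdot n(j)$ to converge in $\overline\R$, monotonically and of fixed direction, as in the proof of Proposition~\ref{prop: containing 2}. Divergence to $-\infty$ for any $k$ is excluded because then every fixed $p$ satisfies $v_k\cdot p+v_k\cdot n(j)<0$ eventually and is pushed out of $\LL_{\bf v}-n(j)$, contradicting $\mathcal S\ne\emptyset$. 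Partitioning $\{1,\dots,d\}$ into $I_+$ (non-increasing to a finite $x_k$), $I_-$ (strictly increasing to a finite $x_k$) and $I_\infty$ (to $+\infty$), setting $I=I_+\cup I_-$, and invoking Lemma~\ref{Lemma J} exactly as in Proposition~\ref{prop: containing 2} identifies
\[
\mathcal S\;=\;\LL^{I_-}_{{\bf v}_{I},\,x},\qquad x\;=\;\lim_j A_{{\bf v}_I}n(j)\in\R^I.
\]
When $I=\emptyset$ this is $\Z^D$; otherwise the same reclassification trick as in Proposition~\ref{prop: containing 2}---moving any $k\in I_-$ with $x_k\notin v_k(\Z^D)$ into $I_+$ without altering the set---delivers $x\in\tilde{\mathtt X}^{I_-}_{{\bf v}_{I}}$ and places $\mathcal S$ in the right-hand side of \eqref{eq: ZD_hull}.

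For the ``$\supseteq$'' inclusion, given $I\ne\emptyset$, $J\subseteq I$ and $x\in\tilde{\mathtt X}^J_{{\bf v}_I}$, I construct $n(j)\in\Z^D$ realizing $\LL^J_{{\bf v}_I,x}$ as a Fell limit. The required sequence has $v_k\cdot n(j)\to x_k$ non-increasingly on $I\setminus J$, strictly increasingly on $J$, and $v_k\cdot n(j)\to+\infty$ on $\{1,\dots,d\}\setminus I$. The RCI dichotomy on ${\bf v}_I$ provides this sequence: in the R case via finitely many shifts inside the rational cells $c_{I,k}\Z$ as in Proposition~\ref{prop: RD}, and in the CI case via density of $A_{{\bf v}_I}(\Z^D)$ in $\R^I$ combined with lattice translations in $\ker A_{{\bf v}_I}$ driving the complementary components to $+\infty$ as in Proposition~\ref{prop: CI}; this is just a $\Z^D$-adapted version of Lemma~\ref{lemma convergence}. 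Lemma~\ref{Lemma J} then promotes the construction to Fell convergence towards $\LL^J_{{\bf v}_I,x}$. The endpoint $\Z^D$ arises from any sequence with all $v_k\cdot n(j)\to+\infty$.

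The main obstacle I foresee is precisely this adaptation of Lemma~\ref{lemma convergence} to $\Z^D$, and in particular the construction of lattice directions that keep the $I$-components near $x$ while sending the others to $+\infty$. Under RCI this is mostly combinatorial bookkeeping, splitting $\{1,\dots,d\}\setminus I$ into rational and CI blocks and handling each separately; but the case $x_k=0$ for $k\in J$---available only in the $\Z^D$-hull and not in $\Xi_{\bf v}$---requires producing strictly negative approximants for $v_k\cdot n(j)$ and so has no direct analogue in the proof of Proposition~\ref{prop: CI}, which is the one step where genuine new input is needed beyond the arguments already in the paper.
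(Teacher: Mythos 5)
Your proposal is workable and arrives at the correct statement, but it takes a genuinely different and considerably more laborious route than the paper. You redo the whole case analysis of Propositions~\ref{prop: containing 2}--\ref{prop: RD} with approximating sequences ranging over $\Z^D$, which forces you to (a) extend Lemma~\ref{Lemma J} to components $v_k\cdot n(j)$ that may be negative, and (b) construct, for the ``$\supseteq$'' inclusion, sequences in $\Z^D$ whose components approach arbitrary (possibly negative or zero) limits with prescribed monotonicity --- the step you yourself flag as requiring ``genuine new input'' and leave as a sketch. The paper avoids all of this by a reduction to the already-computed $\Xi_{\bf v}$: it first observes that the right-hand side of \eqref{eq: ZD_hull} is exactly the $\Z^D$-orbit $\mathscr{O}_{\Z^D}(\Xi_{\bf v})$ (a translation bookkeeping exercise on Propositions~\ref{prop: CI} and~\ref{prop: RD}, since translating $\LL^J_{{\bf v}_I,x}$ by $m$ just shifts $x$ by $A_{{\bf v}_I}m$), and this orbit is contained in $\tilde{\Xi}_{\bf v}$ essentially by definition; for the reverse inclusion it notes that if $\LL_{\bf v}-n(j)$ has a nonempty Fell limit then each $v_k\cdot n(j)$ must stay bounded from below (otherwise the limit is empty, by the argument you also give), so a \emph{single} shift $\tilde{n}\in\Z^D$ places the entire sequence $n(j)-\tilde{n}$ inside $\LL_{\bf v}$, exhibiting the limit as a $\Z^D$-translate of a point of $\Xi_{\bf v}$. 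This one-line shift argument is precisely what disposes of your hard step: no new approximating sequences with negative components ever need to be built, only the algebraic fact that every $x\in\tilde{\mathtt{X}}^J_{{\bf v}_I}$ is of the form $y+A_{{\bf v}_I}m$ with $y\in\mathtt{X}^J_{{\bf v}_I}$. If you keep your route, you must actually carry out the $\Z^D$-adapted version of Lemma~\ref{lemma convergence} (and note that not only the case $x_k=0$ for $k\in J$ but \emph{any} negative component forces the sequence to leave $\LL_{\bf v}$); otherwise the translation trick is the cleaner and complete argument.
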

\begin{proof}
By Proposition~\ref{prop: CI} and Proposition~\ref{prop: RD} the right-hand side of \eqref{eq: ZD_hull} is nothing but the $\Z^D$-orbit $\mathscr{O}_{\Z^D}(\Xi_{\bf v})$ and we have $\tilde{\Xi}_{\bf v}\subset \mathscr{O}_{\Z^D}(\Xi_{\bf v})$ by definition of $\tilde{\Xi}_{\bf v}$. For the reverse inclusion let $(n(j))_{j\in \N}$ be a sequence in $\Z^D$ such that $\LL_{\bf v}-n(j)$ converges in the Fell topology to some $\Ss\in \mathcal{C}(\Z^D)\setminus \emptyset$. We need to prove $\Ss\in \mathscr{O}_{\Z^D}(\Xi_{\bf v})$. Note that for any $k\in \{1,...,d\}$ one must have $\sup_{j\in \N} v_k\cdot n(j) < \infty$ since otherwise $\Ss$ would be empty by a similar argument as in Lemma~\ref{Lemma J}. Therefore, there exists some $\tilde{n}\in \Z^D$ such that $\LL_{\bf v}-(n(j) -\tilde{n})$ is a convergent sequence in $\mathscr{O}(\LL_{\bf v})$, which shows $\Ss+\tilde{n}\in \Xi_{\bf v}$ and thus $\Ss\in \mathscr{O}_{\Z^D}(\Xi_{\bf v})$.
\end{proof}

In order to get for any ${\bf v}$ a convenient disjoint decomposition of $\Xi_{\bf v}$ in invariant subsets, as in \eqref{eq: RD}, let us denote by 
\begin{equation}\label{eq: clopen}
\mathcal{C}_{{\bf v}_I}\;:=\;  \Big(\bigcup_{J\in \mathtt{P}(I)}\big\{ \LL^J_{{\bf v}_I,x}\;:\; x\in \mathtt{X}^J_{{\bf v}_I}\big\}\Big)\cap \Xi_{{\bf v}_I}\;=\; \Xi_{{\bf v}_I} \setminus \left(\bigcup_{i \in I}\Xi_{{\bf v}_I\setminus i}\right)
\end{equation}
 with  $\mathcal{C}_{{\bf v}_\emptyset}=\{\Z^D\}$. The equality of the two variants of the definition follows from the fact that the union on the right-hand side of \eqref{eq: hull CRI} is disjoint.  As an intersection of invariant sets $\mathcal{C}_{{\bf v}_I}$ is a non-trivial invariant subset of $\Xi_{\bf v}$ and it is open in the relative topology of $\Xi_{{\bf v}_I}$. 
 We arrive at the following description of $\Xi_{\bf v}$:
\begin{proposition}\label{prop: general case} Let the RCI property be valid on ${\bf v}$. Then the transversal hull admits the disjoint decomposition
  \begin{equation}\label{eq: decomposition cantor}
    \Xi_{\bf v}\;=\; \bigsqcup_{I\in \mathtt{P}(\{1,\dots,d\})}\mathcal{C}_{{\bf v}_I}
\end{equation}
into invariant subsets $\mathcal{C}_{{\bf v}_I}$. This induces  a filtration of $\Xi_{\bf v}$ by closed invariant subsets
\begin{equation}\label{eq: filtration 1}
\{\Z^D\}=\Xi_0\;\subset \;\Xi_1\;\subset\; \cdots \;\subset\; \Xi_{d-1}\;\subset \;\Xi_{d}=\Xi_{\bf v}
\end{equation}
with $\Xi_r=\Xi_{\bf v}\setminus \bigsqcup_{|I|>r}\mathcal{C}_{{\bf v}_I} =\bigcup_{|I|=r}\Xi_{{\bf v}_I}$. In particular, one has the relation
\begin{equation}\label{eq: boundary set}
    \Xi_{r}\setminus \Xi_{r-1}\;=\;\bigsqcup_{|I|=r}\mathcal{C}_{{\bf v}_I}
\end{equation}
where each $\mathcal{C}_{{\bf v}_I}$ is open in $\Xi_r\setminus\Xi_{r-1}$ with the subspace topology.  
\end{proposition}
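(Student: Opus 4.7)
I would prove the three assertions in the order: (i) pairwise disjointness of the $\mathcal{C}_{{\bf v}_I}$, (ii) covering of $\Xi_{\bf v}$ by the $\mathcal{C}_{{\bf v}_I}$, and (iii) the filtration and stratification statements. Step (i) is the essential content; the other two reduce to bookkeeping once Proposition~\ref{prop: containing 2} is available.

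\textbf{Disjointness (the main obstacle).} I would associate to each $\Ss\in \Xi_{\bf v}$ the ``finite-infimum index set''
$$\mathtt{I}(\Ss)\;:=\;\big\{k\in\{1,\dots,d\}\;:\;\inf_{n\in\Ss}v_k\cdot n>-\infty\big\}$$
and prove $\mathtt{I}(\Ss)=I$ whenever $\Ss\in \mathcal{C}_{{\bf v}_I}$; this immediately forces $\mathcal{C}_{{\bf v}_I}\cap \mathcal{C}_{{\bf v}_{I'}}=\emptyset$ for $I\neq I'$. The inclusion $\mathtt{I}(\Ss)\supset I$ is immediate: applying Proposition~\ref{prop: containing 2} to ${\bf v}_I$ writes $\Ss=\LL^J_{{\bf v}_I,x}$ for some $J\subset I$ and $x\in\mathtt{X}^J_{{\bf v}_I}$, from which $\inf_{n\in\Ss} v_i\cdot n\geq -x_i$ for each $i\in I$. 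The reverse inclusion is the delicate part. For $k\notin I$, the linear independence of ${\bf v}$ places $v_k$ outside the conic hull of $\{v_i\}_{i\in I}$, so the open polyhedral cone
$$\mathcal{K}\;:=\;\big\{w\in\R^D\;:\;v_i\cdot w>0\ \forall i\in I,\ v_k\cdot w<0\big\}$$
is non-empty; being an open cone, it contains a lattice vector $w\in \mathcal{K}\cap\Z^D$ (any sufficiently large dilation of an open ball in $\mathcal{K}$ meets $\Z^D$). With $n_0=0\in\Ss$ and any $N\in\N$, the point $Nw$ still satisfies the defining inequalities of $\LL^J_{{\bf v}_I,x}$ because $v_i\cdot w>0$, hence lies in $\Ss$, while $v_k\cdot(Nw)\to -\infty$. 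Therefore $k\notin \mathtt{I}(\Ss)$, and the characterization is complete.

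\textbf{Covering.} The decomposition \eqref{eq: decomposition cantor} follows by induction on $d$. The base case is $\Xi_{\bf v}=\{\Z^D\}=\mathcal{C}_{{\bf v}_\emptyset}$; for the inductive step, the second form of \eqref{eq: clopen} (with $I=\{1,\dots,d\}$) gives
$$\Xi_{\bf v}\;=\;\mathcal{C}_{\bf v}\,\sqcup\, \bigcup_{i=1}^d \Xi_{{\bf v}\setminus i},$$
and the inductive hypothesis decomposes each $\Xi_{{\bf v}\setminus i}$ as a union of pieces $\mathcal{C}_{{\bf v}_I}$ with $I\subsetneq \{1,\dots,d\}$. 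Combining this with the disjointness step upgrades the union to the disjoint union of \eqref{eq: decomposition cantor}.

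\textbf{Filtration and strata.} Each $\Xi_{{\bf v}_I}$ is compact (it is a transversal hull), hence closed in the Hausdorff space $\Xi_{\bf v}$, and is $\LL_{\bf v}$-invariant: for any $l\in\LL_{\bf v}\subset \LL_{{\bf v}_I}$ the Fell-continuous translation $\Ss\mapsto \Ss-l$ preserves the orbit $\mathscr{O}(\LL_{{\bf v}_I})$ and hence its closure $\Xi_{{\bf v}_I}$. Consequently $\Xi_r=\bigcup_{|I|=r}\Xi_{{\bf v}_I}$ is closed and invariant, establishing \eqref{eq: filtration 1}. Applying \eqref{eq: decomposition cantor} inside $\Xi_{{\bf v}_I}$ (valid by induction on $|I|$) yields $\Xi_{{\bf v}_I}\cap \Xi_{r-1}=\bigcup_{i\in I}\Xi_{{\bf v}_I\setminus i}$ for $|I|=r$, whence $\mathcal{C}_{{\bf v}_I}=\Xi_{{\bf v}_I}\cap(\Xi_r\setminus \Xi_{r-1})$; summing over $|I|=r$ proves \eqref{eq: boundary set}. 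This last description exhibits $\mathcal{C}_{{\bf v}_I}$ as closed in $\Xi_r\setminus \Xi_{r-1}$, and openness then follows from the finiteness of the disjoint decomposition \eqref{eq: boundary set}.
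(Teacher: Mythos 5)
Your proposal is correct. Note, however, that the paper does not actually write out a proof of this proposition: it is stated as a direct consequence of Propositions~\ref{prop: containing}--\ref{prop: RD} together with the remark, asserted without argument, that the union on the right-hand side of \eqref{eq: hull CRI} is disjoint. Your contribution is precisely to supply that missing disjointness argument, and the invariant $\mathtt{I}(\Ss)=\{k:\inf_{n\in\Ss}v_k\cdot n>-\infty\}$ is a clean way to do it: elements of $\mathcal{C}_{{\bf v}_I}$ are of the form $\LL^J_{{\bf v}_I,x}$ with $x\in\R^I_+$, giving $I\subset\mathtt{I}(\Ss)$, while for $k\notin I$ your cone argument produces a direction $w\in\Z^D$ along which $\Ss$ is unbounded with $v_k\cdot w<0$, so $\mathtt{I}(\Ss)=I$ and distinct $I$ give disjoint pieces. (A small simplification: the non-emptiness of $\mathcal{K}$ follows immediately from linear independence of $\{v_i\}_{i\in I}\cup\{v_k\}$, since the map $w\mapsto\big((v_i\cdot w)_{i\in I},v_k\cdot w\big)$ is then surjective, so one can prescribe $v_i\cdot w=1$ and $v_k\cdot w=-1$ directly; the detour through the conic hull is unnecessary.) The remaining steps --- the induction giving the covering, the identity $\Xi_r=\bigsqcup_{|I|\le r}\mathcal{C}_{{\bf v}_I}=\bigcup_{|I|=r}\Xi_{{\bf v}_I}$, the closedness and invariance of $\Xi_{{\bf v}_I}$, and the openness of each $\mathcal{C}_{{\bf v}_I}$ in $\Xi_r\setminus\Xi_{r-1}$ as the complement of finitely many closed pieces --- all check out and match what the paper implicitly relies on.
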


\medskip 

The $\Z^D$-hull $\tilde{\Xi}_{\bf v}$ of $\LL_{\bf v}$ can also be filtered similarly:

\begin{corollary}\label{coro: zd hull}
   Under the RCI assumption on ${\bf v},$  the $\Z^D$-hull of a cone semigroup admits a disjoint decomposition 
   \begin{equation}\label{eq: decomposition cantor2}
    \tilde{\Xi}_{\bf v}\;=\; \bigsqcup_{I\in \mathtt{P}(\{1,\dots,d\})}\tilde{\mathcal{C}}_{{\bf v}_I}
\end{equation}
in $\Z^D$-invariant subsets. There is also a filtration by closed invariant subsets 
$$\{\Z^D\}=\tilde{\Xi}_0\subset \tilde{\Xi}_1\subset\dots\subset \tilde{\Xi}_{d-1}\subset \tilde{\Xi}_{d}=\tilde{\Xi}_{\bf v}$$
with  $\tilde{\Xi}_r=\tilde{\Xi}_{\bf v}\setminus \bigsqcup_{|I|>r}\tilde{\mathcal{C}}_{{\bf v}_I}$.
\end{corollary}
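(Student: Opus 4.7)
The plan is to mirror the proof of Proposition~\ref{prop: general case}, replacing the transversal hull by the $\Z^D$-hull throughout. I would define the strata by
$$\tilde{\mathcal{C}}_{{\bf v}_I} \;:=\; \tilde{\Xi}_{{\bf v}_I} \setminus \bigcup_{i \in I} \tilde{\Xi}_{{\bf v}_I\setminus i}$$
with the convention $\tilde{\mathcal{C}}_{{\bf v}_\emptyset} := \{\Z^D\}$. The formula~\eqref{eq: ZD_hull} is the main tool, and the proof reduces essentially to bookkeeping since the hard combinatorial work already appears in the preceding propositions.

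First I would observe that the RCI property is inherited by every sub-tuple ${\bf v}_I$, so that~\eqref{eq: ZD_hull} applies verbatim with $\{1,\dots,d\}$ replaced by $I$; this yields the inclusion $\tilde{\Xi}_{{\bf v}_I}\subset \tilde{\Xi}_{\bf v}$ together with its closedness in $\tilde{\Xi}_{\bf v}$, since $\tilde{\Xi}_{{\bf v}_I}\cup\{\emptyset\}$ is a Fell-closure inside the compact space $\mathscr{C}(\Z^D)$. The $\Z^D$-invariance of $\tilde{\Xi}_{{\bf v}_I}$ under $\tilde{\alpha}$ is immediate from its definition as a $\Z^D$-orbit closure, and consequently $\tilde{\mathcal{C}}_{{\bf v}_I}$ is also $\Z^D$-invariant as a set difference of invariant subsets.

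For the disjoint decomposition~\eqref{eq: decomposition cantor2}, I would argue that by~\eqref{eq: ZD_hull} every $\Ss\in \tilde{\Xi}_{\bf v}\setminus\{\Z^D\}$ can be written as $\LL^J_{{\bf v}_{I_0},x}$ for some (possibly non-unique) $I_0$, $J\subset I_0$, and $x\in \tilde{\mathtt{X}}^J_{{\bf v}_{I_0}}$. Picking the cardinality-minimal such $I_0$, the element $\Ss$ lies in $\tilde{\Xi}_{{\bf v}_{I_0}}$ but not in any $\tilde{\Xi}_{{\bf v}_{I_0}\setminus i}$, so $\Ss\in \tilde{\mathcal{C}}_{{\bf v}_{I_0}}$; disjointness across different $I$ is then built into the set-difference definition. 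Finally, putting $\tilde{\Xi}_r:=\bigcup_{|I|\leq r}\tilde{\Xi}_{{\bf v}_I}=\tilde{\Xi}_{\bf v}\setminus\bigsqcup_{|I|>r}\tilde{\mathcal{C}}_{{\bf v}_I}$ gives a closed filtration as a finite union of closed invariant subsets, with the inclusions $\tilde{\Xi}_{r-1}\subset \tilde{\Xi}_r$ arising because $\tilde{\Xi}_{{\bf v}_{I'}}\subset \tilde{\Xi}_{{\bf v}_{I}}$ whenever $I'\subset I$. The main obstacle is verifying that the minimal $I_0$ is genuinely unique and stable under $\tilde{\alpha}$; this, however, follows from the observation that $\tilde{\alpha}_n$ sends $\LL^J_{{\bf v}_I,x}$ to $\LL^J_{{\bf v}_I,x-A_{{\bf v}_I}n}$, leaving the combinatorial data $(I,J)$ invariant, so that the stratum assignment is a translation-equivariant property of $\Ss$ alone.
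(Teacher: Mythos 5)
The paper prints no proof of this corollary: it is meant to follow from \eqref{eq: ZD_hull} by exactly the bookkeeping you describe, and most of your proposal reconstructs that intended argument correctly. Defining $\tilde{\mathcal{C}}_{{\bf v}_I}=\tilde{\Xi}_{{\bf v}_I}\setminus\bigcup_{i\in I}\tilde{\Xi}_{{\bf v}_I\setminus i}$, noting that RCI is inherited by sub-tuples so that \eqref{eq: ZD_hull} applies to each ${\bf v}_I$, and deducing the covering property, closedness and $\Z^D$-invariance are all fine; the minimal-cardinality argument correctly shows that the strata cover $\tilde{\Xi}_{\bf v}$.

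The genuine gap is in the disjointness, which is the only step of the statement with real content. It is not ``built into the set-difference definition'': for $\Ss\in\tilde{\mathcal{C}}_{{\bf v}_{I_1}}\cap\tilde{\mathcal{C}}_{{\bf v}_{I_2}}$ with, say, $I_1=\{1,2\}$ and $I_2=\{1,3\}$, one must show $\tilde{\Xi}_{{\bf v}_{\{1,2\}}}\cap\tilde{\Xi}_{{\bf v}_{\{1,3\}}}\subset\tilde{\Xi}_{{\bf v}_{\{1\}}}$, i.e.\ that the index $I$ is an invariant of the subset $\Ss\subset\Z^D$ itself. Your closing remark does not supply this: the equivariance $\tilde{\alpha}_n(\LL^J_{{\bf v}_I,x})=\LL^J_{{\bf v}_I,x-A_{{\bf v}_I}n}$ shows that the label $(I,J)$ is preserved by translation \emph{if} it is well defined, but says nothing about whether two different labels can describe the same subset of $\Z^D$ — which is precisely the question. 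The missing observation is that $I$ is recoverable from $\Ss$ as $I=\{k:\inf_{n\in\Ss}v_k\cdot n>-\infty\}$. Indeed, the defining inequalities of $\LL^J_{{\bf v}_I,x}$ involve only $\{v_i\}_{i\in I}$, and for $k\notin I$ the set contains lattice points with $v_k\cdot n$ arbitrarily negative: by linear independence there is $w$ with $v_i\cdot w=0$ for $i\in I$ and $v_k\cdot w<0$, so rays $z_0+tw$ stay in the interior of the cone $\{z:v_i\cdot z\geq 0,\ i\in I\}$ while $v_k\cdot(z_0+tw)\to-\infty$, and such interior points capture lattice points (this is the same mechanism as in Lemma~\ref{lemma convergence}). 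With this, $\tilde{\Xi}_{{\bf v}_I}\cap\tilde{\Xi}_{{\bf v}_{I'}}\subset\tilde{\Xi}_{{\bf v}_{I\cap I'}}$ and the disjointness follows; it is also the fact the paper is implicitly invoking when it asserts that the union in \eqref{eq: hull CRI} is disjoint.
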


Notice that the space $\X_{{\bf v}}$ admits a  semigroup action of $\LL_{\bf v}$ defined via
$$(x,n)\;\mapsto x+A_{{\bf v}}n\qquad n\in \LL_{\bf v},\;x\in \X_{{\bf v}}.$$ 
Assume now that ${\bf v}$ has the property RCI. There is then, up to scaling factor, a unique $\LL_{\bf v}$-invariant Radon measure given by either the counting measure (if $\X_{{\bf v}}$ is discrete) or the restriction of the Lebesgue measure (if $\X_{{\bf v}}=\R_+^d$). For the latter, just note that any invariant measure is invariant under a dense subsemigroup of $\R^d_+$ and thus under translations by all of $\R^d_+$ by regularity.

\medskip 

Now we are ready to compute the vector space $\mathfrak{M}(\Xi_{\bf v})$ of all invariant measures on $\Xi_{\bf v}$ which are Radon measures supported on the boundaries $\Xi_r\setminus \Xi_{r-1}$. We start with the following preparatory Lemma:

\begin{lemma}\label{coro: surjective} Let ${\bf v}$ has the property RCI. Then there is a proper, continuous, surjective map $\Gamma_I\colon \mathcal{C}_{{\bf v}_I}\to \X_{{\bf v}_I}$ such that 
    \begin{equation}\label{eq: equivariant}
        \Gamma_I(\mathcal{S}-n)\;=\; \Gamma_I(\mathcal{S})+A_{{\bf v}_I}n
    \end{equation}
for all $\mathcal{S}\in \mathcal{C}_{{\bf v}_I}$ and $n\in \LL_{{\bf v}_I}$. 
\end{lemma}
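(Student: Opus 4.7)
The natural candidate is the \emph{boundary coordinate map}
\[
\Gamma_I(\mathcal{S}) \;:=\; \bigl(-\inf_{m\in \mathcal{S}} v_k\cdot m\bigr)_{k\in I}\,,
\]
and I would verify the four required properties (well-definedness into $\X_{{\bf v}_I}$, equivariance, surjectivity, continuity, properness) in that order.

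First, well-definedness and equivariance. Applying the decomposition of Proposition \ref{prop: general case} to the sub-tuple ${\bf v}_I$, every $\mathcal{S}\in \mathcal{C}_{{\bf v}_I}$ is of the form $\LL^J_{{\bf v}_I,x}$ for a (unique) $J\subset I$ and $x\in \mathtt{X}^J_{{\bf v}_I}\subset \X_{{\bf v}_I}$. Reading off the inequalities in \eqref{eq: j semigroup} one sees $\inf_{m\in \mathcal{S}} v_k\cdot m = -x_k$ for every $k\in I$, so $\Gamma_I(\mathcal{S})=x\in \X_{{\bf v}_I}$ is finite and well-defined. Equivariance \eqref{eq: equivariant} is then immediate from the linearity of $m\mapsto v_k\cdot m$. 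For surjectivity I would exhibit the explicit section $x\mapsto \LL^\emptyset_{{\bf v}_I,x}$: by Proposition \ref{prop: containing} this set lies in $\Xi_{{\bf v}_I}$, and since every component of $x$ is finite it imposes a non-trivial restriction in each direction $v_i$, hence it does not belong to any $\Xi_{{\bf v}_I\setminus i}$ and therefore lies in $\mathcal{C}_{{\bf v}_I}$.

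The core analytic content sits in continuity and properness; both reduce to the assertion that on $\mathcal{C}_{{\bf v}_I}$ Fell-convergence $\mathcal{S}_n\to \mathcal{S}$ is equivalent to componentwise convergence of the parameters $x_n:=\Gamma_I(\mathcal{S}_n)$ in $\X_{{\bf v}_I}$. One direction is exactly the content of Lemma~J already used in Propositions \ref{prop: containing} and \ref{prop: containing 2}: a sequence $\LL_{{\bf v}_I}-n(j)$ with $v_k\cdot n(j)\to \tilde{x}_k\in \overline{\R}$ for each $k\in I$ has Fell limit determined by $\tilde{x}$ and the monotonicity pattern of the $v_k\cdot n(j)$, and moreover a coordinate $\tilde{x}_k=+\infty$ forces the limit into $\Xi_{{\bf v}_I\setminus k}$. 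The reverse direction is the standard hit-and-miss argument applied to finite patches of $\Z^D$ that witness the infima.

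Given this, continuity follows: if $\mathcal{S}_n\to \mathcal{S}$ in $\mathcal{C}_{{\bf v}_I}$, any subsequential limit $\tilde{x}$ of $(x_n)$ in $[0,+\infty]^I$ must have every coordinate finite — otherwise the Fell limit would lie in some $\Xi_{{\bf v}_I\setminus i}$, contradicting $\mathcal{S}\in \mathcal{C}_{{\bf v}_I}$ — and the uniqueness of the representation afforded by \eqref{eq: decomposition cantor} forces $\tilde{x}=\Gamma_I(\mathcal{S})$. For properness, given a compact $K\subset \X_{{\bf v}_I}$ and a sequence $\mathcal{S}_n\in \Gamma_I^{-1}(K)$, I would extract a Fell-convergent subsequence in the compact space $\Xi_{{\bf v}_I}$; boundedness of $x_n\in K$ rules out escape into any $\Xi_{{\bf v}_I\setminus i}$ by the same Lemma, so the limit still belongs to $\mathcal{C}_{{\bf v}_I}$, and closedness of $K$ together with continuity place it in $\Gamma_I^{-1}(K)$.

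The one step that requires genuine care is the claim that no component of $x_n$ can escape to $+\infty$ when $\mathcal{S}_n$ converges inside $\mathcal{C}_{{\bf v}_I}$; this is the place where all of the structural information encoded in the filtration \eqref{eq: filtration 1} is actually used, and where Lemma~J (rather than any softer semigroup argument) is indispensable. Everything else is essentially a bookkeeping exercise translating between the discrete datum $\mathcal{S}$ and the continuous datum $x$.
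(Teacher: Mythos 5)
Your proposal is correct and follows essentially the same route as the paper: the same infimum-coordinate map $\Gamma_I(\mathcal{S})=(-\inf_{m\in\mathcal{S}}v_k\cdot m)_{k\in I}$, surjectivity via the section $x\mapsto \LL^\emptyset_{{\bf v}_I,x}$, continuity by the hit-and-miss/density argument, and properness via compactness of $\Xi_{{\bf v}_I}$ together with ruling out escape into the sets $\Xi_{{\bf v}_I\setminus i}$. The only cosmetic difference is that the paper treats the discrete (rational) case separately, where $\Gamma_I$ is simply an equivariant homeomorphism of discrete spaces, whereas your single formula covers both cases at once.
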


\begin{proof}
For the sake of notational simplicity, let us remove the $I$-dependence throughout this proof. There are two cases, either $\X_{{\bf v}}$ is discrete or $\X_{\bf v}=\R^{d}_+$. If  $\X_{{\bf v}}$ is discrete define $\Gamma$ as the bijective correspondence
$$\mathcal{C}_{{\bf v}}=\mathscr{O}(\LL_{{\bf v}})\ni \LL_{{\bf v}}-n\;\mapsto \; A_{{\bf v}}n.$$
It is equivariant, and since both spaces are discrete, it is an equivariant homeomorphism. 

In the case $\X_{\bf v}=\R^{d}_+$ define $\Gamma(\mathcal{S})=x\in \R^d_+$ where the coordinates of the vector $x$ are given by
$$x_k\;:=\;-\inf_{n\in \mathcal{S}} v_k\cdot n\;\geq\;0$$
Since $A_{\bf v}(\LL_{{\bf v},x}^J)$ is dense in $x+ \X_{\bf v}$ for any $J$ the map $\Gamma$ is well-defined on $\mathcal{C}_{\bf v}$ and already uniquely determined by $\Gamma(\LL_{{\bf v},x}^J)=x$. Observe also that $\Gamma$ is surjective by Proposition \ref{prop: containing} and clearly satisfies the equivariance condition given in \eqref{eq: equivariant}. To check sequential continuity we consider a convergent sequence in $\mathcal{C}_{{\bf v}}$, which always takes the form $\LL^{J(j)}_{{\bf v},x(j)} \to \LL^J_{{\bf v},x}$ with $x(j) \in \R^d_+$ and $j \in \mathbb{N}$.   Since $A_{\bf v}(\Z^D)$ is dense in $\R^d$  there exist for any $\epsilon> 0$ some $n'\in \Z^D$ 
 such that 
$$\epsilon\geq v_k\cdot n'+ x_k> 0$$ for all $1\leq k\leq d$, hence $n'\in \LL^J_{{\bf v},x}$. By convergence in the Fell topology one must also have  $n'\in \LL^{J(j)}_{{\bf v}, x(j)}$ for all large enough $j$ and thus
\begin{equation}\label{eq: ineq 1}
  \epsilon-x_k+x_k(j)\;\geq \; v_k\cdot n'+x_k(j)\;\geq\; 0. 
\end{equation}
We have  $x_k(j) \geq 0$ for all $k=1,...,d$. Fix now some $l\in \{1,...,d\}$ and use the density again to pick some $n''\in \Z^D$ such that 
$$v_k\cdot n''+x_k(j)\;\geq\;v_k\cdot n''\;\geq\;0\qquad \forall\, l\neq k$$
$$v_l\cdot n''+x_l\;<\;0\;<\;v_l\cdot n''+x_l+\epsilon.$$
Since $n''\notin \LL_{{\bf v},x}^J$ one must also have $n''\notin \LL_{{\bf v},x(j)}^{J(j)}$ for all large enough $j$. As a conclusion, 
$$v_l\cdot n''+x_l(j)\;\leq\;0\;<\;v_l\cdot n''+x_l+\epsilon$$
Together with \eqref{eq: ineq 1} this verifies that $\lvert x_l-x_l(j)\rvert\leq \epsilon$ and therefore, since $l$ was arbitrary, the continuity of $\Gamma$. 

It remains to prove that $\Gamma$ is a proper map. Let $B \subset \R^d_+$ be a closed bounded set and put $K = \Gamma^{-1}(B)$. By continuity, $K$ is a closed subset of $\mathcal{C}_{{\bf v}}$ and it consists of precisely all elements $\LL^J_{{\bf v}, x}\in \mathcal{C}_{{\bf v}}$ for which $x\in B$. For compactness, it is enough to check that every sequence $\mathcal{S}_j:=\LL^{J(j)}_{{\bf v}, x(j)}$ with $x(j)\in B$ has a limit point in $K$. Due to compactness of $\Xi_{\bf v}$ it has a limit point $\mathcal{S}\in \Xi_{\bf v}.$ As in the proof of Proposition~\ref{prop: containing 2} one can go over to a monotonous subsequence and since $x(j)$ is uniformly bounded the same reasoning shows that the limit point must be of the form $\mathcal{S}=\LL_{{\bf v},x}^J\in \mathcal{C}_{\bf v}$ for some  $J$ and $x\in B$, thus $\mathcal{S}\in K$. In summary, $\Gamma$ is a proper, continuous, surjective, and equivariant map.

\end{proof}

Denote by $\mathfrak{R}(X)$ the vector space of $\LL$-invariant Radon measures on a topological space $X$ endowed with semigroup action by $\LL$. We now present the proof of our first main result.

\medskip
\noindent
{\bf Proof of the Theorem \ref{teo measures}:}
\begin{proof}
Since $\mathcal{C}_{{\bf v}_I}$ is open in $\Xi_r\setminus \Xi_{r-1}$ with $|I|=r$, the decomposition of $\Xi_{\bf v}$ in invariant subsets given in \eqref{eq: decomposition cantor} implies that
    $$\mathfrak{M}(\Xi_{\bf v})\;=\;\bigoplus _{I\in \mathtt{P}(\{1,\dots,d\})}\mathfrak{R}(\mathcal{C}_{{\bf v}_I})$$
    Thus, to conclude the proof, it is enough to show that $|\mathfrak{R}(\mathcal{C}_{{\bf v}_I})|=1$ for all $I$. Let $\mu$ be an invariant Radon measure on $\mathcal{C}_{{\bf v}_I}.$ Then the pushforward measure $\Gamma_{I_*}(\mu)$ is an invariant Radon measure on $\X_{{\bf v}_I}$ by Lemma \ref{coro: surjective}. We claim that $\mu\mapsto \Gamma_{I_*}(\mu)$ is an isomorphism of vector spaces and  $|\mathfrak{R}(\mathcal{C}_{{\bf v}_I})|=|\mathfrak{R}(\X_{{\bf v}_I})|=1$. 

We will now drop the subscript $I$. In the discrete case, the isomorphism is obvious since $\Gamma$ is an equivariant homeomorphism. Let us therefore assume the dense case $\X_{{\bf v}}=\R_+^{d}$. Let $\mu$ be a signed invariant Radon measure on $\mathcal{C}_{{\bf v}}$. For $J\in \mathtt{P}(\{1,...,d\})$ define $$A^J \;=\; \big\{\LL^J_{{\bf v},x}: x \in \mathtt{X}^J_{{\bf v}}\big\}, \qquad \mu^J(A)\;:=\; \mu(A \cap A^J),$$ which are invariant measurable sets and invariant Radon measures on $\mathcal{C}_{\bf v}$, respectively. Since the Lebesgue measure $\nu$ is the unique translation-invariant Radon measure, one must have $\Gamma_*\mu^J= c_J \nu$ for some constants. Note, however, that $\Gamma_*\mu^J$ is supported in $\mathtt{X}^J_{{\bf v}}$, which has Lebesgue measure $0$ for $J\neq \emptyset$, since it is contained in a countable union of $(d-1)$-dimensional hyperplanes. Therefore, $\Gamma_*\mu^J = 0$ for $J\neq \emptyset$. Since $\Gamma\rvert_{A^J}$ is injective, this implies $\mu^J=0$ and hence $A^J$ has $\mu$-measure $0$ as well. We conclude that $\mu = \mu^{\emptyset}$ because the sets $(A^J)_{J\in \mathtt{P}(\{1,...,d\})}$ cover $\mathcal{C}_{\bf v}$. If $\Gamma_*\mu=0=\Gamma_*\mu^\emptyset$ it follows that $\mu^\emptyset = 0$, since $\Gamma\rvert_{A^\emptyset}$ is also injective, and hence finally $\mu=0$, showing injectivity.

For surjectivity we construct an explicit inverse to $\Gamma_*$ given by the pull-back $\Gamma^*$, mapping a measure on $\X_{\bf v}$ to a set-valued function on $\mathcal{C}_{\bf v}$ defined as
\begin{equation}\label{eq: measure inverse}
    (\Gamma^*\nu)(A) \;=\; \nu(\Gamma(A))
\end{equation}
for each Borel set $A\subset \mathcal{C}_{\bf v}$. Let us prove that this pull-back measure is well-defined. First of all, $\Gamma(A)$ is measurable for any Borel set, since $\Gamma$ is a continuous map and satisfies the conditions of \cite[2.2.13]{FED}\footnote{$\mathcal{C}_{{\bf v}}$ is a Polish space since it is an open subset of $\Xi_{{\bf v}}$}. Let us note that $\Gamma$ is almost injective in the sense that there is a set $N\subset \X_{\bf v}$ of Lebesgue-measure $0$ such that $\Gamma\rvert_{\mathcal{C}_{\bf v} \setminus \Gamma^{-1}(N)}$ is injective, namely we can use $N=\bigcup_{J \in \mathtt{P}(\{1,...,d\})\setminus \emptyset}\mathtt{X}^J_{\bf v}$. It is then standard to prove that $\Gamma^*\nu$ is a countably additive measure whenever $\nu$ is absolutely continuous w.r.t. the Lebesgue measure. Indeed, $\Gamma^*\nu$ is then nothing but the push-forward of $\nu$ w.r.t. the proper map $(\Gamma\rvert_{\mathcal{C}_{\bf v} \setminus \Gamma^{-1}(N)})^{-1}$. Since $\Gamma$ maps compact sets to compact sets, $\Gamma^*\nu$ is also locally finite and, therefore, inner and outer regular (since $\mathcal{C}_{\bf v}$ is a separable locally compact space and hence $\sigma$-compact).

Finally, observe that the set of ergodic probability measures is contained in $\mathfrak{M}(\Xi_{\bf v})$, and since the basis is given by infinite measures except for the Dirac measure  $\mu_\emptyset$ on the single point $\{\Z^D\},$  then the unique ergodic probability measure on $\Xi_{\bf v}$ is $\mu_\emptyset.$ 
\end{proof}

\begin{remark}\label{rem: basis of measure}
    An explicit base for $\mathfrak{M}(\Xi_{\bf v})$ is given by the set of  measures $\{\mu_I\}_{I\in \mathtt{P}(\{1,\dots,d\})}$ with normalization chosen so that if $\X_{{\bf v}_I}$ is discrete then
$$\mu_I(A)\;:=\;{{\rm Vol}\big(\R^{I}/\tilde{\X}_{{\bf v}_I}\big)}\nu_I(\Gamma(A))$$
with $A\subset \mathcal{C}_{{\bf v}_I}$ and $\nu_I$ the counting measure on $\Z^{I}_+$. If  $\X_{{\bf v}_I}=\R^{I}_+$ then $\Gamma_*(\mu_I)=\nu_I$ with  $\nu_I$ the normalized Lebesgue measure on $\R^{I}_+$.
\hfill $\blacktriangleleft$
\end{remark}
\begin{remark}\label{rem: basis of measure zd}
    Let $\mathfrak{M}(\tilde{\Xi}_{\bf v})$ be the vector space of those $\Z^D$-invariant Borel measures on $\tilde{\Xi}_{\bf v}$ which are trivial extensions of Radon measures on some $\tilde{\Xi}_r\setminus \tilde{\Xi}_{r-1}$. Under the same assumptions as Theorem~\ref{teo measures} this space also has dimension $2^d$ and  a basis is provided by the set $\{ \tilde{\mu}_I\}_{I\in \mathtt{P}(\{1,\dots,d\})}$ where each $\tilde{\mu}_I$ is uniquely determined by the relation $\tilde{\mu}_I|_{\Xi_{\bf v}}=\mu_I.$
\hfill $\blacktriangleleft$
\end{remark}

\section{Chern numbers in the cone geometry} \label{Sec: chern}
This section is devoted to the proof of Theorem \ref{Theorem 2}. Here, we shall construct the correct nontrivial Chern cocycle on the boundary algebras $\mathfrak{I}_r$, defined by the trace per unit hypersurface induced by the measures computed in the previous section.

\medskip

Consider the \emph{cone} semigroup $C^*$-algebra $\A_{\bf v}:=C^*_r(\LL_{\bf v})$ with $\bf v$ displaying the RCI property. This admits the cofiltration
\begin{equation}
    \A_{\bf v}\;=\;\A_{d}\stackrel{\psi_d}\to \A_{d-1}\stackrel{\psi_{d-1}}\to \cdots \to \A_1\stackrel{\psi_1}\to \A_0\;=\;C^*_r(\Z^D)
\end{equation}
induced by the filtration \eqref{eq: filtration 1} of $\Xi_{\bf v}$. Here $\A_{r}:=C^*_r(\LL_{\bf v}|_{\Xi_r})$ and $\psi_r$ is the surjective $*$-homomorphism given by the restriction from $\Xi_r$ to $\Xi_{r-1}$.  The codimension-$r$ boundary algebra is then defined as $  \mathfrak{I}_r={\rm Ker}(\psi_r)$ and $\mathfrak{I}_r \simeq C^*_r(\LL_{\bf v}|_{\Xi_r\setminus\Xi_{r-1}})$, according to \ref{Prop:ES}. From the construction, we have the following short exact sequences
\begin{equation}\label{eq: seq 2}
0\to \mathfrak{I}_r\to \A_r\to \A_{r-1}\to 0
\end{equation}
Observe that $\mathfrak{I}_r$ can be decomposed in direct sum as
\begin{equation}\label{eq: edge algebra}
  \mathfrak{I}_r\;=\;\bigoplus_{|I|=r} \mathfrak{I}_I  
\end{equation}
where $\mathfrak{I}_I:=C^*_r(\LL_{\bf v}|_{\mathcal{C}_{{\bf v}_I}})$ is a closed ideal of $\A_r$ by Proposition \ref{prop: general case}.  Define the linear functional on $\mathfrak{I}_I$ given by 
\begin{equation}
    \Tt_I(f)\;:=\;\int_{\mathcal{C}_{{\bf v}_I}}E(f)(x){\rm d}\mu_I(x),\qquad f\in \mathfrak{I}_I
\end{equation}
where $\mu_I$ is the unique invariant Radon measure on $\mathcal{C}_{{\bf v}_I}$ defined in Remark \ref{rem: basis of measure}. As a consequence of the properties of $\mu_I$, this functional satisfies:
\begin{proposition}\label{prop: traces}
 $\Tt_I$ is a densely defined trace on $\mathfrak{I}_I$ which is faithful and lower semicontinuous.
\end{proposition}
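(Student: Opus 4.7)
The plan is to verify the four claims---densely defined, trace, faithful, lower semicontinuous---by exploiting the standard realization of $\mathfrak{I}_I \simeq C^*_r(\LL_{\bf v}|_{\mathcal{C}_{{\bf v}_I}})$ as a reduced étale groupoid $C^*$-algebra. The natural dense $\ast$-subalgebra is $C_c(\LL_{\bf v}|_{\mathcal{C}_{{\bf v}_I}})$; for $f$ there, $E(f) \in C_c(\mathcal{C}_{{\bf v}_I})$ and the integral against the Radon measure $\mu_I$ is automatically finite, which gives a densely defined linear functional that extends in the standard way to a weight $\Tt_I\colon \mathfrak{I}_I^+ \to [0,\infty]$.

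For the trace property, I would carry out the routine étale-groupoid computation on $C_c$-elements: writing
\[
E(fg)(x) \;=\; \sum_{\gamma\colon \mathfrak{r}(\gamma)=x} f(\gamma)\,g(\gamma^{-1}), \qquad E(gf)(x) \;=\; \sum_{\gamma\colon \mathfrak{s}(\gamma)=x} g(\gamma^{-1})\,f(\gamma),
\]
the identity $\Tt_I(fg)=\Tt_I(gf)$ reduces to the equality $\int \sum_{\mathfrak{r}(\gamma)=x} h(\gamma)\,{\rm d}\mu_I(x) = \int \sum_{\mathfrak{s}(\gamma)=x} h(\gamma)\,{\rm d}\mu_I(x)$ applied to $h(\gamma) = f(\gamma)g(\gamma^{-1})$, which is precisely the groupoid-invariance of $\mu_I$ recalled in Section~2. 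Positivity on $\mathfrak{I}_I^+$ is immediate from $E(f^*f)(x)=\sum_{\mathfrak{s}(\gamma)=x}|f(\gamma)|^2\geq 0$.

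Faithfulness is the main obstacle. If $\Tt_I(f^*f)=0$, then the non-negative continuous function $E(f^*f)$ vanishes $\mu_I$-a.e.\ and hence on $\supp(\mu_I)$, so the crux is to prove $\supp(\mu_I)=\mathcal{C}_{{\bf v}_I}$; faithfulness of the canonical conditional expectation of a reduced étale groupoid $C^*$-algebra then forces $f=0$. When $\X_{{\bf v}_I}$ is discrete this is immediate since $\mu_I$ is a multiple of counting measure on $\mathcal{C}_{{\bf v}_I} = \mathscr{O}(\LL_{{\bf v}_I})$. In the dense case $\X_{{\bf v}_I}=\R_+^{|I|}$, I would argue that for any basic Fell-open neighborhood $V$ of a point $\LL^\emptyset_{{\bf v}_I, x_0}\in A^\emptyset$, after a small perturbation of $x_0$ off the finite union of hyperplanes $\{v_k\cdot n + y_k = 0\}$ indexed by the finitely many lattice points specifying the cylinder, $V$ contains $\{\LL^\emptyset_{{\bf v}_I, x}\colon x \in W\}$ for some open Euclidean ball $W$ about $x_0$; the pullback construction of $\mu_I$ via $\Gamma_I$ (Lemma~\ref{coro: surjective}) then yields $\mu_I(V)\geq \nu_I(W)>0$. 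Density of $A^\emptyset$ in $\mathcal{C}_{{\bf v}_I}$ (Proposition~\ref{prop: CI}) combined with density of the generic points inside $A^\emptyset$ (their complement sits in a countable union of hyperplanes) then delivers full support.

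For lower semicontinuity, it suffices to check $\liminf \Tt_I(f_n)\geq \Tt_I(f)$ whenever $f_n\to f$ in $\mathfrak{I}_I^+$ in norm. Since $E$ is norm-contractive, $E(f_n)\to E(f)$ uniformly with $E(f_n)\geq 0$, so Fatou's lemma applied to the Radon measure $\mu_I$ gives $\Tt_I(f)=\int E(f)\,{\rm d}\mu_I \leq \liminf\int E(f_n)\,{\rm d}\mu_I = \liminf \Tt_I(f_n)$, which is the required notion of lower semicontinuity for a trace on a $C^*$-algebra.
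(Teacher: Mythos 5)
Your proof is correct and supplies exactly the standard argument that the paper leaves implicit: the paper gives no proof of Proposition \ref{prop: traces}, asserting it ``as a consequence of the properties of $\mu_I$,'' i.e.\ of the general fact that an invariant Radon measure on the unit space of an \'etale groupoid induces a densely defined, lower semicontinuous trace via the conditional expectation. The one point that is genuinely specific to this setting --- faithfulness, which requires $\supp(\mu_I)=\mathcal{C}_{{\bf v}_I}$ --- you identify and handle correctly in both the discrete and the completely irrational case, so nothing is missing.
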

 The real-space representation of $\Tt_I$  and its interpretation as a trace-per-surface-area will be discussed in Section \ref{sec: trace}.
\medskip

We are now in place to define the Chern cocycles on $\mathfrak{I}_r$. Consider  a tuple of normalized vectors ${\bf w}=( w_1,w_2,\dots,w_m)$ in $\R^D$ and define the $(m+1)$-linear functional
\begin{equation}
    {\rm Ch}_{I,{\bf w}}(f_0,f_1,\dots, f_m)\;:=\;\sum_{\rho\in S_m}(-1)^\rho\Tt_I\big(f_0\nabla_{w_{\rho(1)}}f_1\cdots \nabla_{w_{\rho(m)}}f_m\big)
\end{equation}
which is well-defined for any $(m+1)$-tuple of elements $f_k$ in a suitable dense subalgebra of $\mathfrak{I}_I$ consisting of elements in the domain of both $\nabla$ and $\Tt_I$. Here, $S_m$ is the symmetric group of $m$ elements, $(-1)^\rho$ stands for the sign of $\rho$ and the directional derivatives are defined according to
\begin{equation}\label{eq: deriva}
    \nabla_vf\;:=\; v\cdot \nabla f\;=\;\sum_{i=1}^Dv_i\nabla_if
\end{equation}
with $\nabla_if=\ii[\mathfrak{n}_i,f]$ and $\mathfrak{n}_i$ the position operator in the direction $i$ on $\ell^2(\LL_{\bf v})$. Thanks to the properties of the trace $\Tt_I$, it follows that ${\rm Ch}_{I,{\bf w}}$ defines a cyclic $m$-cocyle on $\mathfrak{I}_I$ \cite{Con}. For even $m$, this Chern cocycle pairs with the group $K_0(\mathfrak{I}_I)$ via
$$\langle [p]_0, [{\rm Ch}_{I, {\bf w}}]\rangle \;=\;\frac{1}{(m/2)!(-2\pi \imath)^{m/2}}\;{\rm Ch}_{I, {\bf w}}(p,\dots,p)$$
and for odd $m$ with $K_1(\mathfrak{I}_I)$ via 
$$\langle [u]_1, [{\rm Ch}_{I, {\bf w}}]\rangle \;=\;\frac{\imath^{(m+1)/2}}{m!!(-2)^m\pi^{(m+1)/2}}\,{\rm Ch}_{I, {\bf w}}(u^{-1}-1, u-1,u^{-1}-1,\dots,u-1).$$

These pairings are group homomorphisms with respect to $K_*(\mathfrak{I}_I)$ and depend only on the cohomology class of the $m$-cycle ${\rm Ch}_{I,{\bf w}}$ \cite{Con}.

\medskip
   Define the \emph{extended} cone algebra as the crossed product $\tilde{\A}_{\bf v}:=C_0(\tilde{\Xi}_{\bf v})\rtimes\Z^D$ where $\tilde{\Xi}_{\bf v}$ is provided in Corollary \ref{coro: zd hull}.  This algebra also admits a cofiltration
   \begin{equation}\label{eq: extended cofiltrarion}
    \tilde{\A}_{\bf v}=\tilde{\A}_{d}\stackrel{\tilde{\psi}_d}\to \tilde{\A}_{d-1}\stackrel{\tilde{\psi}_{d-1}}\to \cdots \to \tilde{\A}_1\stackrel{\tilde{\psi}_1}\to \tilde{\A}_0=C^*_r(\Z^D)
\end{equation}
with $\tilde{\A}_r=C_0(\tilde{\Xi}_r)\rtimes \Z^D$ and ideals $\tilde{\mathfrak{I}}_r:={\rm Ker}(\tilde{\psi}_r).$ Similarly, one has the decomposition $\tilde{\mathfrak{I}}_r=\bigoplus_{|I|=r}\tilde{\mathfrak{I}}_I$, where $\tilde{\mathfrak{I}}_I:=C_0(\tilde{\mathcal{C}}_{{\bf v}_I})\rtimes \Z^D$ are closed ideals of $\tilde{\A}_r$ with a unique tracial weight defined via
\begin{equation}
    \tilde{\Tt}_I(f)\;:=\;\int_{\tilde{\mathcal{C}}_{{\bf v}_I}}E(f)(x){\rm d}\tilde{\mu}_I(x),\qquad f\in \tilde{\mathfrak{I}}_I\,.
\end{equation}
Here $\tilde{\mu}_I$ is the measure in Remark \ref{rem: basis of measure zd}. Thus, the natural Chern cocycle on $\tilde{\mathfrak{I}}_I$ is
\begin{equation}
    \tilde{\rm Ch}_{I,{\bf w}}(f_0,f_1,\dots, f_m)\;:=\;\sum_{\rho\in S_m}(-1)^\rho\tilde{\Tt}_I\big(f_0\nabla_{w_{\rho(1)}}f_1\cdots \nabla_{w_{\rho(m)}}f_m\big)
\end{equation}

The next Proposition will complete the Proof of Theorem \ref{Theorem 2}:
\begin{proposition}\label{prop: numerical invariants}
 For any linearly independent tuple $(w_1,...,w_m)$ of vectors orthogonal to ${\bf v}_I$ there is an elements$ [\zeta]_i\in K_i(\mathfrak{I}_I)$, $i= m\, \mathrm { mod }\,2$, such that the pairing $\langle [\zeta]_i,[{\rm Ch}_{I,{\bf w}}]\rangle$ does not vanish.
\end{proposition}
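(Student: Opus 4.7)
I would proceed in three stages: transport the pairing via Morita equivalence to the extended ideal $\tilde{\mathfrak{I}}_I$, reduce the Chern cocycle to a closed determinantal form on the dense crossed-product subalgebra, and then exhibit an explicit $K$-theory class by embedding (or asymptotically embedding) a noncommutative torus into $\tilde{\mathfrak{I}}_I$.

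\emph{Reduction to the extended algebra.} By Theorem~\ref{Teo morita} the full projection $p\in C_0(\tilde{\Xi}_{\bf v})$ implements a Morita equivalence between $\A_{\bf v}$ and $\tilde{\A}_{\bf v}$ that restricts to a Morita equivalence between $\mathfrak{I}_I$ and $\tilde{\mathfrak{I}}_I$; by Remark~\ref{rem: basis of measure zd} it carries $\Tt_I$ to a positive scalar multiple of $\tilde{\Tt}_I$, and the derivations $\nabla_{w_k}$ extend compatibly to both sides. Hence the pairings with $[{\rm Ch}_{I,{\bf w}}]$ and $[\tilde{\rm Ch}_{I,{\bf w}}]$ correspond under the induced isomorphism $K_i(\mathfrak{I}_I)\simeq K_i(\tilde{\mathfrak{I}}_I)$, so it suffices to construct a non-trivial $K$-theory class for the extended cocycle.

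\emph{Closed form of the cocycle.} On the dense $*$-subalgebra of $\tilde{\mathfrak{I}}_I$ spanned by monomials $fu^n$ with $f\in C_c(\tilde{\mathcal{C}}_{{\bf v}_I})$ and $n\in\Z^D$, the derivations act diagonally, $\nabla_{w_k}(fu^n)$ being a scalar multiple of $(w_k\cdot n)\,fu^n$. Antisymmetrising over $\rho\in S_m$ yields, up to a non-zero numerical prefactor,
\[
\tilde{\rm Ch}_{I,{\bf w}}(f_0u^{n_0},\dots,f_m u^{n_m})
\;=\;\det\bigl(w_j\cdot n_k\bigr)_{j,k=1}^{m}\;\tilde{\Tt}_I\bigl(f_0 u^{n_0}\cdots f_m u^{n_m}\bigr).
\]
Linear independence of the $w_k$ together with the $\Q$-rank $D$ of $\Z^D$ guarantees the existence of $n_1,\dots,n_m\in\Z^D$ with $\det(w_j\cdot n_k)\neq 0$.

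\emph{Non-trivial $K$-theory class.} With such $n_k$ and a bump function $\chi\in C_c(\tilde{\mathcal{C}}_{{\bf v}_I})$, the elements $\chi$ and $u^{n_1},\dots,u^{n_m}$ generate in a hereditary corner of $\tilde{\mathfrak{I}}_I$ a copy of a noncommutative $m$-torus $A_\Theta$—exactly in the rational case, where the $n_k$ may be chosen in $\Z^D\cap {\bf v}_I^\perp$ so that $u^{n_k}$ stabilises $\chi$ literally, and asymptotically in the CI case via $\chi$ of increasing support. On this corner $\tilde{\rm Ch}_{I,{\bf w}}$ pulls back to a non-zero scalar multiple of Connes' top-degree Chern cocycle on $A_\Theta$, which pairs non-trivially with the Powers--Rieffel projection (for $m$ even) or a generating unitary (for $m$ odd) by the classical computations of Connes and Pimsner--Voiculescu. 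Pulling the resulting class back along the Morita equivalence supplies the desired $[\zeta]_i\in K_i(\mathfrak{I}_I)$. The principal technical obstacle is the CI case, where $\Z^D\cap {\bf v}_I^\perp$ may be trivial so that the embedding of $A_\Theta$ is only asymptotic: the cleanest resolution is to reinterpret the pairing as the index of a Dirac-type operator assembled from $\nabla_{w_1},\dots,\nabla_{w_m}$ and to invoke Connes' index formula, which computes it directly as the non-zero determinant above.
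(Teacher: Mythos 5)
Your first two steps (passing to $\tilde{\mathfrak{I}}_I$ via the Morita equivalence of Theorem~\ref{Teo morita} and antisymmetrising the cocycle into the determinant $\det(w_j\cdot n_k)$ on monomials) are sound, and the first step matches what the paper does. The gap is in your third step, and it sits exactly in the case the proposition is really about. When $\Z^D\cap {\bf v}_I^{\perp}$ is trivial --- the generic completely irrational situation, e.g.\ the irrational half-plane --- there is no non-zero $n$ with $A_{{\bf v}_I}n=0$, so every relevant $u^{n}$ translates the support of any $\chi\in C_c(\tilde{\mathcal{C}}_{{\bf v}_I})$ and no honest copy of a (non)commutative torus sits inside a corner of $\tilde{\mathfrak{I}}_I$ in the way you describe. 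Neither of your proposed repairs closes this: an ``asymptotic embedding via $\chi$ of increasing support'' does not by itself produce an element of $K_i(\tilde{\mathfrak{I}}_I)$ (there is no inductive-limit structure of nested $*$-subalgebras here to take a limit over), and appealing to a Dirac-type index formula presupposes both the Fredholm module and the $K$-theory class whose pairing you are trying to show is non-zero --- it restates the problem rather than solving it.

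The paper closes this gap by a different mechanism: it introduces the smooth ideal $\mathcal{I}_I=C_0(\R^{I})\rtimes_{\beta}\Z^D\simeq C(\T^D)\rtimes\R^{I}$, where non-vanishing of the smooth Chern pairing is already known from \cite{DET2} via the Connes--Thom isomorphism (Proposition~\ref{prop: num}), and then uses the proper equivariant surjection $\tilde{\Gamma}_I\colon\tilde{\mathcal{C}}_{{\bf v}_I}\to\R^{I}$ coming from Lemma~\ref{coro: surjective} to obtain an injective $*$-homomorphism $\iota'\colon\mathcal{I}_I\to\tilde{\mathfrak{I}}_I$ which is trace-preserving because $\nu_I=(\tilde{\Gamma}_I)_*\tilde{\mu}_I$; hence $\mathcal{T}_I=\tilde{\Tt}_I\circ\iota'$ and the Chern pairings are carried over verbatim. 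Pushing the known non-trivial classes forward along $\iota'_*$ and then along $K_*(\tilde{\mathfrak{I}}_I)\simeq K_*(\mathfrak{I}_I)$ yields the desired $[\zeta]_i$. If you want to keep the flavour of your construction, note that the classes you are after are precisely $\iota'_*$ of the Connes--Thom generators of $K_*(C(\T^D)\rtimes\R^{I})$, and your rational-case torus picture is recovered as the special case where $\tilde{\Gamma}_I$ is a homeomorphism onto a discrete orbit; but some input of this kind from the smooth model (or an equivalent explicit computation in $C(\T^D)\rtimes\R^{I}$) is needed to make the irrational case rigorous.
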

Before we can prove this Proposition, we first need to introduce another variant of the transversal hull, which will define a \emph{smooth} cone algebra $\mathcal{A}_{\bf v}$. Consider the locally compact space
\begin{equation}\label{eq: Omega}
  \Omega_{\bf v}\;=\;\overline{\big\{ \R_+^d-x\;|\;x\in \R^d\big\}}\setminus \emptyset  
\end{equation}
where the closure is taken with respect to the Fell topology on $\mathscr{C}(\R^d).$ 
Translations in $\Z^D$ act via
$$(\Ss,n)\;\mapsto\; \beta_n(\Ss)\;=\;\Ss - A_{{\bf v}}n,\qquad \forall \, \Ss\in \Omega_{\bf v},n \in \Z^D.$$
In \cite[Proposition 3.2]{DET2} it is shown that $\Omega_{\bf v}$ has a decomposition as a CW-complex with cell decomposition
$$\Omega_{\bf v}\;\simeq\;\bigsqcup_{I\in \mathtt{P}(1,\dots,d)} \R^{I}$$
where $\R^{\emptyset}=\{*\}$ is a single point. This is only a bijection, not a homeomorphism, since the cells are glued non-trivially. Precisely, any element of $\Omega_{\bf v}$ can be written uniquely in the form $\Ss=(x+ \R_+^I)\times \R^{I^c}$ for some $I\subset \{1,...,d\}$ and $x\in \R^I$. Under the bijection the $\Z^D$-action $\beta$ turns into the \emph{affine} transformations
$$(x,n)\;\mapsto\; \beta_n(x)\;=\;x+A_{{\bf v}_I}n,\qquad \forall \,x\in \R^{I},n \in \Z^D.$$
The \emph{smooth} cone algebra $\mathcal{A}_{\bf v}$ shall be the crossed product algebra under this action, \ie
$$\mathcal{A}_{\bf v}\;:=\;C_0(\Omega_{\bf v})\rtimes_\beta\Z^D.$$
This algebra compares to $C_0(\tilde{\Xi}_{\bf v})\rtimes \Z^D$ in a similar way that the so-called smooth Toeplitz algebra \cite{JI} relates to the usual Toeplitz algebra.

The crossed product algebra $\mathcal{A}_{\bf v}$ can be described as the universal $C^*$-algebra generated by formal Fourier series
$$a \;=\; \sum_{n\in \Z^D} a_n u^n$$
with commuting unitaries $u^n=u_1^{n_1}\cdots u_D^{n_D}$ representing the generators of the $\Z^D$-action and coefficients $f_n\in C_0(\Omega_{\bf v})$ satisfying the commutation relation $u^nf_m u^{-n}=\beta_n(f_m)$. Moreover, one obtains a strongly continuous family of $*$-representations $\{\pi_\omega\}_{\omega\in\Omega_{\bf v}}$ of $\mathcal{A}_{\bf v}$ on $\ell^2(\Z^D)$ determined by the relation
$$\langle n|\pi_\omega(a)|m\rangle\;=\;a_{n-m}\big(\beta_{n-m}(\omega)\big).$$

Similarly to the unit-space $\tilde{\Xi}_{\bf v}$ the space $\Omega_{\bf v}$ also has a similar filtration by closed $\Z^D$-invariant subsets
\begin{equation}\label{eq: filtration smooth 2}
    \{ *\}=\Omega_0\subset \Omega_1\subset\dots \subset \Omega_{d-1}\subset \Omega_d =\Omega_{\bf v}
\end{equation}
where $\Omega_r:=\Omega_d\setminus \bigsqcup_{|I|>r}\R^{I}$. This in turn a cofiltration of $\mathcal{A}_{\bf v}$ 
\begin{equation}\label{eq: cofiltration smooth}
    \mathcal{A}_{\bf v}\;=\;\mathcal{A}_{d}\stackrel{\phi_d}\to \mathcal{A}_{d-1}\stackrel{\phi_{d-1}}\to \cdots \to \mathcal{A}_1\stackrel{\phi_1}\to \mathcal{A}_0\;\simeq\; C(\T^D)
\end{equation}
with $\phi_r$ surjective $*$-homomorphism and $\mathcal{A}_r=C_0(\Omega_r)\rtimes_{\beta}\Z^D$. The smooth boundary ideals here are
$\mathcal{I}_r:={\rm Ker}(\phi_r)$. It turns out that the following decomposition in direct sum
\begin{equation}\label{eq: smooth edge algebra}
\mathcal{I}_r\;=\;\bigoplus_{|I|=r}\mathcal{I}_I
\end{equation}
where each $\mathcal{I}_I\;:=\;C_0(\R^{I})\rtimes_{\beta}\Z^D$ is an ideal of $\mathcal{A}_r$. The canonical trace on $\mathcal{I}_I$ is given by 
\begin{equation}
        \mathcal{T}_I(f)\;:=\;\int_{\R^{I}}E(f)(x)\,{\rm d}\nu_I(x)
\end{equation}
where $\nu_I$ is the normalized Lebesgue measure on $\R^{I}$ and $E\colon \mathcal{A}_{\bf v}\to C_0(\Omega_{\bf v})$ is the conditional expectation map. As a consequence,  $\mathcal{T}_I$ is a densely defined, faithful, and lower semicontinuous trace on $\mathcal{I}_I$. Thanks to all these properties, one can define for a tuple ${\bf w}=(w_1,\dots,w_m)$ of  vectors in $\R^D$ the \emph{smooth} Chern cocycle

\begin{equation}
       {\rm Ch}^s_{I,{\bf w}}(f_0,f_{1},\dots, f_m)\;=\;\sum_{\rho\in S_m}(-1)^\rho \mathcal{T}_I\big(f_0\nabla _{w_{\rho(1)}}f_{1}\cdots \nabla _{w_{\rho(m)}}f_m\big)
    \end{equation}
for $(m+1)$-tuple of elements of suitable element $f_j\in\mathcal{I}_I$. Here, the directional derivative is given in terms of the formal Fourier series
$$\nabla_v f\;=\;\sum_{n\in \Z^D} v\cdot \nabla(f_nu^n)\;=\;- {\rm i}\sum_{i=1}^D\sum_{n\in \Z^D} v_in_if_n u^n$$
\begin{proposition}\label{prop: num}
 For any linearly independent tuple $(w_1,...,w_m)$ of vectors orthogonal to $v_I$ there is an elements$ [\zeta]_i\in K_i(\mathcal{I}_I)$, $i= m\, \mathrm { mod }\,2$, such that the pairing $\langle [\zeta]_i,[{\rm Ch}_{I,{\bf w}}]\rangle$ does not vanish.
\end{proposition}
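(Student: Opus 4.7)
My plan is to reduce the non-vanishing of $\langle [\zeta]_i,[{\rm Ch}^s_{I,{\bf w}}]\rangle$ to a standard Chern-character pairing on a $D$-dimensional noncommutative torus, obtained as a Morita-equivalent model of $\mathcal{I}_I$.

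The first step is a structural factorization. Setting $\Lambda_I := \{n\in\Z^D : A_{{\bf v}_I}n = 0\}$, choose a complementary integer subgroup $G_I\subset \Z^D$ with $\Z^D = G_I\oplus \Lambda_I$; such a splitting exists because $\Lambda_I$ is pure in $\Z^D$ and both summands are free abelian. Since $\Lambda_I$ acts trivially on $C_0(\R^I)$ and $\Z^D$ is abelian, the crossed product factors as
\[
\mathcal{I}_I \;\cong\; \bigl(C_0(\R^I) \rtimes_{\beta} G_I\bigr)\otimes C^*(\Lambda_I),
\]
with $C^*(\Lambda_I)\cong C(\hat\Lambda_I)$ a commutative torus algebra. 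In the R case for ${\bf v}_I$, $A_{{\bf v}_I}(G_I)$ is a lattice in $\R^I$ acting properly and freely, so the first tensor factor is Morita equivalent to $C(\T^{|I|})$. In the CI case, $A_{{\bf v}_I}(G_I)$ is densely embedded in $\R^I$, and the classical Heisenberg-bimodule construction (à la Rieffel) identifies $C_0(\R^I)\rtimes G_I$ up to Morita equivalence with a noncommutative torus $A_{\Theta_I}$ whose deformation matrix $\Theta_I$ is encoded by the rows of $A_{{\bf v}_I}$. In either regime $\mathcal{I}_I$ is Morita equivalent to a $D$-dimensional (possibly partially commutative) noncommutative torus.

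Next I would transport the trace and the derivations. Since $\mathcal{T}_I$ is (up to scale) the unique $\Z^D$-invariant tracial weight on $\mathcal{I}_I$, it corresponds to the canonical tracial state on the noncommutative torus. The derivation $\nabla_{w_j}$ acts on a Fourier monomial $f_n u^n$ by multiplication with $-{\rm i}(w_j\cdot n)$; since $w_j\perp {\bf v}_I$, the longitudinal generators attached to $A_{{\bf v}_I}$ contribute zero, so the $\nabla_{w_j}$ correspond to $m$ linearly independent transverse canonical derivations on the torus. Non-vanishing of the pairing then becomes a classical fact: on a $D$-dimensional noncommutative torus, for each choice of $m$ linearly independent canonical derivations there is a Bott--Elliott--Powers--Rieffel-type class in $K_{m\bmod 2}$ whose Chern-character pairing with the associated cyclic cocycle is a non-zero integer (see \cite{Con} for the general framework). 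Morita invariance of both $K$-theory and cyclic pairings then delivers the required $[\zeta]_i\in K_i(\mathcal{I}_I)$.

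The principal obstacle is the CI regime of Step~1: constructing the Morita bimodule from $C_0(\R^I)\rtimes G_I$ to $A_{\Theta_I}$ (a Heisenberg module of Schwartz functions over $\R^I$) and checking that the chosen derivations $\nabla_{w_j}$ transport to the canonical derivations of $A_{\Theta_I}\otimes C(\hat\Lambda_I)$ rather than to twisted variants. Once this dictionary is in place, the standard Connes pairing computation together with the K\"unneth theorem for $K$-theory completes the argument.
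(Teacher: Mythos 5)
Your overall strategy is genuinely different from the paper's. The paper disposes of this in two lines: by \cite[Proposition 4.1]{DET2} one has $\mathcal{I}_I\simeq C(\T^D)\rtimes\R^{I}$, and then the Connes--Thom isomorphism together with \cite[Theorem 5.3]{DET2} produces the class and the non-vanishing pairing in one stroke, with no case distinction between the R and CI regimes and no noncommutative tori. Your route --- splitting $\Z^D=G_I\oplus\Lambda_I$, factoring $\mathcal{I}_I\cong(C_0(\R^I)\rtimes G_I)\otimes C^*(\Lambda_I)$, and passing via a Heisenberg equivalence bimodule to a $D$-dimensional noncommutative torus --- is a legitimate alternative skeleton (the tensor factorization and the Morita equivalence to $A_{\Theta_I}$ are both correct), and it would buy an explicit identification of the deformation matrix $\Theta_I$. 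But as written it does not yet constitute a proof, for two reasons.

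First, the assertion that the $\nabla_{w_j}$ ``correspond to $m$ linearly independent transverse canonical derivations'' because the longitudinal generators contribute zero is incorrect. The derivation $\nabla_{w}$ multiplies $u^n$ by $-\mathrm{i}(w\cdot n)$, and $w\perp\mathbf{v}_I$ does \emph{not} force $w\cdot n=0$ for $n$ in the sublattice $\Lambda_0\subset G_I$ whose image is a genuine lattice in $\R^I$ (already for $D=2$, $d=1$, $v=(a,b)$ irrational and $w=(-b,a)$ one has $\nabla_w u_{(1,0)}=\mathrm{i}b\,u_{(1,0)}\neq 0$). Each $\nabla_{w_j}$ therefore transports to a generic linear combination of \emph{all} $D$ canonical derivations with coefficient vector $w_j$, so ${\rm Ch}^s_{I,\mathbf{w}}$ is the combination $\sum_{|\alpha|=m}\det(W_\alpha)\,{\rm Ch}_\alpha$ of basic cocycles, and non-vanishing of the pairing is not an off-the-shelf statement about ``a Bott--Elliott--Powers--Rieffel class''; you must exhibit a class on which that particular combination is nonzero, which requires Elliott's description of the range of the Chern character on $K_*(A_{\Theta})$. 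Second, the step you yourself flag as ``the principal obstacle'' --- transporting the trace and the derivations through the Heisenberg bimodule via a constant-curvature connection and verifying that cyclic pairings are preserved --- is precisely where the analytic content of the proposition lies, and it is left undone. Until that dictionary is established (or replaced, as in the paper, by the isomorphism $\mathcal{I}_I\simeq C(\T^D)\rtimes\R^I$ and Connes--Thom), the argument has a genuine gap.
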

\begin{proof}
From \cite[Proposition 4.1]{DET2} one gets $\mathcal{I}_I\simeq C(\T^D)\rtimes \R^{I}.$ Thus,  the Connes-Thom isomorphism  and \cite[Theorem 5.3]{DET2} completes the proof.
\end{proof}

\medskip
\noindent
{\bf Proof of Proposition \ref{prop: numerical invariants}.}
\begin{proof}
The idea of the proof is to construct a homomorphism $\iota_*\colon  K_*(\mathcal{I}_I)\to K_*(\mathfrak{I}_I)$ such that 
\begin{equation}\label{eq: equality chern}
     \big \langle [\zeta]_i, [{\rm Ch}^s_{I,{\bf w}}
      ]\big\rangle \;=\;\big\langle \iota_*[\zeta]_i, [{\rm Ch}_{I,{\bf w}}]\big\rangle, \qquad [\zeta]_i\in K_i(\mathcal{I}_I)
  \end{equation}  
  Hence, by combining this equality with Proposition \ref{prop: num}, the non-triviality is established.
    Let us start with the case for which ${\bf v}$ is CI.  Observe that the map $\Gamma $ provided in Lemma \ref{coro: surjective} induces a continuous surjective $\Z^D$-equivariant map $\tilde{\Gamma}_I\colon \tilde{\mathcal{C}}_{{\bf v}_I}\to \R^{I}$. By functoriality of the crossed product, there is an injective $*$-homomorphism   $\iota'\colon \mathcal{I}_I\to \tilde{\mathfrak{I}}_I$. From Remarks \ref{rem: basis of measure} and \ref{rem: basis of measure zd}, the Lebesgue measure $\nu_I$ on $\R^{I}$ is the pushforward measure of $\tilde{\mu}_I$. As a consequence,  $\mathcal{T}_I(f)\;=\;\tilde{\Tt}_I\big(\iota'(f)\big)$ for any $f\in \mathcal{I}_I.$ This implies the equality in Chern numbers
\begin{equation*}
     \big \langle [\zeta]_i, [{\rm Ch}^s_{I,{\bf w}}
      ]\big\rangle \;=\;\big\langle \iota'_*[\zeta]_i, [\tilde{\rm Ch}_{I,{\bf w}}]\big\rangle, \qquad [\zeta]_i\in K_i(\mathcal{I}_I)
  \end{equation*}
  Letting $\iota_*$ as the composition of the arrows $K_*(\mathcal{I}_I)\stackrel{\iota'_*}\to K_*(\tilde{\mathfrak{I}}_I)\simeq K_*(\mathfrak{I}_I)$ one lands in \eqref{eq: equality chern}, where the last isomorphism follows from the fact that  $\mathfrak{I}_I$ and $\tilde{\mathfrak{I}}_I$ are Morita equivalent by Theorem \ref{Teo morita}.\\
If  $\X_{{\bf v}_I}$ is discrete then $\tilde{\Gamma}_I$ provides an equivariant homeomorphism of $\tilde{\mathcal{C}}_{{\bf v}_I}$ onto the $\Z^D$-orbit in $\R^{I}$ of an single element $\omega\in \R^{I}$. With the corresponding surjective $*$-homomorphism $\iota_\omega\colon \mathcal{I}_I\to \tilde{\mathfrak{I}}_I$ one has by \cite[Proposition 5.5]{DET2}
 \begin{equation*}
     \big \langle [\zeta]_i, [{\rm Ch}^s_{I,{\bf w}}
      ]\big\rangle \;=\;\big\langle (\iota_\omega)_*[\zeta]_i, [\tilde{\rm Ch}_{I,{\bf w}}]\big\rangle, \qquad \forall [\zeta]_i\in K_i(\mathcal{I}_I)
  \end{equation*}
Thus, as in the previous case, $\iota_*$ is the composition of $(\iota_\omega)_*$ with $ K_*(\tilde{\mathfrak{I}}_I)\simeq K_*(\mathfrak{I}_I)$.
\end{proof}

\medskip 
Let us finish this section with a discussion about the consequences of Theorem \ref{Theorem 2} in the \emph{bulk-edge} correspondence for two-dimensional lattice models in the irrational case. Consider in $\R^2$ the half-plane geometry defined by the semigroup $\LL_{v}$ for a single vector $v\in \R^2$ with components rational independent over $\Q$. According to Proposition \ref{prop: containing 2}, the transversal hull is
\begin{align*}
\Xi_{v}\;&=\; \big\{ \LL^\emptyset_{v,x}\big\}_{x\in \R_+}\cup  \big\{ \LL^{\{1\}}_{v,x}\big\}_{x\in \mathtt{X}^{\{1\}}_{{\bf v}}} \cup \{ \Z^2\}\\
\;&=\; \big\{ \LL^\emptyset_{v,x}\big\}_{x\in \R_+}\cup \big\{\LL_{v,0}^{\{1\}}-n\;:\;n\in \LL_{v},\;n\neq 0\big\}\cup \{ \Z^2\}\\
\;&=\; \big\{ \LL^\emptyset_{v,x}\big\}_{x\in \R_+}\cup \mathscr{O}(\LL_{v,0}^{\{1\}})\cup \{ \Z^2\},
\end{align*}
since in this case $\mathtt{X}^{\{1\}}_{\bf v} = v\cdot (\LL_v \setminus \{0\})$. The canonical filtration $\{\Z^2\}=\Xi_0\subset \Xi_1=\Xi_v$ has length two.  The cofiltration of $\A_v$ reduces to the short  exact sequence
\begin{equation}\label{eq: seq bulk-boundary}
    0\to\mathfrak{I}_1\to \A_v\to \A_0\to 0
\end{equation}
where  $\mathfrak{I}_1= C^*_r(\LL_{v}|_{\mathcal{C}_v})$ and $\A_0\simeq C(\T^2)$ is the bulk algebra. This sequence is the core of the bulk-edge correspondence  \cite{PRO}.

\medskip

 Recall that  $\mathcal{I}_1=C_0(\R)\rtimes_{\beta} \Z^2\simeq C(\T^2)\rtimes \R$ is the smooth boundary algebra. Here the homomorphism $\iota_*\colon K_*(\mathcal{I}_1)\to K_*(\mathfrak{I}_1)$ provided in the proof Proposition  \ref{prop: numerical invariants} is surjective. The latter is a consequence of \cite[Corollary 2.8]{RX}, where it is shown that the map $\iota'_*\colon K_*(\mathcal{I}_1)\to K_*(\tilde{\mathfrak{I}}_1)$ is surjective, where $\tilde{\mathfrak{I}}_1$ is the extended boundary algebra. Thus, $\iota_*$ is also surjective. In particular, one has the relations
$$K_0(\mathfrak{I}_1)\;\simeq\;K_0(\mathcal{I}_1)\;=\;\Z^2,\qquad  K_1(\mathfrak{I}_1)\;\simeq\; K_0(\mathcal{I}_1)/\Z\;=\;\Z.$$ 
The isomorphism $K_1(\mathfrak{I}_1)\simeq \Z$ is given by a suitable normalization of the pairing with the Chern cocycle ${\rm Ch}_{\{1\},\{w\}}$ for $w$ orthogonal to $v$. Moreover, this pairing is completely determined by an input coming from the Chern cocycle on the bulk algebra $\A_0$:
\begin{proposition}\label{prop: duality}
  Let $\partial_1\colon K_i(\A_0)\to K_{i-1}(\mathfrak{I}_1)$ be  the connecting map  related to the sequence \eqref{eq: seq bulk-boundary}. One has the duality
   $$\langle [\zeta]_i,[{\rm Ch}_{\emptyset, \{w, v\}}]\rangle\;=\;\langle \partial_1([\zeta]_i),[{\rm Ch}_{\{1\},\{w\}}]\rangle \qquad \forall[\zeta]_i\in K_i(\A_0)$$
   where $w$ is any vector in $\R^2$ linearly independent to $v$.
\end{proposition}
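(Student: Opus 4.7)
The plan is to reduce the identity to the smooth cone algebra framework of Section~\ref{Sec: chern}, where the corresponding duality follows from a standard Connes-Thom calculation on $\mathcal{I}_1 \simeq C(\T^2) \rtimes \R$. First, I would assemble a morphism of short exact sequences relating the smooth sequence $0 \to \mathcal{I}_1 \to \mathcal{A}_v \to \mathcal{A}_0 \to 0$ from \eqref{eq: cofiltration smooth} to \eqref{eq: seq bulk-boundary}, with the map $\iota$ on the left (the composition of $\iota'\colon \mathcal{I}_1 \to \tilde{\mathfrak{I}}_1$ from Proposition~\ref{prop: numerical invariants} and the Morita equivalence of Theorem~\ref{Teo morita}), a suitable lift on the middle, and the canonical identification $\A_0 \simeq C(\T^2) \simeq \mathcal{A}_0$ on the right. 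Naturality of the six-term exact sequence then gives $\iota_* \circ \partial^s_1 = \partial_1$ on $K$-theory, where $\partial^s_1$ denotes the connecting map for the smooth sequence. Combined with the Chern-pairing identity \eqref{eq: equality chern} from Proposition~\ref{prop: numerical invariants} and the coincidence of bulk cocycles on $\A_0$ and $\mathcal{A}_0$, the claim reduces to the smooth duality
\[
\langle [\zeta]_i, [{\rm Ch}^s_{\emptyset,\{w,v\}}] \rangle \;=\; \langle \partial^s_1([\zeta]_i), [{\rm Ch}^s_{\{1\},\{w\}}] \rangle, \qquad [\zeta]_i \in K_i(\mathcal{A}_0).
\]

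Second, I would verify this smooth duality directly. Under the identification $\mathcal{I}_1 \simeq C(\T^2) \rtimes_\gamma \R$ from \cite[Proposition 4.1]{DET2}, with $\gamma$ the $\R$-action dual to $v$, the map $\partial^s_1$ becomes the Connes-Thom connecting homomorphism, and the 1-cocycle ${\rm Ch}^s_{\{1\},\{w\}}$ corresponds, via the cyclic-cohomological Connes-Thom isomorphism, to the 2-cocycle on $\mathcal{A}_0 = C(\T^2)$ with directions $v$ and $w$, which is ${\rm Ch}^s_{\emptyset,\{w,v\}}$ up to a sign. This is the classical bulk-edge duality on the noncommutative two-torus and amounts to a direct differential-calculus computation in the spirit of \cite{PRO, Bel, NGP}.

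The hard part will be the construction of the middle vertical arrow in the morphism of short exact sequences: in the irrational-direction case, the smooth cone algebra $\mathcal{A}_v$ does not embed canonically into $\A_v$, so a $*$-homomorphism $\mathcal{A}_v \to \A_v$ extending $\iota$ need not exist. The workaround, consistent with the approach in Proposition~\ref{prop: numerical invariants}, is to bypass the middle arrow altogether: one constructs explicit lifts in $\mathcal{A}_v$ of $K$-theory classes of $\mathcal{A}_0$, applies the Connes-Thom formula to represent $\partial^s_1[\zeta]_i \in K_{i-1}(\mathcal{I}_1)$ concretely, and then transfers through $\iota_*$ to obtain $\partial_1[\zeta]_i \in K_{i-1}(\mathfrak{I}_1)$. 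The pairing identity then follows because the traces $\Tt_I$ and $\mathcal{T}_I$ are both computed against the same Lebesgue measure via the pushforward $\tilde{\Gamma}_I$ from Lemma~\ref{coro: surjective}.
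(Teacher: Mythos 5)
Your overall strategy is the paper's: reduce to the smooth duality on $\mathcal{I}_1\simeq C(\T^2)\rtimes\R$ (the paper simply cites \cite[Theorem 5.3]{DET2} rather than recomputing it), combine with the pairing identity \eqref{eq: equality chern}, and conclude via $\partial_1=\iota_*\circ\partial_1^s$. However, there is a genuine gap at exactly the point you flag as "the hard part". You correctly observe that no canonical $*$-homomorphism $\mathcal{A}_{v}\to\A_{v}$ extending $\iota$ exists, but your proposed workaround — choose explicit lifts in $\mathcal{A}_{v}$, compute $\partial_1^s[\zeta]_i$ concretely, and then "transfer through $\iota_*$" — does not establish $\partial_1=\iota_*\circ\partial_1^s$. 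That identity is precisely what needs proving: pushing forward a representative of $\partial_1^s[\zeta]_i$ by $\iota_*$ only yields $\partial_1[\zeta]_i$ if one knows that the chosen lift in $\mathcal{A}_1$ corresponds, under some algebra map covering the identity on $C(\T^2)$, to a lift in the algebra whose connecting map is $\partial_1$. Without such a map the "transfer" step assumes the conclusion.

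The paper's resolution is to aim the middle arrow not at $\A_{v}$ but at the \emph{extended} algebra: the proper continuous equivariant surjection $\tilde{\Gamma}\colon\tilde{\mathcal{C}}_{{\bf v}}\to\R$ (extended to the unit spaces) induces, by functoriality of the $\Z^2$-crossed product, a $*$-homomorphism $\iota'\colon\mathcal{A}_1\to\tilde{\A}_1$ which restricts to $\iota'\colon\mathcal{I}_1\to\tilde{\mathfrak{I}}_1$ and covers the identity on $C(\T^2)$. This gives a genuine morphism from the smooth exact sequence onto $0\to\tilde{\mathfrak{I}}_1\to\tilde{\A}_1\to C(\T^2)\to 0$, so naturality of the six-term sequence yields $(\iota')_*\circ\partial_1^s$ equal to the connecting map of the extended sequence; the Morita equivalence of Theorem \ref{Teo morita} (the full corner $p(\cdot)p$) then identifies the latter with $\partial_1$ for \eqref{eq: seq bulk-boundary}, giving $\partial_1=\iota_*\circ\partial_1^s$ with $\iota_*=(\text{Morita})\circ(\iota')_*$ as in Proposition \ref{prop: numerical invariants}. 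If you insert this intermediate sequence, your argument closes; as written, the key naturality step is unsupported.
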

\begin{proof}
Since $\A_0\simeq C(\T^2)$ and $\mathcal{J}_1\simeq C(\T^2)\rtimes\R$, one gets by \cite[Theorem 5.3]{DET2} the duality
 $$\langle [\zeta]_i,[{\rm Ch}_{\emptyset, \{w, v\}}]\rangle\;=\;\langle \partial_1^s([\zeta]_i),[{\rm Ch}^s_{\{1\},\{w\}}]\rangle \qquad [\zeta]_i\in K_i(\A_0)$$
 where $\partial_1^s\colon K_i(\A_0)\to K_{i-1}(\mathcal{I}_1)$  is the connecting map of \eqref{eq: cofiltration smooth}. Moreover, the above with \eqref{eq: equality chern} yield
  $$\langle [\zeta]_i,[{\rm Ch}_{\emptyset, \{w, v\}}]\rangle\;=\;\langle \iota_*\circ\partial_1^s([\zeta]_i),[{\rm Ch}_{\{1\},\{w\}}]\rangle $$
  To conclude the proof is enough to recall the definition of $\iota_*$ and to note that the following diagram is commutative
  $$ \xymatrix{0\ar[r]^{} & \mathcal{I}_{1} \ar[r]^{i}\ar[d]^{\iota'}& \mathcal{A}_1\ar[r]^{\phi_1}\ar[d]^{\iota'} &C(\T^2)\ar[r]^{}\ar[d]^{\iota'}&0\\
0\ar[r]^{} & \tilde{\mathfrak{I}}_1 \ar[r]^{i}& \tilde{\A}_1\ar[r]^{\tilde{\psi}_1} &C(\T^2)\ar[r]^{}&0 }$$
which implies $\partial_1=(\iota')_* \circ \partial^s_1=\iota_* \circ \partial^s_1$ due to naturalness of the connecting maps.
\end{proof}

We can now discuss the consequences of Proposition \ref{prop: duality} for lattice Hamiltonians. Let  $H=H^*\in M_N(\A_1)$ be a Hamiltonian with a bulk spectral gap at $0$, \ie  $0$ lies in a compact interval $\Delta$ contained in a spectral gap of the bulk Hamiltonian $H_B:=\psi_1(H)\in M_N(\A_0)\simeq M_N(C(\T^2)).$ One associates to $H$ the bulk invariant
$$[P_f]\;:=\;[\chi_{(-\infty,0)}(H_B)]_0\in K_0(\A_0)$$
defined by the \emph{Fermi projection}, and the edge invariant by the \emph{edge unitary}
$$[U_f]\;=\;[e^{i\pi f(H)}]_1 \in K_1\big(\mathfrak{I}_1\big)$$
where $f$ is any smooth function that takes the constant values $-1$ below $\Delta$ and $1$ above $\Delta$. It is not difficult to see that these invariants satisfy the correspondence $\partial_1([P_f])=[U_f]$. 
Therefore, Proposition \ref{prop: duality} leads to the bulk-edge correspondence 

$$\langle [P_f],[{\rm Ch}_{\emptyset, \{w, v\}}]\rangle\;=\;\langle [U_f],[{\rm Ch}_{\{1\},\{w\}}]\rangle$$
The numerical invariant on the right-hand side has a concrete physical interpretation in terms of quantized edge currents \cite{PRO}.

\section{On a real-space expression for the trace}\label{sec: trace}

Having seen that the trace $\Tt$ on each ideal $\mathfrak{I}=C_r^*(\LL_{\bf v}\rvert_{\mathcal{C}_{\bf v}})$ is essentially unique it is interesting to derive an explicit expression for it which can be computed from matrix elements in the canonical Hilbert space representation on $\ell^2(\LL_{\bf v})$. Such an expression is known in the case $d=1$ as the trace-per-surface-area, for reasons that will become clear. 

\begin{definition}
Denote the span of the linearly independent vectors ${\bf v} =\{v_1,\dots,v_d\}$ in $\R^D$ by $E_{\triangleleft}$ and its orthogonal complement by $E_{\triangledown}$. For the linear map $A_{\bf v}\colon \R^D \to \R^d$ consider the slab $$V_L\; =\; A_{\bf v}^{-1}([0,L]^d)\;\subset\; \R^D$$ and the as well as the window
$$W_t \;=\; \big\{n \in \R^D: \; \lVert n_{\triangledown}\rVert \leq t\big\}$$
where $n_{\triangledown}\in E_{\triangledown}$ is the orthogonal projection of $n$ and we used the Euclidean norm.  Define on $\ell^2(\LL_{\bf v})$ the projection operators $P_{L,t}$ on $\ell^2(\LL_{\bf v})$ restricting to all lattice points in the finite-volume slab
$$\Lambda_{L,t}\;=\; V_L \cap W_t.$$
For any $a\in C^*_r(\LL_{\bf v})$ that is trace-class w.r.t. $\Tt$ define
\begin{equation}\label{eq: limit1}
\begin{split}
   \hat{\Tt}(a)&\;=\;\lim_{L\to \infty} \lim_{t\to \infty} \frac{1}{t^{D-d} \mathrm{Vol}(B_{D-d})} \Tr(P_{L,t} a P_{L,t})
\\
&\;=\;\lim_{L\to \infty} \lim_{t\to \infty} \frac{1}{t^{D-d} \mathrm{Vol}(B_{D-d})} \sum_{n\in \Lambda_{L,t}\cap \LL_{\bf v}} \langle n| a|n \rangle. 
\end{split}
\end{equation}
with $\Tr$ the usual Hilbert-space trace on $\mathcal{B}(\ell^2(\LL_{\bf v}))$ and $\mathrm{Vol}(B_{D-d})$ the volume of the $(D-d)$-dimensional unit ball.
\end{definition}

\begin{remark}
As will become clear, the normalization is chosen such that the trace of the indicator function for the slab $V_L$ is approximately $$\hat{\Tt}(\chi_{V_L})\;\sim\; L^{D-d}\,\mathrm{Vol}(E_\triangleleft/ \langle v_1,\dots,v_d\rangle)$$
for large $L$ with the covolume of the rank $d$ lattice spanned by $v_1,\dots,v_d$. This corresponds exactly to the $d$-dimensional volume of $V_L$ projected to $E_{\triangleleft}$. For example, if $d=1$ then the ergodic average represents a trace per $(D-1)$-dimensional surface area (see Figure~\ref{fig:picture} for $D=2$). $\hfill \blacktriangleleft$
\end{remark}

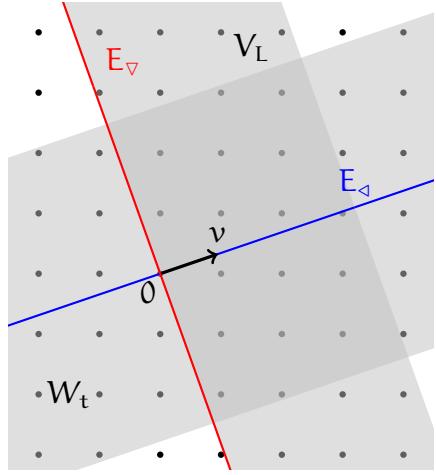
\begin{figure}
    \centering{}
\begin{tikzpicture}[scale=0.8]
    \clip (-2.5,-3.25) rectangle (4.5,4.5);
    \foreach \x in {-4,-3,-2,-1,0,1,2,3,4}
        \foreach \y in {-4,-3,-2,-1,0,1,2,3,4}
            \fill[black] (\x,\y) circle (1.5pt);
    
    \def\voneX{3.2}  
    \def\voneY{1.1}  
    \def\vtwoX{-1.1} 
    \def\vtwoY{3.1}  
    \def\vthreeX{0.945687}
    \def\vthreeY{0.32508}
    
    \def\svoneX{0.3 *\voneX}
    \def\svoneY{0.3 *\voneY}
    
    \def\svtwoX{1.5 *\vtwoX}
    \def\svtwoY{1.5 *\vtwoY}
    
    \coordinate (A) at ({0.8*\vtwoX-5*\svoneX}, {0.8*\vtwoY -5*\svoneY});
    \coordinate (B) at ({-0.8*\vtwoX-5*\svoneX}, {-0.8*\vtwoY-5*\svoneY});
    \coordinate (C) at ({-0.8*\vtwoX+6*\svoneX}, {-0.8*\vtwoY+6*\svoneY});
    \coordinate (D) at ({0.8*\vtwoX+6*\svoneX}, {0.8*\vtwoY+6*\svoneY});

    \fill[lightgray,opacity=0.5] (A) -- (B) -- (C) -- (D) -- cycle;

    \coordinate (E) at (-\svtwoX,-\svtwoY);
    \coordinate (F) at (\svtwoX,\svtwoY);
    \coordinate (G) at (-\svtwoX+ \voneX,-\svtwoY+ \voneY);
    \coordinate (H) at ({\svtwoX + \voneX}, {\svtwoY + \voneY});
    
    \fill[lightgray,opacity=0.5] (E) -- (G) -- (H) -- (F) -- cycle;
    
    \node (W) at (-1.5,-2) {$W_t$};
    \node (V) at (1.5,3.75) {$V_L$};
    \node (O) at (-0.2,-0.3) {$0$};
    \draw[thick,blue] (-\voneX,-\voneY) -- (\voneX,\voneY) node[anchor=south] {$E_{\triangleleft}$};
    \draw[thick,blue] (\voneX,\voneY) -- (2*\voneX,2*\voneY);
    \draw[very thick,->,black] (0,0) -- (\vthreeX,\vthreeY) node[anchor=south] {$v$};
    \draw[thick,red] (-2*\vtwoX,-2*\vtwoY) -- (\vtwoX,\vtwoY) node[anchor=south west] {$E_{\triangledown}$};
    \draw[thick,red] (\vtwoX,\vtwoY) -- (3*\vtwoX,3*\vtwoY);

\end{tikzpicture}

\caption{The geometry of the situation with $D=2$, $d=1$: $\LL_{\bf v}$  is a half-space with irrational normal vector $v$ and $V_L$ an infinite strip. The projection of $V_L$ onto $E_{\triangleleft}$ is dense in a line segment and the trace of the indicator function $\chi_{V_L}$ should be given by the length of that segment. We compute it by an ergodic average over increasing windows $W_t$.}
    \label{fig:picture}
\end{figure}

We will now prove that the limit \eqref{eq: limit1} exists for almost all choices of $\bf v$ and is proportional to the trace $\Tt$, thereby giving it a concrete interpretation. 

\medskip

Recall that there is a conditional expectation $E\colon \mathfrak{I} \to C_0(\mathcal{C}_{\bf v})$, where we consider $C_0(\mathcal{C}_{\bf v})$ as a commutative sub-algebra of $\mathfrak{I}$. The trace factors through this map $\Tt=\Tt \circ E$. In the groupoid picture the representation of any function $f\in C_c(\G_{\LL_{\bf v}})$ on $\ell^2(\LL_{\bf v})$ is given by the matrix elements
$$\langle n| \pi(f)|m \rangle\; =\; f(\LL_{{\bf v}}-n,m-n), \qquad n,m\in \LL_{\bf v},$$
in particular, all off-diagonal elements vanish in the case where $f\in C_0(\mathcal{C}_{\bf v})$ is an element of the commutative sub-algebra. In that case, the diagonal elements
$$\langle n| \pi(f)|n \rangle \;=\; f(\LL_{{\bf v}}-n), \qquad a\in C_0(\mathcal{C}_{\bf v}),n\in \LL_{\bf v}$$
sample a dense subset of the domain $\mathcal{C}_{\bf v}$.  In this representation, the conditional expectation $E$ acts by truncating the off-diagonal matrix elements
$$\langle n| \pi(E(f))|m \rangle\;=\;\langle n| \pi(f)|m \rangle\, \delta_{n,m}.$$

Thus we see that both $\Tt=\Tt\circ E$ and $\hat{\Tt}=\hat{\Tt}\circ E$ factor through the conditional expectation. It is therefore enough to prove that $\Tt$ and $\hat{\Tt}$ coincide on the commutative algebra $C_0(\mathcal{C}_{\bf v})\cap L^1(\mathcal{C}_{\bf v})$. Let us prepare the completely irrational case first. 
\begin{lemma}
\label{lemma:almost sure continuity}
Assume that $\X_{\bf v}= \R^d_+$. Define for $f\in C_0(\mathcal{C}_{\bf v})$ the function $\tilde{f}\colon \R^d_+\to \C$ via
$$\tilde{f}(x) := f(\LL_{{\bf v},x}^\emptyset).$$
Then $\tilde{f}$ is continuous outside a set of Lebesgue measure $0$.
\end{lemma}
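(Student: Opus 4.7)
The plan is to produce an explicit Lebesgue null set $N\subset \R^d_+$ outside of which the set-valued map $x\mapsto \LL^\emptyset_{{\bf v},x}$ is continuous with respect to the Fell topology on $\mathscr{C}(\Z^D)$; continuity of $\tilde f$ outside $N$ then follows from continuity of $f\in C_0(\mathcal{C}_{\bf v})$.

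For the exceptional set I would take
\[
N\;=\;\bigcup_{k=1}^d\bigcup_{p\in\Z^D}\big\{x\in\R^d_+:\; x_k+v_k\cdot p=0\big\},
\]
that is, the set of $x$ at which some coordinate $x_k$ lies in the countable set $-v_k\cdot\Z^D\subset\R$. Each individual condition $x_k=-v_k\cdot p$ cuts out an affine hyperplane in $\R^d$ (or is empty when $-v_k\cdot p<0$), and hence carries Lebesgue measure zero; the countable union $N$ is therefore a Lebesgue null set.

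Next I would verify that $x\mapsto \LL^\emptyset_{{\bf v},x}$ is continuous at every $x\in\R^d_+\setminus N$. Fix such an $x$ and a sequence $x(j)\to x$ in $\R^d_+$. By the definition \eqref{eq: j semigroup} with $J=\emptyset$, membership of $p\in\Z^D$ in $\LL^\emptyset_{{\bf v},x(j)}$ is governed by the inequalities $v_k\cdot p+x(j)_k\geq 0$ for every $k$. Since $x\notin N$, no quantity $v_k\cdot p+x_k$ vanishes, so for each $p$ it is either strictly positive in every coordinate (hence $p\in \LL^\emptyset_{{\bf v},x}$) or strictly negative in at least one coordinate (hence $p\notin \LL^\emptyset_{{\bf v},x}$). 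In either case the same strict inequality persists for $x(j)$ once $j$ is large, so $p$ is eventually in or eventually out of $\LL^\emptyset_{{\bf v},x(j)}$ in agreement with its status for the limiting set. This is precisely the sequential characterization of Fell convergence recalled after \eqref{eq: fell metric}. Composing with the continuous function $f$ yields continuity of $\tilde f$ at every $x\in \R^d_+\setminus N$, as desired.

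I do not expect any real obstacle here; the single conceptual point to be careful about is that, on the critical hyperplanes in $N$, distinct limits in $\mathcal{C}_{\bf v}$ arise depending on whether the defining inequality $v_k\cdot p+x_k\geq 0$ is approached from $>0$ or $<0$ — which corresponds exactly to landing in a set of the form $\LL^J_{{\bf v},x}$ with $J\neq\emptyset$ rather than $J=\emptyset$. Once these Diophantine loci are excluded the inequalities are all strict and the argument reduces to elementary continuity of linear functionals.
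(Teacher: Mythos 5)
Your proof is correct and follows essentially the same route as the paper: the exceptional set you construct (the hyperplanes where some $x_k$ lies in $v_k\cdot\Z^D$) coincides, up to the harmless boundary $x_k=0$, with the paper's $N=\bigcup_{J\neq\emptyset}\mathtt{X}^J_{\bf v}$, and your stability-of-strict-inequalities argument is just a direct verification of the Fell convergence that the paper obtains by citing Lemma~\ref{Lemma J}.
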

\begin{proof}
We already saw that $N:=\bigcup_{J\neq \emptyset} \mathtt{X}^J_{\bf v}$ has measure $0$ and it is a consequence of Lemma~\ref{Lemma J} that $\lim_{y\to x}\LL_{{\bf v},y}^\emptyset=\LL_{{\bf v},x}^\emptyset$ if $x\notin N$, hence $\tilde{f}$ is continuous outside $N$.
\end{proof}

To relate the average with the integral we need a precise estimate for the number of lattice points inside a slab like $\Lambda_{L,t}$. We obtain it by specializing a recent result from \cite{KoivusaloLagace}:
\begin{theorem}
\label{th:lattice point count}
Let $\R^D=E_{\triangleleft}\oplus E_{\triangledown}$ be an orthogonal decomposition which is irrational w.r.t. $\Z^D$. Assume that $\Z^D$ is $\psi$-repellent w.r.t. that decomposition for a function $\psi:\R_+\to \R_+$ with $\psi(t)=O(t^\mu)$. For any set with finite perimeter $\Omega_{\triangleleft}\subset E_{\triangleleft}$ and $\delta>0$ there exists a constant $C$ such that
$$\lVert \#(\Z^D \cap (\Omega_{\triangleleft} \oplus E_{\triangledown}) \cap W_t) - \mathrm{Vol}(\Omega_{\triangleleft})\mathrm{Vol}(B_{D-d}) t^{D-d}\rVert\; \leq \;C t^{D-d - \frac{D-d}{D + \mu^{-1}}+\delta}$$
for all large enough $t$.
\end{theorem}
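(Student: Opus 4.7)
The plan is to reduce Theorem~\ref{th:lattice point count} to the main counting result of Koivusalo--Lagac\'e \cite{KoivusaloLagace} by recognizing our setup as a particular instance of the cut-and-project formalism. In that framework, one takes a lattice $\Lambda$ in $\R^D$ together with an orthogonal decomposition into a \emph{physical} subspace and an \emph{internal} subspace; one then counts lattice points whose physical projection lies in a large region and whose internal projection lies in a fixed window, and estimates the count by the product of the two Lebesgue measures up to an error controlled by a Diophantine condition on the splitting. The $\psi$-repellent hypothesis is precisely the quantitative irrationality input that feeds into their error estimate.

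First, I would identify the two subspaces with their roles in \cite{KoivusaloLagace}: take $E_\triangledown$ as the physical subspace and $E_\triangleleft$ as the internal subspace. Under this identification, $W_t$ is the physical window (a Euclidean ball of radius $t$, which is growing), while $\Omega_\triangleleft$ is the fixed internal acceptance region. Then
\[
\#\bigl(\Z^D \cap (\Omega_\triangleleft \oplus E_\triangledown) \cap W_t\bigr)
\;=\; \#\bigl\{ n\in\Z^D \;:\; n_\triangledown \in B_{D-d}(t),\ n_\triangleleft \in \Omega_\triangleleft\bigr\},
\]
which is exactly the counting function studied there.

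Next, I would verify that the hypotheses transfer. Irrationality of $E_\triangleleft \oplus E_\triangledown$ with respect to $\Z^D$ guarantees that neither projection contains a sublattice of positive rank, so the projection onto $E_\triangleleft$ is injective on $\Z^D$ and its image is dense; the $\psi$-repellent condition with $\psi(t) = O(t^\mu)$ provides the quantitative separation needed to invoke the equidistribution bound. The main term $\mathrm{Vol}(\Omega_\triangleleft)\,\mathrm{Vol}(B_{D-d})\,t^{D-d}$ is the expected product of the internal Lebesgue measure of $\Omega_\triangleleft$ with the physical volume of the window $W_t$, normalized by the covolume of $\Z^D$ which equals $1$. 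Any set of finite perimeter has sufficient boundary regularity to serve as a valid acceptance region in \cite{KoivusaloLagace}, so $\Omega_\triangleleft$ is admissible. The error exponent $\frac{D-d}{D + \mu^{-1}}$ is then read off directly from their theorem, interpolating between the trivial boundary contribution (of order $t^{D-d-1}$ for very irrational splittings) and the worst case when $\mu$ is large. The extra $\delta>0$ absorbs the logarithmic losses inherent in the repellent estimate.

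The main obstacle is bookkeeping rather than substance: one must match normalization conventions (in particular, the covolumes of the projected images of $\Z^D$ in $E_\triangleleft$ and in $E_\triangledown$, and the normalization of the window $B_{D-d}$) so that the prefactors in \cite{KoivusaloLagace} reduce to $\mathrm{Vol}(\Omega_\triangleleft)\mathrm{Vol}(B_{D-d})$ as stated. Once these identifications are made carefully, the theorem follows by direct specialization; no new argument beyond \cite{KoivusaloLagace} is needed.
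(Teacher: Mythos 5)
Your proposal matches the paper's own treatment: the paper offers no independent argument for Theorem~\ref{th:lattice point count} and simply obtains it by specializing the cut-and-project counting result of Koivusalo--Lagac\'e, with $E_{\triangledown}$ playing the role of the physical space (growing window $W_t$) and $E_{\triangleleft}$ the internal space with acceptance region $\Omega_{\triangleleft}$, exactly as you describe. Your identification of the main term, the role of the $\psi$-repellence hypothesis, and the finite-perimeter admissibility of $\Omega_{\triangleleft}$ is the same specialization the authors intend.
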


Accordingly, the number of lattice points in any slab windowed by $W_t$ behaves asymptotically in $t$ like its Euclidean volume. If one naturally parametrizes the possible decompositions by $(n\times d)$ matrices, then the assumption of $\psi$-repellence w.r.t. a suitable function are satisfied for a set whose complement has zero Lebesgue-measure \cite[Lemma 5.8]{BjorklundHartnick}, hence for almost all ${\bf v}$.

\begin{proposition}
\label{prop:riemann}
Assume that $\bf v$ is completely irrational $\X_{\bf v}=\R^d_+$ and that the technical assumption of Theorem~\ref{th:lattice point count} is satisfied.
For any $f\in C_0(\mathcal{C}_{\bf v})$ one has
$$\lim_{t\to \infty} \frac{1}{t^{D-d} \mathrm{Vol}(B_{D-d})} \sum_{n\in \Lambda_{L,t}\cap \LL_{\bf v}} \langle n| \pi(a)|n \rangle\;  =\; \frac{1}{\mathrm{Vol}(E_\triangleleft / \langle v_1,...,v_d\rangle)}\; \int_{[0,L]^d} \tilde{f}(x) d\nu(x)$$
with $\tilde{f}$ as in Lemma~\ref{lemma:almost sure continuity} and the normalized Lebesgue integral.
\end{proposition}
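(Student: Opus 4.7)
The plan is a Riemann-sum argument that converts the diagonal ergodic average into the Lebesgue integral, using the equidistribution estimate of Theorem~\ref{th:lattice point count}. First I would identify the diagonal matrix elements explicitly. From \eqref{eq: j semigroup} one reads that $\LL_{\bf v}-n=\LL^{\emptyset}_{{\bf v},A_{\bf v}n}$ for every $n\in \LL_{\bf v}$, so
\[
\langle n|\pi(f)|n\rangle\;=\;f(\LL_{\bf v}-n)\;=\;\tilde{f}(A_{\bf v}n),
\]
and the quantity to control becomes $\sum_{n\in \LL_{\bf v}\cap \Lambda_{L,t}}\tilde{f}(A_{\bf v}n)$. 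Since $\tilde{f}$ is bounded (by $\lVert f\rVert_\infty$) and, by Lemma~\ref{lemma:almost sure continuity}, continuous outside a Lebesgue-null subset of $[0,L]^d$, Lebesgue's criterion ensures $\tilde{f}$ is Riemann integrable on $[0,L]^d$.

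Fix $\delta>0$ and partition $[0,L]^d$ into axis-aligned cubes $\{Q_\alpha\}$ of side $\delta$ with sample points $x_\alpha\in Q_\alpha$. Because $Q_\alpha\subset\R^d_+$, the constraint $A_{\bf v}n\in Q_\alpha$ with $n\in\Z^D$ automatically forces $n\in\LL_{\bf v}$, so
\[
\sum_{n\in \LL_{\bf v}\cap \Lambda_{L,t}}\tilde{f}(A_{\bf v}n)\;=\;\sum_\alpha \sum_{\substack{n\in \Z^D\cap W_t\\ A_{\bf v}n\in Q_\alpha}}\tilde{f}(A_{\bf v}n).
\]
On each cube the inner sum differs from $\tilde{f}(x_\alpha)\,N_\alpha(t)$ by at most $\mathrm{osc}(\tilde{f},Q_\alpha)\,N_\alpha(t)$, where $N_\alpha(t):=\#\{n\in\Z^D\cap W_t:A_{\bf v}n\in Q_\alpha\}$. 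Applied with $\Omega_{\triangleleft}=(A_{\bf v}|_{E_{\triangleleft}})^{-1}(Q_\alpha)$, Theorem~\ref{th:lattice point count} yields
\[
N_\alpha(t)\;=\;\mathrm{Vol}(\Omega_{\triangleleft})\,\mathrm{Vol}(B_{D-d})\,t^{D-d}+O(t^{D-d-\eta})
\]
for some $\eta>0$. A short Gram-determinant computation using that $v_1,\dots,v_d$ are unit vectors spanning $E_{\triangleleft}$ gives $|\det(A_{\bf v}|_{E_{\triangleleft}})|=\mathrm{Vol}(E_{\triangleleft}/\langle v_1,\dots,v_d\rangle)$, and hence $\mathrm{Vol}(\Omega_{\triangleleft})=\mathrm{Vol}(Q_\alpha)/\mathrm{Vol}(E_{\triangleleft}/\langle v_1,\dots,v_d\rangle)$.

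Assembling and dividing by $t^{D-d}\mathrm{Vol}(B_{D-d})$, the principal contribution is the Riemann sum
\[
\frac{1}{\mathrm{Vol}(E_{\triangleleft}/\langle v_1,\dots,v_d\rangle)}\sum_\alpha \tilde{f}(x_\alpha)\,\mathrm{Vol}(Q_\alpha),
\]
which converges as $\delta\to 0$ to the right-hand side of the claimed identity. The oscillation error is asymptotically bounded by a constant multiple of $\sum_\alpha \mathrm{osc}(\tilde{f},Q_\alpha)\,\mathrm{Vol}(Q_\alpha)$, which is arbitrarily small by Riemann integrability, while the normalized lattice-counting remainder is $O((L/\delta)^d\,t^{-\eta})$, vanishing as $t\to\infty$ for each fixed $\delta$.

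The main obstacle is the interplay between these two error sources: the per-cube remainder from Theorem~\ref{th:lattice point count} is uniform in the cube but scales with the number of cubes $(L/\delta)^d$, so one must take $t\to\infty$ first and only then refine the partition $\delta\to 0$. This order is compatible with the nested limit in \eqref{eq: limit1}, and within the inner $t$-limit there is no obstruction to letting $\delta\to 0$ afterwards. Once this order is respected, a standard $\epsilon/2$ argument (choosing $\delta$ small enough to tame the oscillation and then $t$ large enough to tame the lattice remainder) closes the proof.
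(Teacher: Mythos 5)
Your proposal is correct and follows essentially the same route as the paper's proof: both subdivide $[0,L]^d$ into small boxes, apply Theorem~\ref{th:lattice point count} to the preimage slab of each box to get the asymptotic lattice-point count with the Gram-determinant factor $\mathrm{Vol}(E_\triangleleft/\langle v_1,\dots,v_d\rangle)^{-1}$, and then recognize the resulting expression as a Riemann sum for the Riemann-integrable function $\tilde f$, taking $t\to\infty$ before refining the partition. The only cosmetic difference is that you control the error via sample points plus oscillation while the paper sandwiches between upper and lower Darboux sums; these are equivalent.
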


\begin{proof}
Recall that a function like $\tilde{f}$ which is continuous outside a set of measure $0$ is Riemann-integrable and its Riemann integral coincides with its Lebesgue integral. We will therefore, relate the sum to a Riemann integral for $\tilde{f}$. This requires us to subdivide $\Lambda_{L,t}$ into smaller slabs. For integer $M>0$ and $m\in [0,M)^d\cap \Z^d$ define 
$$\Lambda_{m,M,L,t}\;=\;A^{-1}_{\bf v}(R_{m,M,L})\cap W_t$$
with the boxes of sides $LM^{-1}$ given by $$R_{m,M,L}\;=\; \bigtimes_{i=1}^d[m_i,m_i+LM^{-1}]\;\subset \;\R^d.$$
We can bound for any $M$
\begin{align}
\label{eq:upper_riemann1}
&\frac{1}{t^{D-d} \mathrm{Vol}(B_{D-d})} \sum_{n\in \Lambda_{L,t}\cap \LL_{\bf v}} \langle n| \pi(a)|n \rangle \\ = &\frac{1}{t^{D-d} \mathrm{Vol}(B_{D-d})} \sum_{m\in [0,M)\cap \Z^d} \sum_{n\in \Lambda_{m,M,L,t}\cap \LL_{\bf v}} \langle n| \pi(a)|n \rangle \nonumber \\
\label{eq:upper_riemann2}
\leq &\sum_{m\in [0,M)\cap \Z^d}\frac{\#(\Lambda_{m,M,L,t}\cap \Z^D) }{t^{D-d} \mathrm{Vol}(B_{D-d})} \sup_{x\in R_{m,M,L}} \tilde{f}(x)
\end{align}
with $\#(\Lambda_{m,M,L,t} \Z^D)= \#(\Lambda_{m,M,L,t} \LL_{\bf v})$ counting the number of lattice points.

By Theorem~\ref{th:lattice point count} one has for fixed $M$ 
\begin{align*}
\lim_{t\to \infty} \frac{1}{t^{D-d} \mathrm{Vol}(B_{D-d})} \#(\Lambda_{m,M,L,t}\cap \Z^D) &= \lim_{t\to \infty} \frac{1}{t^{D-d} \mathrm{Vol}(B_{D-d})} \#(\Lambda_{m,M,L,t}\cap \Z^D)\\
&= \mathrm{Vol}_d(A^{-1}_{\bf v}(R_{m,M,L})\cap E_\triangleleft)\\
&= (L M^{-1})^{d}\mathrm{Vol}( A^{-1}_{\bf v}([0,1]^d)\cap E_\triangleleft)
\end{align*}
with the $d$-dimensional volume of $A^{-1}_{\bf v}(R_{m,M,L})\cap E_\triangleleft$ which can be computed as
$$\mathrm{Vol}_d(A^{-1}_{\bf v}([0,1]^d)\cap E_\triangleleft)=\lvert\det(A_{\bf v}\rvert_{E_\triangleleft})^{-1}\rvert=\frac{1}{\sqrt{\det(A_{\bf v}A_{\bf v}^t)}}= \frac{1}{\mathrm{Vol}(E_\triangleleft/\langle v_1,...,v_d\rangle)}$$
where we used that the Gram determinant computes the covolume of the lattice spanned by $v_1,...,v_d$. In conclusion, the right-hand side of \eqref{eq:upper_riemann2} converges for $t\to \infty$ to an upper Riemann sum for an equidistant partition of the square $[0,L]^d$.

Likewise we can bound \eqref{eq:upper_riemann1} from below by replacing the supremum in  \eqref{eq:upper_riemann2} by an infimum, hence
\begin{align*}
\sum_{m\in [0,M)\cap \Z^d} (LM^{-1})^{d} \inf_{x\in R_{m,M,L}}\tilde{f}(x) &\leq \lim_{t\to \infty} \frac{\mathrm{Vol}(E_\triangleleft/\langle v_1,...,v_d\rangle)}{t^{D-d} \mathrm{Vol}(B_{D-d})} \sum_{n\in \Lambda_{L,t}\cap \LL_{\bf v}} \langle n| \pi(a)|n \rangle\\
&\leq \sum_{m\in [0,M)\cap \Z^d} (LM^{-1})^{d} \sup_{x\in R_{m,M,L}} \tilde{f}(x). \end{align*}
For $M\to \infty$ both the upper and lower bound converge to the Riemann integral of $\tilde{f}$.
\end{proof}

\begin{corollary}
Assume the conditions of Proposition~\ref{prop:riemann} or that $\X_{\bf v}$ is discrete. Then there exists a constant $C$ such that
$$\Tt(a)\;=\; C \hat{\Tt}(a)$$
for all $a\in \mathfrak{I}$ which are $\Tt$-traceclass.
\end{corollary}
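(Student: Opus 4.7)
The plan is to exploit the factorization observation made immediately before the Corollary: both $\Tt$ and $\hat{\Tt}$ annihilate the off-diagonal matrix elements in the canonical representation $\pi$ on $\ell^2(\LL_{\bf v})$ and hence factor through the conditional expectation $E\colon \mathfrak{I}\to C_0(\mathcal{C}_{\bf v})$. Standard properties of $E$ relative to the faithful, lower semi-continuous trace $\Tt$ (Proposition~\ref{prop: traces}) ensure that if $a$ is $\Tt$-trace class, then $E(a)\in C_0(\mathcal{C}_{\bf v})\cap L^1(\mu_I)$, so it suffices to verify $\Tt(f)=C\,\hat{\Tt}(f)$ for $f\in C_0(\mathcal{C}_{\bf v})$ with $\tilde{f}(x):=f(\LL^\emptyset_{{\bf v},x})$ in $L^1$. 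The change of variables provided by the almost-injective, proper surjection $\Gamma$ of Lemma~\ref{coro: surjective}, together with Remark~\ref{rem: basis of measure}, rewrites
\[
\Tt(f)\;=\;\int_{\X_{\bf v}}\tilde{f}(x)\,d\nu_I(x),
\]
where $\nu_I$ is the normalized Lebesgue measure on $\R^d_+$ in the CI case, or the weighted counting measure on the lattice $\X_{\bf v}$ in the discrete case.

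In the CI case, Proposition~\ref{prop:riemann} delivers the inner $t$-limit for each fixed $L$:
\[
\lim_{t\to\infty}\frac{1}{t^{D-d}\,\mathrm{Vol}(B_{D-d})}\sum_{n\in \Lambda_{L,t}\cap \LL_{\bf v}}\tilde{f}(A_{\bf v}n)\;=\;\frac{1}{\mathrm{Vol}(E_\triangleleft/\langle v_1,\dots,v_d\rangle)}\int_{[0,L]^d}\tilde{f}(x)\,d\nu(x).
\]
Taking $L\to\infty$ and invoking dominated convergence (justified by $\tilde{f}\in L^1(\R^d_+)$) yields $\hat{\Tt}(f)=C^{-1}\Tt(f)$ with $C=\mathrm{Vol}(E_\triangleleft/\langle v_1,\dots,v_d\rangle)$. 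The discrete case admits a more elementary treatment: since $\LL_{\bf v}$ is finitely generated, $K:=\ker(A_{\bf v})\cap \Z^D$ is a rank-$(D-d)$ sublattice of $E_\triangledown$, and the fibers of $A_{\bf v}\colon\LL_{\bf v}\to\X_{\bf v}$ are cosets of $K$. Accordingly the inner sum decomposes as
\[
\sum_{n\in \Lambda_{L,t}\cap \LL_{\bf v}}\tilde{f}(A_{\bf v}n)\;=\;\sum_{x\in \X_{\bf v}\cap [0,L]^d}\tilde{f}(x)\cdot \#\bigl((K+n_x)\cap W_t\bigr),
\]
and a classical Minkowski-type lattice-count gives $\#(K\cap W_t)\sim t^{D-d}\mathrm{Vol}(B_{D-d})/\mathrm{Vol}(E_\triangledown/K)$ as $t\to\infty$. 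Combining this asymptotic with the covolume factor $\mathrm{Vol}(\R^I/\tilde{\X}_{\bf v})$ from Remark~\ref{rem: basis of measure} reproduces the same geometric constant $C$.

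The step requiring the most care is the passage to the $L\to\infty$ limit without compact-support assumptions on $\tilde{f}$: Proposition~\ref{prop:riemann} is essentially a Riemann-sum argument on $[0,L]^d$, and one must verify that the tail contributions $\sum_{n\in \Lambda_{L',t}\setminus \Lambda_{L,t}}|\tilde{f}(A_{\bf v}n)|$ are controlled uniformly in $t$ as $L'>L\to\infty$. This is handled by approximating $f$ by a monotone sequence of compactly supported positive functions, using the uniform error bound of Theorem~\ref{th:lattice point count} together with the lower semi-continuity of both $\Tt$ and $\hat{\Tt}$ on positives; the same cutoff strategy covers the discrete case. Once the identity is established on $C_0(\mathcal{C}_{\bf v})\cap L^1(\mu_I)$, the factorization through $E$ extends it to all $\Tt$-trace-class elements of $\mathfrak{I}$ without further work.
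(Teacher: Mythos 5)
Your proposal is correct and follows essentially the same route as the paper: reduce to the commutative subalgebra via the conditional expectation, invoke Proposition~\ref{prop:riemann} for the completely irrational case, and in the discrete case decompose the sum over the fibers $A_{\bf v}^{-1}(\{x\})\cap\Z^D$ (cosets of the kernel lattice) and count lattice points in the window $W_t$. The only difference is that you spell out the $L\to\infty$ tail control via cutoff and lower semi-continuity, a step the paper dismisses as an ``obvious consequence'' of Proposition~\ref{prop:riemann}; this is a welcome refinement rather than a different argument.
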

\begin{proof}
It only remains to consider the rational case since the other one is an obvious consequence of Proposition~\ref{prop:riemann}. In the rational case, $\Tt$ is induced by the counting measure on the discrete topological space $\mathcal{C}_{\bf v}$, \ie, concretely
$$\Tt(f)= {{\rm Vol}\big(\R^{d}/\tilde{\X}_{{\bf v}}\big)}\sum_{x\in \X_{\bf v}} f(\LL_{{\bf v},x}^\emptyset), \qquad \forall f\in C_c(\mathcal{C}_{\bf v})$$
with the normalization constant as in Remark~\ref{rem: basis of measure}.
Note that $V_{L}\cap \Z^D$ decomposes into a disjoint union over the fibers $$V_{x}:=A_{\bf v}^{-1}(\{x\})\cap \Z^D,$$ which are $(D-d)$-dimensional sublattices of $\Z^D$ (the lattices are translates of the kernel of the homomorphism $A_{\bf v}\rvert_{\Z^D}$ into a group isomorphic to $\Z^d$, hence it is a subgroup of $\Z^D$ of rank $D-d$).

Since any function $C_0(\mathcal{C}_{\bf v})$ decomposes into a sum of functions supported in a single $\LL_{{\bf v},x}^\emptyset$ we can reduce the computation of $\hat{\Tt}$ to the limit
$$\lim_{t\to \infty} \frac{1}{t^{D-d} \mathrm{Vol}(B_{D-d})} \sum_{n\in V_x\cap W_t} \langle n| \pi(f)|n \rangle=\lim_{t\to \infty} \frac{1}{t^{D-d} \mathrm{Vol}(B_{D-d})} \sum_{n\in V_x\cap W_t}  f(\LL_{{\bf v},x}^\emptyset)$$
where we used that $A_{\bf v}n = x$ for all $n$ in the sum.

Clearly,
$$\sum_{n\in V_x\cap W_t} f(\LL_{{\bf v},x}^\emptyset)= (\#V_x\cap W_t) f(\LL_{{\bf v},x}^\emptyset)$$
and we merely need to count those lattice points.  It is not difficult to show that the limit
$$\lim_{t\to \infty} \frac{1}{t^{D-d}}(\#V_x\cap W_t) =\frac{\mathrm{Vol}(B_{D-d})}{\mathrm{Vol}(E_{\triangledown}/V_x)}$$
exists and is equal to volume of the $(D-d)$-dimensional unit ball divided by the covolume of the lattice $V_x$. Note that this constant does not depend on $x$ since any two lattices $V_x$ and $V_{x'}$ are translates of each other and therefore have the same covolume. We conclude
$$
\hat{\Tt}(\pi(f))=\sum_{x\in \X_{\bf v}} \frac{1}{\mathrm{Vol}(E_{\triangledown}/V_x)} f(\LL_{{\bf v},x}^\emptyset) =  \frac{1}{\mathrm{Vol}\big(\R^{d}/\tilde{\X}_{{\bf v}}\big)\mathrm{Vol}(E_{\triangledown}/V_0)} \Tt(f).
$$
\end{proof}



\appendix

\section{}
In this appendix, we summarize the main consequences of the RCI property on ${\bf v}$ that we use in this work. Accordingly, throughout this section, we assume that ${\bf v}$ satisfies the RCI property.  We start with the following convergence criterion.

\begin{lemma}\label{Lemma J}
    Assume the RCI property on ${\bf v}$. Let $\overline{\R}:=\R\cup\{+\infty\}$ and $\{x(j)\}_{j\in \N}$ a sequence in $\X_{\bf v}$ with that converges to $x\in \overline{\R}^d.$ Denote by 
    \begin{enumerate}
    \item[(i)] $J_+\subset \{1,\dots,d\}$ the set of $k$ for which $\{x_k(n)\}_{n\in \N}$ is non-increasing and converges to a finite value,
    \item[(ii)] $J_-\subset \{1,\dots,d\}$ the set of $k$ such that $\{x_k(j)\}_{j\in \N}$ is strictly increasing and converges to a finite value,
    \item[(iii)] $J_\infty\subset \{1,\dots,d\}$ the set of $k$ such that $\{x_k(j)\}_{j\in \N}$ converges to $+\infty$.
    \end{enumerate} 
    If $J_+\cup J_-\cup J_\infty = \{1,\dots,d\}$ then $\LL_{{\bf v},x(j)}^\emptyset\to \LL_{{\bf v},x}^{J_+,J_-}$ in the Fell topology, where $ \LL_{{\bf v},x}^{J_+,J_-}$ is the semigroup
    consisting of all points $n\in\mathbb{Z}^D$ such that
\begin{align*}
v_k \cdot n + x_k &\;\geq\; 0, \qquad \forall k \in J_+,\\
v_k \cdot n + x_k &\;> \;0, \qquad \forall k \in J_-.
\end{align*}
\end{lemma}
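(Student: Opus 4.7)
The plan is to reduce the Fell convergence to a pointwise check using the criterion recorded just after \eqref{eq: fell metric}: in $\mathscr{C}(\Z^D)$, a sequence $(\Ss_j)_{j\in \N}$ converges to $\Ss$ if and only if every $n\in \Z^D$ is eventually contained in all of the $\Ss_j$ or eventually contained in none of them (and its membership in $\Ss$ is determined by which alternative occurs). So I would fix an arbitrary $n\in \Z^D$ and decide, componentwise for each $k\in \{1,\dots,d\}$, whether the defining inequality $v_k\cdot n + x_k(j)\geq 0$ of $\LL_{{\bf v},x(j)}^\emptyset$ eventually holds or eventually fails along the sequence.

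The three cases track the three subsets $J_+, J_-, J_\infty$. For $k\in J_+$, the real sequence $v_k\cdot n + x_k(j)$ is non-increasing with finite limit $v_k\cdot n + x_k$, so the non-strict condition holds for all large $j$ if and only if $v_k\cdot n + x_k\geq 0$. For $k\in J_-$, strict monotonicity gives $x_k(j) < x_k$ for every $j$, hence $v_k\cdot n + x_k(j) < v_k\cdot n + x_k$, and combining this with the convergence shows that $v_k\cdot n + x_k(j)\geq 0$ eventually holds if and only if $v_k\cdot n + x_k > 0$ strictly. For $k\in J_\infty$, divergence $x_k(j)\to +\infty$ forces $v_k\cdot n + x_k(j)\geq 0$ for all sufficiently large $j$ irrespective of $n$, so this coordinate imposes no constraint on the limit set.

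Intersecting the three componentwise alternatives, $n$ lies in $\LL_{{\bf v},x(j)}^\emptyset$ for all sufficiently large $j$ precisely when $v_k\cdot n + x_k\geq 0$ for every $k\in J_+$ and $v_k\cdot n + x_k > 0$ for every $k\in J_-$, i.e. exactly when $n\in \LL_{{\bf v},x}^{J_+,J_-}$. The pointwise Fell criterion is therefore satisfied and the convergence follows. The only delicate point I anticipate is the sub-case $k\in J_-$ with $v_k\cdot n + x_k = 0$: here it is essential to use strict monotonicity of $x_k(j)$ rather than mere convergence, since otherwise one could not conclude that $v_k\cdot n + x_k(j)$ stays strictly negative, and such boundary points would be falsely included in the limit. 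No RCI hypothesis is needed in the argument itself; that assumption enters only in the places where the lemma is subsequently invoked, to guarantee the existence of suitable sequences $x(j)$.
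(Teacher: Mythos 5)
Your proof is correct and follows essentially the same route as the paper's: the paper's own argument is a one-line appeal to the pointwise Fell criterion together with the fact that the $x_k$ are approximated monotonically from above (for $J_+$) or from below (for $J_-$), which is exactly what you spell out case by case. Your added observations --- that strict monotonicity is what excludes boundary points with $v_k\cdot n + x_k = 0$ for $k\in J_-$, and that the RCI hypothesis is not used in the argument itself --- are both accurate and consistent with the paper.
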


\begin{proof}
Recall that $\LL_{{\bf v},x
(j)}^\emptyset$ is the subset of all $n\in \Z^D$ defined by the inequalities
$$v_k \cdot n + x_k(j) \geq 0, \qquad k=1,\dots,d.$$
 Since $J_+\cup J_-\cup J_\infty = \{1,\dots,d\}$ and the real numbers $x_k$ are approximated from below, respectively from above, the sets defined by the given inequalities clearly converge in the Fell topology.
\end{proof}

Now, let us assume that $D=d$ and ${\bf v}$ is not rational. This means that $\X_{{\bf v}_I}=\R^{d-|I|}_+$ for any properly contained subset $I$ of $\{1,\dots ,d\}$ while $\X_{\bf v}$ remains discrete. As a consequence, all the vectors ${\bf v}=\{ v_1,\dots,v_d\}$ have components linearly independents over $\Q.$ For any fixed $i\in \{1,\dots,d\}$, and $M,\epsilon>0$ define the set
$$\Lambda_{M,\epsilon,i}\;=\;\big\{n\in \Z^D: \epsilon> v_k\cdot n> 0, k\neq i \;\; \wedge \;\;  v_i\cdot n > M\big\}$$
An important property is that this set is non-empty:
\begin{proposition}\label{lemma_special_CI_case}
    Under the above assumption, the set $\Lambda_{M,\epsilon ,i}$ is non-empty for all $i\in \{1,\dots,d\}$ and $M,\epsilon >0.$
\end{proposition}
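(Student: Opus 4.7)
The plan is to combine two features of the setup: the density of a projected lattice in $\R^{d-1}$, coming from the CI hypothesis on sub-tuples, and the discreteness of the full-rank lattice $A_{\bf v}(\Z^d)$ in $\R^d$, coming from $D=d$ together with the linear independence of $\bf v$.

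First I would unpack the hypothesis. Since $\bf v$ is not rational and ${\bf v}\setminus\{i\}$ is a proper non-empty sub-tuple, the paragraph preceding the proposition forces ${\bf v}\setminus\{i\}$ to be CI, so that $A_{{\bf v}\setminus i}(\LL_{{\bf v}\setminus i})$ is dense in $\R^{d-1}_+$. The larger additive group $A_{{\bf v}\setminus i}(\Z^d)$ is closed under negation and contains this dense subset of a spanning cone, so it is dense in all of $\R^{d-1}$. On the other hand, since the $v_k$ are linearly independent and $D=d$, $A_{\bf v}$ is invertible and $A_{\bf v}(\Z^d)$ is a discrete rank-$d$ lattice in $\R^d$.

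Second, for small $r>0$ I would study the symmetric auxiliary set
\[
N_r \;:=\; \bigl\{n \in \Z^d \,:\, |v_k \cdot n| < r \text{ for every } k \neq i\bigr\},
\]
which is infinite by the density just established. The crucial observation is that $\{v_i \cdot n \,:\, n \in N_r\}$ must be unbounded from above. Indeed, if the values were bounded in absolute value by some $R$, then $A_{\bf v}(N_r)$ would be an infinite subset of $A_{\bf v}(\Z^d) \cap ((-r,r)^{d-1} \times [-R,R])$, contradicting the discreteness of the rank-$d$ lattice $A_{\bf v}(\Z^d)$. Since $N_r$ is invariant under $n \mapsto -n$, the set $\{v_i \cdot n \,:\, n \in N_r\}$ is symmetric about $0$, hence unbounded above as well as below.

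Third, I would assemble the required element by superposition. Fix $r < \epsilon/4$, pick $n_0 \in \Z^d$ with $v_k \cdot n_0 \in (\epsilon/2 - r,\, \epsilon/2 + r)$ for each $k \neq i$ (possible by density applied to a small box centred at $(\epsilon/2,\dots,\epsilon/2)$), and pick $n^* \in N_r$ with $v_i \cdot n^* > M - v_i \cdot n_0$ (possible by the unboundedness from above). Then $n := n_0 + n^*$ satisfies $v_k \cdot n \in (\epsilon/2 - 2r,\, \epsilon/2 + 2r) \subset (0,\epsilon)$ for all $k \neq i$ and $v_i \cdot n > M$, showing $n \in \Lambda_{M,\epsilon,i}$.

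The main obstacle is the two-sided unboundedness argument in step two. Attempting to work directly inside the one-sided cone $(0,\epsilon)^{d-1}$ yields infinitely many lattice points with $v_i \cdot n$ unbounded in absolute value, but one loses control over the sign; in principle the $v_i$-values might all tend to $-\infty$. Passing through the symmetric slab $N_r$, then translating by a centred vector $n_0$, is precisely the device that converts a symmetric (hence automatically two-sided) unboundedness statement into the one-sided conclusion inside the prescribed cone.
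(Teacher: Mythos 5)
Your proof is correct and follows essentially the same route as the paper: both arguments combine the density of the codimension-one projection $A_{{\bf v}\setminus i}(\Z^d)$ (giving infinitely many lattice points in a thin slab) with the discreteness of the full-rank lattice $A_{\bf v}(\Z^d)$ when $D=d$ (forcing $v_i\cdot n$ to be unbounded on that slab), and then fix the sign of $v_i\cdot n$ by an additive correction. The only difference is packaging: the paper works in the one-sided slab $0<v_k\cdot n<\epsilon$ and subtracts a far-out point in the bad-sign case, whereas you use the symmetric slab $|v_k\cdot n|<r$ to get two-sided unboundedness for free and then translate by a centering vector $n_0$.
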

\begin{proof}
Due to $\X_{{\bf v}\setminus i}=\R^{d-1}_+$, each of the boxes $$B_{M,\epsilon}\;=\;\big\{n\in \Z^D\,|\,  \epsilon> v_k\cdot n> 0, k\neq i \;\; \wedge \;\; \lvert v_i\cdot n\rvert \leq M\big\},$$
and the slabs
$$S_{\epsilon}\;=\;\big\{n\in \Z^D:  \epsilon> v_k\cdot n> 0, k\neq i\big\}$$ contains infinitely many points. Choose some $m\in S_\epsilon$ and let $R>0$ be such that $m\in B_{R,\epsilon}$, e.g. $R=\lvert v_i\cdot n\rvert$. We set $$\tilde{\epsilon}\;=\;\frac{1}{2}\min_{k\neq i} (v_k\cdot m)\;>\;0.$$
Let $\tilde{R}>0$ be so large that $\tilde{R}-R > M$. There exists an element $\tilde{m} \in S_{\tilde{\epsilon}}\setminus B_{\tilde{R},\tilde{\epsilon}}$ since $S_{\tilde{\epsilon}}$ is infinite but $B_{\tilde{R},\tilde{\epsilon}}$ finite. 

There are now two possible cases: In the first case, $v_i\cdot \tilde{m}$ is positive, which means $v_i \cdot \tilde{m}>\tilde{R}>M$ and thus $\tilde{m} \in \Lambda_{M,\epsilon,i}$, showing that the set is non-empty.  In the second case, $v_i\cdot \tilde{m}< -\tilde{R}$ is negative. Consider instead $n=m-\tilde{m}$. One has $$0\;<\;\tilde{\epsilon} - \frac{1}{2}\tilde{\epsilon}\;<\;v_k\cdot n \;<\; \epsilon$$ for each $k\neq i$ as well as
$$v_i \cdot n \;= \;v_i \cdot m - v_i \cdot\tilde{m} \;>\; -R + \tilde{R} > M,$$
showing that $n\in \Lambda_{M,\epsilon,i}$. 
\end{proof}
\begin{lemma}\label{lemma convergence}
   Let the RCI property be valid for ${\bf v}.$  For any $I\subset \{1,\dots,d\}$ and $x\in \X_{{\bf v}_I}$ there exists a sequence $\{n(j)\}_{j\in \N}$ in $\LL_{\bf v}$ such that $A_{{\bf v}_I}n(j)\to x$, $j\mapsto v_k\cdot n(j)$ is non-increasing for $k\in I$ and $v_k\cdot n(j)\to +\infty$ for all $k\notin I.$
\end{lemma}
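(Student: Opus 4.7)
The strategy is to build $n(j)=m(j)+q(j)$ as the sum of a base sequence $m(j)\in\LL_{{\bf v}_I}$ that controls the $I$-coordinates and a correction $q(j)\in\Z^D$ that drives the $I^c$-coordinates to $+\infty$ while contributing only $o(1)$ to the $I$-coordinates. The two extreme cases dispose of themselves: for $I=\emptyset$ one takes $n(j)=jn_0$ with $n_0\in\LL_{\bf v}$ in the interior of the open cone $\bigcap_k\{z\in\R^D:v_k\cdot z>0\}$, which is non-empty by linear independence of $\bf v$; for $I=\{1,\dots,d\}$ the conditions on $I^c$ are vacuous and a non-increasing selection of an approximating sequence of $x\in\X_{\bf v}$ is either trivial (constant, R case) or direct (density, CI case).

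For the main case $\emptyset\neq I\subsetneq\{1,\dots,d\}$, the RCI dichotomy applied to ${\bf v}_I$ provides the base sequence: if ${\bf v}_I$ is R then $x\in A_{{\bf v}_I}(\LL_{{\bf v}_I})$ and one takes $m(j)=m_0$ constant, while if ${\bf v}_I$ is CI then density of $A_{{\bf v}_I}(\LL_{{\bf v}_I})$ in $\R_+^{|I|}$ allows $m(j)\in\LL_{{\bf v}_I}$ with $v_k\cdot m(j)\searrow x_k$ for every $k\in I$. The heart of the argument is then the following key existence claim: \emph{for every $\epsilon,M>0$ there exists $q\in\Z^D$ with $|v_j\cdot q|<\epsilon$ for $j\in I$ and $v_k\cdot q>M$ for every $k\in I^c$}. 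Granted this, I would pick $q(j)$ realizing the claim with $\epsilon_j\to 0$ fast enough that the non-increasing behavior of $v_k\cdot(m(j)+q(j))$ for $k\in I$ survives, and with $M_j\to\infty$; after setting $n(j)=m(j)+q(j)$ and passing to a tail to enforce $n(j)\in\LL_{\bf v}$, all three conclusions follow.

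To prove the key claim I would reduce to an index-by-index version via RCI and then sum. For each $k\in I^c$ the proper sub-family ${\bf v}_{\{1,\dots,d\}\setminus\{k\}}$ is R or CI by RCI. In the CI case one repeats the slab-versus-box argument of Proposition~\ref{lemma_special_CI_case} (adapting for $D>d$ by working in the quotient by $\ker A_{\bf v}\cap\Z^D$) to obtain $q^{(k)}\in\Z^D$ with $v_k\cdot q^{(k)}$ arbitrarily large and $|v_j\cdot q^{(k)}|$ arbitrarily small for $j\neq k$. In the R case the discrete structure forces $|v_j\cdot q^{(k)}|$ small to mean $v_j\cdot q^{(k)}=0$, so $q^{(k)}$ is sought in the sublattice $\ker A_{{\bf v}\setminus k}\cap\Z^D$, which has rank $D-d+1$ and on which $v_k\cdot(\,\cdot\,)$ is a non-zero and hence unbounded linear functional by linear independence of $\bf v$. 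Choosing $q^{(k)}$ with $|v_j\cdot q^{(k)}|<\epsilon/|I^c|$ for $j\neq k$ and $v_k\cdot q^{(k)}>M+|I^c|\epsilon$, the sum $q=\sum_{k\in I^c}q^{(k)}$ satisfies the key claim by the triangle inequality.

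The principal obstacle is precisely this key claim: RCI supplies information only about proper sub-families of $\bf v$, so one cannot invoke density or discreteness on the full family at once. The index-by-index decomposition is what lets the argument bootstrap, handing one element of $I^c$ at a time to the $(d-1)$-element proper sub-family where RCI directly applies.
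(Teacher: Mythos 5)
Your overall architecture---a base sequence controlling the $I$-components plus, for each $k\in I^c$, a correction that makes the $k$-th component large while perturbing the remaining ones by at most $\epsilon$, with the correction obtained from a rational/irrational dichotomy (kernel sublattice versus density/slab argument)---is essentially the paper's proof: there one reduces to $x=0$, then to $|I^c|=1$, and sums the resulting sequences, using $\ker A_{{\bf v}\setminus i}|_{\Z^D}$ in the rational case and Proposition~\ref{lemma_special_CI_case} in the discrete irrational case. Your R-branch is fine (and correctly identifies why the kernel lattice has rank $D-d+1$). The genuine gap is in your CI-branch of the key claim. The slab-versus-box argument of Proposition~\ref{lemma_special_CI_case} hinges on the boxes $B_{\tilde R,\tilde\epsilon}$ being \emph{finite}, which is exactly why the paper invokes it only under the hypothesis $d=D$: there $A_{\bf v}$ is injective, so bounded sets have finite preimage in $\Z^D$. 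For $D>d$ the preimage of a bounded set is a slab containing infinitely many lattice points, and passing to the quotient by $\ker A_{\bf v}\cap\Z^D$ does not restore finiteness: that sublattice has rank $D-d$ only when $\ker A_{\bf v}$ is a rational subspace, whereas in the generic irrational situation (say $D=3$, $d=2$, entries $\Q$-independent) it is $\{0\}$, the quotient is all of $\Z^D$, and the boxes remain infinite. So the adaptation you describe does not produce $q^{(k)}$, and it fails precisely in the regime ($D>d$, irrational) where you need it. The paper sidesteps this by casing on the \emph{full} family rather than on ${\bf v}\setminus k$: when ${\bf v}$ is CI, density of $A_{\bf v}(\LL_{\bf v})$ in $\R^d_+$ directly yields points whose components lie in $(0,\epsilon)$ for $j\neq k$ and in $(M,M+1)$ for $j=k$, with no slab-versus-box argument; Proposition~\ref{lemma_special_CI_case} is reserved for $d=D$. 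You should restructure your CI-branch along these lines.

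Two smaller points. First, demanding only $|v_j\cdot q|<\epsilon$ for $j\in I$ permits negative perturbations; since $x_j=0$ is allowed (as $x\in\X_{{\bf v}_I}\subset\R^{|I|}_+$), the sum $v_j\cdot(m(j)+q(j))$ can then fail to be nonnegative no matter how far along the tail you pass, so you need the one-sided bound $0\le v_j\cdot q<\epsilon$, which both the density argument and Proposition~\ref{lemma_special_CI_case} actually deliver. Second, when ${\bf v}_I$ is R the base sequence is constant, so the non-increasing property must come entirely from the corrections; ``$\epsilon_j\to 0$ fast enough'' should be made an explicit inductive choice, e.g.\ $\epsilon_j<\min_{k\in I} v_k\cdot q(j-1)$. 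Both are easily repaired.
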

\begin{proof}
It is enough to prove existence of a sequence $\{n(j)\}_{j\in \N}$ in $\LL_{\bf v}$ such that $A_{{\bf v}_I}n(j)\to 0$ in $\R^{I}$, $j\mapsto v_k\cdot n(j)$ is non-increasing for $k\in I$ and $v_k\cdot n(j)\to +\infty$ for all $k\notin I.$ For general $x$ one can then add to $n(j)$ a suitable sequence $n'(j)$ such that $A_{{\bf v}_I}n'(j)\to x$ and $j\mapsto v_k\cdot n'(j)$ is non-increasing for $k\in I$ (and such a sequence $n'(j)$ always exists under RCI). It further suffices to prove the result for sets $I = \{1,\dots,d\}\setminus \{i\}$ for each $i \in \{1, \dots, d\}$, since the general case then follows by considering the sum of sequences associated to each $i\in \{1,...,d\}\setminus I$. 

If ${\bf v}$ is CI the result follows directly from the density of $A_{\bf v}(\LL_{\bf v})$ in $\mathcal{X}_{\bf v}$. If  $d=D$ and ${\bf v}$ is not rational, the existence follows by Proposition ~\ref{lemma_special_CI_case}.

The remaining case is the rational case, where we notice that ${\rm dim}_{\Z}({\rm Ker} A_{{\bf v} \setminus i}|_{\Z^D}) = \dim_{\Z}({\rm Ker}A_{{\bf v}}|_{\Z^D}) + 1$ since those are lattices of maximal rank. Therefore, there exists a non-zero vector $n \in \Z^D$ such that $A_{{\bf v} \setminus i} n = 0$ and $v_i\cdot n> 0$. Thus, the sequence $n(j) := jn$ for $j \in \N$ satisfies the required properties. 


\end{proof}


\begin{thebibliography} {CTSR}
\frenchspacing 



\bibitem{Alldridge}
A.~Alldridge, T.~R.~Johansen, {\sl Spectrum and Analytical Indices of the C*-Algebra of Wiener-Hopf Operators}, Adv. Math. {\bf 218} (1), 163-201 (2008).

\bibitem{Bel1} J. Bellissard, {\sl K-theory of $C^*$-algebras in solid state physics. Statistical mechanics and field theory: mathematical aspects}. Springer, Berlin, Heidelberg, 1986, 99–156
\bibitem{Bel} J. Bellissard, A. van Elst, H. Schulz-Baldes, {\sl The non-commutative geometry of the quantum Hall effect}. J. Math. Phys. {\bf 35}, 5373–5451 (1994)


\bibitem{BenalcazarScience2017} W. A. Benalcazar, B. A. Bernevig, T. L. Hughes, {\sl Quantized electric multipole insulators}, Science {\bf 357}, 61–66 (2017). 


\bibitem{BjorklundHartnick} M.~Björklund, T.~Hartnick, {\sl Hyperuniformity and non-hyperuniformity of quasicrystals}, Math. Ann. {\bf 389}, 365–426 (2024).

\bibitem{BM} C. Bourne, B. Mesland : {\sl Index theory and topological phases of aperiodic lattices}. Ann. Henri Poincaré
{\bf 20}, 1969-2038 (2019).






\bibitem{Con} A. Connes,  {\em Noncommutative Geometry}. Academic Press, San Diego, 1994.

\bibitem{CEL} J. Cuntz, S. Echterhoff, X. Li, {\sl On the K-theory of crossed products by automorphic semigroup actions}, Q. J. Math. {\bf 64} (2013), 747-784.

\bibitem{CELYBook} J. Cuntz, S. Echterhoff, X. Li, G. Yu, {\sl K-theory for group $C^\ast$-algebras and semigroup $C^\ast$-algebras}. Birkhäuser Cham, Berlin, 2017.


 \bibitem{NGP} 
G. De Nittis, J. Gomez, D.~Polo~Ojito, {\sl On the $K$-theory of magnetic algebras: Iwatsuka case}. {\tt arXiv:2410.04531}

\bibitem{Dou} R. G. Douglas, R. Howe, {\sl On the $C^*$-algebra of Toeplitz operators on the quarter-plane.} Trans. Amer. Math. Soc., {\bf158}, no.~1, 203-217, (1971).

\bibitem{Exel} R. Exel, {\em Partial dynamical systems, Fell bundles and Applications,}  volume 224 of Mathematical Surveys
and Monographs. American Mathematical Soc., Providence, RI, 2017.

\bibitem{FED} H. Federer, {\sl Geometric Measure Theory, Springer}. Springer Berlin, Heidelberg, 1969

\bibitem{Fell} J. Fell, {\sl A Hausdorff topology on the closed subsets of a locally compact non-Hausdorff
space}, Proc. Amer. Math. Soc. {\bf 13}, 472-476, (1962). 







 \bibitem{JI} R. Ji, {\sl  On the Smoothed Toeplitz Extensions and K-Theory.} Proceedings of the American Mathematical Society. 109,  {\bf 1} 31-38 (1990) 
 \bibitem{JK} R. Ji, J, Kaminker, {\sl The $K$-theory of Toeplitz extensions.} Journal of Operator Theory, 19 {\bf 2}, 347-354 (1988)
  \bibitem{RX} R. Ji,  J. Xia, {\sl On the classification of commutator ideals} J. Funct. Anal. {\bf78}, 2, 208-232 (1988)
 \bibitem{Jia} X. Jiang, {\sl On Fredholm operators in quarter-plane Toeplitz algebras}. Proc. Amer. Math. Soc. 123 {\bf (9)} 2823–2830 (1995)



\bibitem{Kel1} J. Kellendonk, {\sl Noncommutative geometry of tilings and gap labeling.} Rev. Math. Phys. {\bf 7}, 1133-1180
(1995)
\bibitem{LAC} M. Laca, {\sl From endomorphisms to automorphisms and back: dilations and full corners}, J.
London Math. Soc. {\bf 61} (2000), 893-904.






\bibitem{Li} X. Li, {\sl Semigroup $C^*$-algebras,} In: Operator Algebras and Applications: The Abel
Symposium 2015. Springer. 2016, pp. 191-202.


\bibitem{KoivusaloLagace} H.~Koivusalo, J.~Lagacé, {\sl Sharp density discrepancy for cut and project sets: An approach via lattice point counting}, with an appendix by M.~Björklund and T.~Hartnick, arXiv preprint arXiv:2401.01803.

\bibitem{MP2024} B. Mesland, E. Prodan, {\sl Classifying the dynamics of architected materials by groupoid methods}, Journal of 
Geometry and Physics {\bf 196}, 105059 (2024).

\bibitem{MuhlyRenault}
P.~S.~Muhly, J.~N.~Renault, {\sl C*-Algebras of Multivariable Wiener-Hopf Operators}, Trans. Am. Math. Soc. {\bf 274} (1), 1-44 (1982).

\bibitem{Par1} E. Park, {\sl Index theory and Toeplitz algebras on certain cones in $\Z^2$}. J. Operator Theory, {\bf23}(1), 125-146, (1990)
\bibitem{Par2} E. Park, C. Schochet, {\sl On the K-theory of quarter-plane Toeplitz algebras}. Internat. J. Math. {\bf2}(2), 195–204, (1991)

\bibitem{PatersonBook} A. L. T. Paterson, {\sl Groupoids, inverse semigroups, and their operator algebras}, Birkhauser, Boston, 1999. 



 \bibitem{Pim} M. Pimsner, D. Voiculescu, {\sl Exact sequences for K-groups and Ext-groups of certain crossed products}. J. Oper. Theory {\bf 4}, 93-118 (1980)
 \bibitem{Dani} 
{ D. Polo Ojito,} 
{\sl Interface currents and corner states in magnetic quarter-plane systems}.  Adv. Theor. Math. Phys. {\bf 27} ( 6) 1813-1855 (2023).
\bibitem{DET1} D.~Polo~Ojito, E.~Prodan, T.~Stoiber, {\sl 
$C^*$-framework for higher order bulk-boundary correspondence}, {\tt  arXiv:2406.04226v1} (2024).
\bibitem{DET2} D.~Polo~Ojito, E.~Prodan, T.~Stoiber, {\sl 
A space-adiabatic approach for bulk-defect correspondences in lattice models of topological insulators}, {\tt   arXiv:2410.24097v1 } (2024).
 \bibitem{PRO} E. Prodan, H.  Schulz-Baldes, {\em Bulk and Boundary Invariants for Complex Topological Insulators: From K-Theory to Physics}. Mathematical Physics Studies. Springer, Berlin 2016

\bibitem{RenaultBook} J. Renault, {\em A groupoid approach to $C^\ast$-algebras}, (Springer-Verlag, Berlin, 1980).

\bibitem{RS} J. Renault, S. Sundar: {Groupoids associated to Ore semigroups}. J. Operator theory. 73:2 491-514 (2015)

\bibitem{SchindlerSciAdv2018}
F.~Schindler, A.~M.~Cook, M.~G.~Vergniory, Z.~Wang, S.~S.~P.~Parkin, B.~A.~Bernevig, T.~ Neupert, {\sl Higher-order topological insulators}, Sci. Adv. {\bf 4}, eaat0346
(2018).


\bibitem{Tom} H. Schulz-Baldes, T.  Stoiber, {\em Harmonic analysis in operator algebras and its applications to index theory}. Mathematical Physics Studies. Springer, Cham, 2022



\bibitem{Wil} D. P. Williams, {\em A tool kit for groupoid $C^*$-algebras}, (AMS, Providence, 2019)






 \end{thebibliography}
\end{document}